\newcommand{\fullonly}[1]{#1}
\newcommand{\fctonly}[1]{}
\newcommand{\fctsubmissiononly}[1]{}
\newcommand{\sisubmissiononly}[1]{#1}
\newcommand{\nosisubmission}[1]{}
\newcommand{\omittext}[1]{}
\newcommand{\N}{\ensuremath{\mathbb{N}}}
\renewcommand{\Act}[1][A]{\ensuremath{\mathcal{#1}}}
\newcommand{\ACP}{\ensuremath{\mathrm{ACP}}}
\newcommand{\ACPt}{\ensuremath{\ACP_{\silent}}}
\newcommand{\CCS}{\ensuremath{\mathrm{CCS}}}
\newcommand{\CSP}{\ensuremath{\mathrm{CSP}}}
\newcommand{\Chan}[1][C]{\ensuremath{\mathcal{#1}}}
\newcommand{\Data}[1][D]{\ensuremath{\mathcal{#1}}}
\newcommand{\Dataseqs}[1][D]{\ensuremath{\mathcal{#1}^{*}}}
\newcommand{\Datab}{\ensuremath{\Data_{\tpblank}}}
\newcommand{\Databseqs}{\ensuremath{(\Data_{\tpblank})^{*}}}
\newcommand{\IAct}[1][\Chan]{\ensuremath{\mathcal{I}_{#1}}}
\newcommand{\LTS}[1][T]{\ensuremath{\mathalpha{#1}}}
 \newcommand{\LTST}[1][]{\ensuremath{\LTS[T_{#1}]}}
\newcommand{\Name}[1][N]{\ensuremath{\mathcal{#1}}}
\newcommand{\PExp}[1][P]{\ensuremath{\mathalpha{\mathcal{#1}}}}
\newcommand{\RSpec}[1][E]{\ensuremath{\mathalpha{#1}}}
 \newcommand{\RSpecTM}{\ensuremath{\mathalpha{E_{\TM}}}}
 \newcommand{\RSpecTMinf}{\ensuremath{\mathalpha{E_{\TM}^{\infty}}}}
 \newcommand{\RSpecFC}{\ensuremath{\mathalpha{E_{\name[fc]}}}}
 \newcommand{\RSpecT}{\ensuremath{\mathalpha{E_{\name[T]}}}}
 \newcommand{\RSpecTinf}{\ensuremath{\mathalpha{E_{\name[T]}^{\infty}}}}
 \newcommand{\RSpecQ}{\ensuremath{\mathalpha{E_{\name[Q]}}}}
 \newcommand{\RSpecQinf}{\ensuremath{\mathalpha{E_{\name[Q]}^{\infty}}}}
\newcommand{\States}[1][S]{\ensuremath{\mathalpha{\mathcal{#1}}}}
 \newcommand{\StatesS}[1][]{\ensuremath{\States[S]_{#1}}}
\newcommand{\TCP}{\ensuremath{\mathrm{TCP}}}
\newcommand{\TCPt}{\ensuremath{\mathrm{\TCP_{\silent}}}}
\newcommand{\TM}[1][M]{\ensuremath{\mathalpha{\mathcal{#1}}}}
 \newcommand{\TMM}[1][]{\ensuremath{\mathalpha{\TM{_{#1}}}}}
 \newcommand{\URTM}[1][]{\ensuremath{\TM[U]{_{#1}}}}
 \newcommand{\UbRTM}{\ensuremath{\URTM[\bdegbound]}}
  \newcommand{\UbRTMStates}{\ensuremath{\StatesS[{\UbRTM}]}}
  \newcommand{\UbRTMtrel}{\ensuremath{\trel[{\UbRTM}]}}
  \newcommand{\UbRTMinit}{\ensuremath{\init[{\UbRTM}]}}
  \newcommand{\UbRTMfinal}{\ensuremath{\final[{\UbRTM}]}}
\newcommand{\rtmpar}[1][]{\ensuremath{\mathbin{\|_{#1}}}}
\newcommand{\act}[1][a]{\ensuremath{\mathalpha{#1}}}
 \newcommand{\acta}[1][]{\act[a_{#1}]}
 \newcommand{\actb}[1][]{\act[b_{#1}]}
 \newcommand{\actc}[1][]{\act[c_{#1}]}
\newcommand{\actt}[1][a]{\ensuremath{\mathalpha{#1}}}
  \newcommand{\actta}[1][]{\act[a_{#1}]}
\newcommand{\aep}[2]{\left[{#2}\right]_{#1}}
\newcommand{\chan}[1][c]{\ensuremath{\mathalpha{#1}}}
 \newcommand{\chani}{\ensuremath{\chan[i]}}
 \newcommand{\chano}{\ensuremath{\chan[o]}}
 \newcommand{\chanio}{\ensuremath{\{\chani, \chano\}}}
 \newcommand{\chanj}{\ensuremath{\chan[j]}}
 \newcommand{\chank}{\ensuremath{\chan[k]}}
 \newcommand{\chanp}{\ensuremath{\chan[p]}}
 \newcommand{\chanl}{\ensuremath{\chan[l]}}
 \newcommand{\chanr}{\ensuremath{\chan[r]}}
 \newcommand{\chanw}{\ensuremath{\chan[w]}}
 \newcommand{\chanm}{\ensuremath{\chan[m]}}
 \newcommand{\chanrwm}{\ensuremath{\{\chanr, \chanw, \chanm\}}}
 \newcommand{\chanrtm}{\ensuremath{\chan[u]}}
\renewcommand{\comm}[2]{{#1}\mathord{?\kern-0.4em!}{#2}}
\newcommand{\commm}[2]{{#1}\mathord{?\kern-0.3em!}{#2}}
\newcommand{\datum}[1][d]{\ensuremath{\mathalpha{#1}}}
  \newcommand{\datumd}[1][]{\ensuremath{\datum[d_{#1}]}}
  \newcommand{\datume}[1][]{\ensuremath{\datum[e_{#1}]}}
  \newcommand{\datumf}[1][]{\ensuremath{\datum[f_{#1}]}}
\newcommand{\defeq}{\ensuremath{\mathrel{\stackrel{\text{\tiny def}}{=}}}}
\newcommand{\emptystr}{\ensuremath{\mathalpha{\varepsilon}}}
\newcommand{\final}[1][]{\ensuremath{\mathalpha{\downarrow_{#1}}}}
\newcommand{\init}[1][]{\ensuremath{\mathalpha{\uparrow_{#1}}}}
\newcommand{\name}[1][N]{\ensuremath{\mathit{#1}}}
\newcommand{\pexp}[1][p]{\ensuremath{\mathalpha{#1}}}
  \newcommand{\pexpp}[1][]{\ensuremath{\pexp[p_{#1}]}}
  \newcommand{\pexpq}[1][]{\ensuremath{\pexp[q_{#1}]}}
\newcommand{\pref}[2]{\ensuremath{\mathalpha{#1}.{#2}}}
\newcommand{\qend}{\mathord{\bot}}
\newcommand{\qmark}{\mathord{\$}}
\newcommand{\recv}[2]{{#1}\mathord{?}{#2}}
\newcommand{\send}[2]{{#1}\mathord{!}{#2}}
\newcommand{\silent}{\ensuremath{\act[\tau]}}
\newcommand{\state}[1][s]{\ensuremath{\mathalpha{#1}}}
 \newcommand{\states}[1][]{\ensuremath{\state[s_{#1}]}}
 \newcommand{\statet}[1][]{\ensuremath{\state[t_{#1}]}}
 \newcommand{\stateu}[1][]{\ensuremath{\state[u_{#1}]}}
 \newcommand{\statev}[1][]{\ensuremath{\state[v_{#1}]}}
 \newcommand{\statew}[1][]{\ensuremath{\state[w_{#1}]}}
\renewcommand{\step}[2][]{\ensuremath{\mathbin{\arrow{#2}_{#1}}}}
  \newdimen\boxwdplusemdimen
  \def\arrow#1{{
     \boxwdplusemdimen=1em%
     \setbox0=\hbox{$\scriptstyle#1$}%
     \advance\boxwdplusemdimen by \wd0\relax%
     \ifdim\boxwdplusemdimen<16.11119pt%
       \boxwdplusemdimen=16.11119pt%
     \fi%
     \buildrel{#1}\over%
       {\setbox1=\hbox to \boxwdplusemdimen{\rightarrowfill}%
     \ht1=0.3em\relax\box1}%
   }}
   \def\twoheadrightarrowfill{$\m@th\smash-\mkern-7mu%
     \cleaders\hbox{$\mkern-2mu\smash-\mkern-2mu$}\hfill
     \mkern-7mu\mathord\twoheadrightarrow$}
   \def\darrow#1{{
     \boxwdplusemdimen=1em%
     \setbox0=\hbox{$\scriptstyle#1$}%
     \advance\boxwdplusemdimen by \wd0\relax%
     \ifdim\boxwdplusemdimen<16.11119pt%
       \boxwdplusemdimen=16.11119pt%
     \fi%
     \buildrel{#1}\over%
       {\setbox1=\hbox to \boxwdplusemdimen{\twoheadrightarrowfill}%
     \ht1=0.3em\relax\box1}%
   }}
 \newcommand{\TMstep}[5][]{\step[#1]{#3{[#2/#4]}#5}}
\renewcommand{\steps}[2][]{\ensuremath{\mathbin{\arrow{#2}^{*}_{#1}}}}
\newcommand{\stepsplus}[2][]{\ensuremath{\mathbin{\arrow{#2}^{+}_{#1}}}}
\renewcommand{\term}[2][]{\ensuremath{{#2}\mathclose{\downarrow_{#1}}}}
\newcommand{\opt}[1]{\mbox{\tiny\rm(}#1\mbox{\tiny\rm)}} 
\newcommand{\optstep}[2][]{\step[#1]{\opt{#2}}}
\newcommand{\brelsym}[1][]{\ensuremath{\mathalpha{\mathcal{R}_{#1}}}}
\newcommand{\brel}[1][]{\ensuremath{\mathrel{\brelsym[#1]}}}
\newcommand{\todo}[1]{}
\newcommand{\comment}[1]{}
\newcommand{\qstn}[1]{}
\newcommand{\RTM}{RTM}
\newcommand{\tpunit}{\ensuremath{\mathalpha{1}}}
\newcommand{\tpblank}{\ensuremath{\mathalpha{\boxempty}}}
\newcommand{\tpsep}{\ensuremath{\#}}
 \newcommand{\tpsepc}{\ensuremath{\mathord{\mid}}}
\newcommand{\tpleft}{\ensuremath{\mathalpha{\llbracket}}}
\newcommand{\tpright}{\ensuremath{\mathalpha{\rrbracket}}}
\newcommand{\tphd}[1]{\check{#1}}
 \newcommand{\tphdL}[1]{{#1 \!}^{\scriptscriptstyle <}}
 \newcommand{\tphdR}[1]{\prescript{\scriptscriptstyle >}{}{\! #1}}
\newcommand{\tpmv}[1]{\ensuremath{\mathalpha{#1}}}
 \newcommand{\tpmvL}{\tpmv{L}}
 \newcommand{\tpmvR}{\tpmv{R}}
\newcommand{\tpstr}[1][\delta]{\ensuremath{\mathalpha{#1}}}
 \newcommand{\tpstrd}[1][]{\ensuremath{\tpstr[\delta_{#1}]}}
 \newcommand{\tpstrz}[1][]{\ensuremath{\tpstr[\zeta_{#1}]}}
 \newcommand{\tpstrdL}{\ensuremath{\tpstrd[\tpmvL]}}
 \newcommand{\tpstrdR}{\ensuremath{\tpstrd[\tpmvR]}}
 \newcommand{\tpstrzL}{\ensuremath{\tpstrz[\tpmvL]}}
 \newcommand{\tpstrzR}{\ensuremath{\tpstrz[\tpmvR]}}
\newcommand{\trel}[1][]{\ensuremath{\mathalpha{\rightarrow_{#1}}}}
\newcommand{\bbisimed}{\bisim_{\text{b}}^{\Delta}}
\newcommand{\lts}[2][]{\ensuremath{\mathcal{T}_{#1}(#2)}}
\newcommand{\codesym}{\ensuremath{\ulcorner{\_}\urcorner}}
\newcommand{\code}[1]{\ensuremath{\ulcorner{#1}\urcorner}}
\newcommand{\actcodesym}{\ensuremath{\codesym}}
\newcommand{\statecodesym}{\ensuremath{\codesym}}
\newcommand{\statecode}[1]{\ensuremath{\code{#1}}}
\newcommand{\finaltest}[1]{\ensuremath{\finaltestsym(#1)}}
\newcommand{\finaltestsym}{\ensuremath{\mathit{fin}}}
\newcommand{\outset}[1]{\ensuremath{\outsetsym(#1)}}
\newcommand{\outsetsym}{\ensuremath{\mathit{out}}}
\newcommand{\TMcode}[1]{\ensuremath{\code{#1}}}
\newcommand{\InF}{\ensuremath{\mathsf{Init}}}
  \newcommand{\InFStates}{\ensuremath{\StatesS[\InF]}}
  \newcommand{\InFinit}{\ensuremath{\init[\InF]}}
  \newcommand{\InFtrel}{\ensuremath{\trel[\InF]}}
  \newcommand{\InFstate}[1][]{\ensuremath{\mathit{in}_{#1}}}
\newcommand{\InUF}{\ensuremath{\mathsf{InitU}}}
  \newcommand{\InUFStates}{\ensuremath{\StatesS[\InUF]}}
  \newcommand{\InUFinit}{\ensuremath{\init[\InUF]}}
  \newcommand{\InUFtrel}{\ensuremath{\trel[\InUF]}}
  \newcommand{\InUFstate}[1][]{\ensuremath{\mathit{in}_{#1}}}
\newcommand{\StF}{\ensuremath{\mathsf{State}}}
  \newcommand{\StFStates}{\ensuremath{\StatesS[\StF]}}
  \newcommand{\StFinit}{\ensuremath{\init[\StF]}}
  \newcommand{\StFtrel}{\ensuremath{\trel[\StF]}}
  \newcommand{\StFstate}[1][]{\ensuremath{\mathit{st}_{#1}}}
\newcommand{\SpF}{\ensuremath{\mathsf{Step}}}
  \newcommand{\SpFStates}{\ensuremath{\StatesS[\SpF]}}
  \newcommand{\SpFinit}{\ensuremath{\init[\SpF]}}
  \newcommand{\SpFtrel}{\ensuremath{\trel[\SpF]}}
  \newcommand{\SpFstate}[1][]{\ensuremath{\mathit{sp}_{#1}}}
\newcommand{\BBCTSS}[1][]{\ensuremath{\mathsf{Sim}}}
  \newcommand{\BBCTSSStates}[1][]{\ensuremath{\StatesS[{\BBCTSS[#1]}]}}
  \newcommand{\BBCTSStrel}[1][]{\ensuremath{\trel[{\BBCTSS[#1]}]}}
  \newcommand{\BBCTSSinit}[1][]{\ensuremath{\init[{\BBCTSS[#1]}]}}
  \newcommand{\BBCTSSfinal}[1][]{\ensuremath{\final[{\BBCTSS[#1]}]}}
\newcommand{\INIT}{\ensuremath{\mathcal{I}}}
\newcommand{\REINIT}[1][]{\ensuremath{\mathcal{R}_{#1}}}
\newcommand{\STATE}{\ensuremath{\mathcal{U}}}
\newcommand{\STEP}{\ensuremath{\mathcal{D}}}
\newcommand{\TRANSF}{\ensuremath{\mathcal{F}}}
  \def\nrightarrowfill{$\m@th\smash-\mkern-7mu%
    \cleaders\hbox{$\mkern-2mu\smash-\mkern-2mu$}\hfill
    \mkern-7mu\mathord\nrightarrow$}
  \def\narrow#1{{
     \boxwdplusemdimen=1em%
     \setbox0=\hbox{$\scriptstyle#1$}%
     \advance\boxwdplusemdimen by \wd0\relax%
     \ifdim\boxwdplusemdimen<16.11119pt%
       \boxwdplusemdimen=16.11119pt%
     \fi%
     \buildrel{#1}\over%
       {\setbox1=\hbox to \boxwdplusemdimen{\nrightarrowfill}%
     \ht1=0.3em\relax\box1}%
   }}
\newcommand{\nstep}[2][]{\ensuremath{\mathbin{\narrow{#2}_{#1}}\mbox{}}}
\newcommand{\bdegbound}{\ensuremath{B}}
\newcommand{\TMsend}[1]{\ensuremath{\overline{#1}}}
\newcommand{\fdcomp}[1][]{\ensuremath{\mathrel{\mapsto_{#1}}}}
\newcommand{\IS}[1]{\ensuremath{\mathalpha{\mathsf{IS}(#1)}}}
\newcommand{\initc}[1][]{\ensuremath{\mathit{ci}_{#1}}}
\newcommand{\finalc}[1][]{\ensuremath{\mathit{cf}_{#1}}}
\newcommand{\TMcomp}[1][]{\ensuremath{\mathrel{\mapsto_{#1}}}}
 \newtheorem{theorem}{Theorem}[section]
 \newtheorem{corollary}[theorem]{Corollary}
 \newtheorem{lemma}[theorem]{Lemma}
 \newtheorem{fact}[theorem]{Fact}
 \newtheorem{proposition}[theorem]{Proposition}
 \newtheorem{definition}[theorem]{Definition}
 \newtheorem{example}[theorem]{Example}
 \newtheorem{remark}[theorem]{Remark}
\newenvironment{spec}%
 {\everymath{\displaystyle}\start@align\@ne\st@rredtrue\m@ne}
 {\endalign}
\title{Reactive Turing Machines}
  \author{Jos C.\ M.\ Baeten \and Bas Luttik \and Paul van Tilburg}
  \date{}%
  \institute{Eindhoven University of Technology\\
              \email{\{j.c.m.baeten,s.p.luttik,p.j.a.v.tilburg\}@tue.nl}}}
\begin{document}

\maketitle

\begin{abstract}
  We propose reactive Turing machines (\RTM{}s), extending classical
  Turing machines with a process-theoretical notion of interaction,
  and use it to define a notion of executable transition system.  We
  show that every computable transition system with a bounded
  branching degree is simulated modulo divergence-preserving branching
  bisimilarity by an \RTM{}, and that every effective transition
  system is simulated modulo the variant of branching bisimilarity
  that does not require divergence preservation. We conclude from
  these results that the   parallel composition of (communicating)
  \RTM{}s can be simulated by a single \RTM{}. We prove that there
  exist universal \RTM{}s modulo branching bisimilarity, but these
  essentially employ divergence to be able to simulate an \RTM{} of
  arbitrary branching degree. We also prove that modulo
  divergence-preserving branching bisimilarity there are \RTM{}s that
  are universal up to their own branching degree. We
  establish a correspondence between executability and finite
  definability in a simple process calculus. Finally, we establish that
  \RTM{}s are at least as expressive as \emph{persistent Turing machines}.
\end{abstract}

\sisubmissiononly{\maketitle}

\section{Introduction}\label{sec:intro}

The Turing machine~\cite{Tur36} is widely accepted as a computational
model suitable for exploring the theoretical boundaries of
computing. Motivated by the existence of universal Turing machines,
many textbooks on the theory of computation 
present the Turing machine not just as a theoretical model to explain
which functions are computable, but as an accurate conceptual model of
the computer.
There is, however, a well-known limitation to this view
\cite{Weg97,GW08}. A Turing machine operates from the assumptions that: (1)
all input it needs for the computation is available on the tape
from the very beginning; (2) it performs a terminating computation;
and (3) it leaves the output on the tape at the very end.
\fullonly{That is, a Turing machine computes a function, and thus
  it}\fctonly{Thus, the notion of Turing machine} abstracts from two
key ingredients of computing: \emph{interaction} and
\emph{non-termination}. Nowadays, most computing systems are so-called
\emph{reactive systems}~\cite{HP85}, systems that are generally not
meant to terminate and consist of computing devices that interact with
each other and with their environment.



Concurrency theory emerged from the early work of Petri \cite{Pet62}
and has now developed into a mature theory of reactive
systems.  We mention three of its contributions particularly relevant
for our work.  Firstly, it installed the notion of transition system%
\fullonly{ ---a generalisation of the notion of finite-state automaton
  from classical automata theory---} as the prime mathematical model
to represent discrete behaviour.  Secondly, it offered the insight
that language equivalence%
\fullonly{ ---the underlying equivalence in classical automata
  theory---} is too coarse in a setting with interacting automata;
instead one should consider automata up to some form of
bisimilarity. Thirdly, it yielded many algebraic process calculi
facilitating the formal specification and verification of reactive
systems.

Several proposals have been made in the literature extending Turing
machines with a notion of interaction (e.g., the \emph{persistent}
Turing machines of \cite{GSAS04} and the \emph{interactive} Turing
machines of \cite{LW01}). Since the purpose of these works is studying
the effect of interaction on the expressiveness of sequential
computation, interaction is added through an ad hoc input-output
facility of the Turing machine. In this paper, we propose to add
interaction as an orthogonal facility, in line with the way
interaction is studied in concurrency theory. The result will be a
semantically refined model of interactive behaviour, integrating the
well-established classical theory of automata with the theory of
concurrency. The advantage of the integration is that interactive
behaviour can be studied both from a concurrency-theoretic perspective
(e.g., up to any of the many behavioural equivalences known from the
concurrency-theoretic literature, see \cite{Gla93}), while at the same
time there is an automata-based notion of executability associated
with it.


In Sect.~\ref{sec:rtm} we propose a notion of \emph{reactive} Turing
machine (\RTM{}), extending the classical notion of Turing machine
with interaction in the style of concurrency theory.  The extension
consists of a facility to declare every transition to be either
\emph{observable}, by labelling it with an action symbol, or
unobservable, by labelling it with~$\silent$. Typically, a transition
labelled with an action symbol models an interaction of the \RTM{}
with its environment (or some other \RTM{}), while a transition
labelled with~$\silent$ refers to an internal computation step. Thus,
a conventional Turing machine can be regarded as a special kind of
\RTM{} in which all transitions are declared unobservable by labelling
them with~$\silent$.

The semantic object associated with a conventional Turing machine is
either the function that it computes, or the formal language that it
accepts. The semantic object associated with an \RTM{} is a behaviour,
formally represented by a transition system. A function is said to be
effectively computable if it can be computed by a Turing machine. By
analogy, we say that a behaviour is effectively executable if it can
be exhibited by an \RTM{}.  In concurrency theory,
behaviours are usually considered modulo a suitable behavioural
equivalence. In this paper we use
\emph{(divergence-preserving) branching bisimilarity}~\cite{GW96},
which is the finest behavioural equivalence in Van Glabbeek's spectrum
(see~\cite{Gla93}).

\todo{BL: Establishing divergence-preservation means that most of our
  results do not depend on fairness assumptions. This should be
  explained either here or in the conclusions.}

\todo{BL: Mention here, or in the conclusion section, that
  bisimilarity gives a much finer perspective on the behaviour of
  Turing machines.}

In Sect.~\ref{sec:expr} we set out to investigate the expressiveness
of \RTM{}s up to divergence-preserving branching bisimilarity.
\fullonly{%
We present an example of a behaviour that is not executable up
to branching bisimilarity. Then, we}%
\fctonly{We}
establish that every computable
transition system with a bounded branching degree can be simulated, up
to divergence-preserving branching bisimilarity, by an \RTM{}. If the
divergence-preservation requirement is dropped, then every effective
transition system can be simulated. These results
then allow us to conclude that the behaviour of a parallel composition
of \RTM{}s can be simulated on a single \RTM{}.

In Sect.~\ref{sec:universal} we define a suitable notion of
universality for \RTM{}s and investigate the existence of universal
\RTM{}s. \todo{Reformulate: }We find that%
\fullonly{%
  , since bisimilarity is sensitive to branching,}
there are some subtleties pertaining to the branching
degree bound associated with each \RTM{}. Up to divergence-preserving
branching bisimilarity, an \RTM{} can at best simulate other \RTM{}s
with the same or a lower bound on their branching degree. If
divergence-preservation is not required, however, then
universal \RTM{}s do exist.

In Sect.~\ref{sec:expl-int}, we consider the correspondence between
\RTM{}s and a process calculus consisting of a few standard
process-theoretic constructions.  On the one hand, the process
calculus provides a convenient way to specify executable behaviour;
indeed, every guarded recursive specification gives rise to a
computable transition system \cite{Vaa92}, which can be simulated up
to branching bisimilarity with an \RTM{}. On the other hand, we
establish that every executable behaviour is, again up to
divergence-preserving branching bisimilarity, finitely definable in
our calculus. Recursive specifications are the concurrency-theoretic
counterparts of grammars in the theory of formal languages.  Thus, the
result in Sect.~\ref{sec:expl-int} may be considered as the
process-theoretic version of the correspondence between Turing
machines and unrestricted grammars.

In Sect.~\ref{sec:PTM}, we argue that reactive Turing machines are at
least as expressive as the persistent Turing machines of
\cite{GSAS04}, and in Sect.~\ref{sec:conclusion} we conclude the paper
with a discussion of related work and some ideas for future work.

\todo{BL: Mention our course on automata and process theory, and Jos'
  lecture notes.}
\todo{BL: Add remark about $\lambda$-calculus and $\pi$-calculus, and how
  our RTMs can be used to determine their expressiveness.}
\todo{BL: Add example about difference between language equivalence and
  bisimilarity.}
\todo{BL: Discuss Turing's own oracle machines.}
\todo{BL: Add result that there exist non-executable processes with a
 decidable language, and refer to it in the introduction.}
\todo{BL: Add reference to the book ``Interactive Computation''.}

\section{Reactive Turing Machines}\label{sec:rtm}

\todo{BL: Implement FoSSaCS reviewer's suggestion to use alternative
  notation for \RTM{}s. Perhaps we should go further and also use
  alternative terminology. For instance, denote its collection of
  states by $Q=\{q_0,q_1,\dots\}$. We should then probably also denote
  its transition relation by $\delta$.}

We presuppose a finite set $\Act{}$ of \emph{action symbols} that we
use to denote the observable events of a system. An unobservable event
is denoted with $\silent$, assuming that $\silent\not\in\Act{}$; we
henceforth denote the set $\Act{}\cup\{\silent\}$ by $\Actt$. We also
presuppose a finite set $\Data$ of \emph{data symbols}. We add to
$\Data$ a special symbol $\tpblank$ to denote a blank tape cell,
assuming that $\tpblank\not\in\Data$; we denote the set
$\Data\cup\{\tpblank\}$ of \emph{tape symbols} by $\Datab$. Mostly,
the precise contents of the sets $\Act{}$ and $\Data{}$ are
unimportant for the developments in this paper, but it will
occasionally be convenient to assume explicitly that certain special
symbols are included in them.

\begin{definition}
\label{def:RTM}
  A \emph{reactive Turing machine} (\RTM{}) $\TM$ is a quadruple
  $(\States, \trel, \init, \final)$ consisting of
    a finite set of \emph{states} $\States$,
    a distinguished \emph{initial state} $\init\in\States$,
    a subset of \emph{final states}
      $\final\subseteq\States$, and
    a $(\Datab\times\Actt\times\Datab\times\{\tpmvL,\tpmvR\})$-labelled
    transition relation
    \begin{equation*}
      \trel \subseteq
        \States
          \times
        \Datab\times\Actt\times\Datab\times\{\tpmv{L},\tpmv{R}\}
          \times
        \States
    \enskip.
    \end{equation*}
  An \RTM{} is \emph{deterministic} if
    $(\states,\datumd,\actt,\datume[1],\tpmv{M_1},\statet[1])\in\trel$
  and
    $(\states,\datumd,\actt,\datume[2],\tpmv{M_2},\statet[2])\in\trel$
  implies that
    $\datume[1]=\datume[2]$,
    $\statet[1]=\statet[2]$ and
    $\tpmv{M_1}=\tpmv{M_2}$ 
  for all
    $\states,\statet[1],\statet[2]\in\States$,
    $\datumd,\datume[1],\datume[2]\in\Datab$,
    $\actt\in\Actt$, and
    $\tpmv{M_1},\tpmv{M_2}\in\{\tpmv{L},\tpmv{R}\}$,
  and, moreover,
   $(\states,\datumd,\silent,\datume[1],\tpmv{M_1},\statet[1])\in\trel$
  implies that there do not exist $\actt\neq\silent$,
  $\datume[2],\tpmv{M_2},\statet[2]$ such that
     $(\states,\datumd,\actt,\datume[2],\tpmv{M_2},\statet[2])\in\trel$
\end{definition}

If $(\states,\datumd,\actta,\datume,M,\statet)\in\trel$, we write
  $\states \TMstep{\datumd}{\actta}{\datume}{\tpmv{M}} \statet$.
The intuitive meaning of such a transition is that whenever $\TM$ is
in state $\states$ and $\datumd$ is the symbol currently read by the
tape head, then it may
  execute the action $\actta$,
  write symbol $\datume$ on the tape (replacing $\datumd$),
  move the read/write head one position to the left or one
  position to the right on the tape (depending on whether
 $\tpmv{M} = \tpmv{L}$ or $\tpmv{M} = \tpmv{R}$),
and then end up in state $\statet$.
\RTM{}s extend conventional Turing machines by associating with every
transition an element $\acta\in\Actt$. The symbols in $\Act$ are thought
of as denoting observable activities; a transition labelled with an
action symbol in $\Act$ is semantically be treated as observable.
Observable transitions are used to model interactions of an \RTM{} with
its environment or some other \RTM{}, as will be explained more in detail
below when we introduce a notion of parallel composition for \RTM{}s
(see Definition~\ref{def:RTMparc} and Example~\ref{def:RTMparc} below).
The symbol $\silent$ is used to declare that a transition is
unobservable. A classical Turing machine is an \RTM{} in which all
transitions are declared unobservable.

\begin{example}\label{ex:rtm}
  Assume that
    $\Act{}=\{\send{\chan{}}{\datumd},
            \recv{\chan{}}{\datumd}
              \mid \chan{}\in\{\chani,\chano\}\ \&\ \datumd\in\Datab\}$.
  Intuitively, $\chani$ and $\chano$ are the input/output communication
  channels by which the \RTM{} can interact with its environment.
  The action symbol
    $\send{\chan{}}{\datumd}$ ($\chan{}\in\{\chani,\chano\}$)
  denotes the event that a datum $\datumd$ is sent by the \RTM{} along
  channel $\chan{}$, and the action symbol $\recv{\chan{}}{\datumd}$
  ($\chan{}\in\{\chani,\chano\}$) denotes the event that a datum
  $\datumd$ is received by the \RTM{} along channel $\chan{}$.

  The left state-transition diagram in Fig.~\ref{fig:ex-rtm}
  specifies an \RTM{} that first inputs a string, consisting of
  an arbitrary number of $\tpunit$s followed by the symbol $\tpsep{}$,
  stores the string on the tape, and returns to the beginning of the
  string. Then, it performs a computation to determine if the number
  of $\tpunit$s is odd or even. In the first case, it simply removes the
  string from the tape and returns to the initial state. In the second
  case, it outputs the entire string, removes it from the tape, and
  returns to the initial state.
  Note that, ignoring occurrences of the action symbol $\silent$, the
  right state-transition diagram in Fig.~\ref{fig:ex-rtm} generates the
  sequence of actions
  \begin{equation*}
    \send{\chani}{\tpunit},
     \send{\chani}{\tpsep},
     \send{\chani}{\tpunit},\send{\chani}{\tpunit},
     \send{\chani}{\tpsep},
     \send{\chani}{\tpunit},\send{\chani}{\tpunit},\send{\chani}{\tpunit},
     \dots
  \end{equation*}
  which, according to concurrency-theoretic usage, should be thought
  of as modelling outputting on channel $\chani{}$ the infinite sequence
    $\tpunit\tpsep\tpunit\tpunit\tpsep\tpunit\tpunit\tpunit\tpsep\cdots\tpsep \tpunit^n \# \cdots$
  ($n\geq 1$). (How this particular concurrency-theoretic
  interpretation of action symbols leads to a formalisation of
  interaction will be clarified when we define the parallel
  composition of \RTM{}s.)
\end{example}

 \begin{figure}[t]
  \centering\small
  \begin{transsys}(109,42)(0,-3)
    \node[Nmarks=i](i1)(0,30){%
    }
    \node(i2)(22,30){%
    }
    \node(o)(55,30){%
    }
    \node(b)(55,15){%
    }
    \node(e)(40,30){%
    }
    \node(f)(40,15){%
    }
   {\scriptsize
    \edge(o,b){$\silent[\tpsep/\tpblank]\tpmvL$}
    \edge*[loopangle=0,ELpos=80](b){$\silent[\tpunit/\tpblank]\tpmvL$}
    \edge[ELside=r](e,f){$\silent[\tpsep/\tpblank]\tpmvL$}
    \edge*[loopangle=-90,ELpos=77](f){$\send{\chano}{\tpunit}[\tpunit/\tpblank]\tpmvL$}
    \drawbpedge[ELpos=40](f,180,32,i1,0,15){$\send{\chano}{\tpsep}[\tpblank/\tpblank]\tpmvR$}
    \drawbpedge[ELpos=20](b,-135,20,i1,-90,35){$\silent[\tpblank/\tpblank]\tpmvR$}
    \edge*[loopangle=90](i1){$\recv{\chani}{\tpunit}[\tpblank/\tpunit]\tpmvR$}
    \edge(i1,i2){$\recv{\chani}{\tpsep}[\tpblank/\tpsep]\tpmvL$}
    \edge*[loopangle=90](i2){$\silent[\tpunit/\tpunit]\tpmvL$}
    \edge(i2,e){$\silent[\tpblank/\tpblank]\tpmvR$}
    \graphset{curvedepth=-2,ELside=r}
    \edge(e,o){$\silent[\tpunit/\tpunit]\tpmvR$}
    \edge(o,e){$\silent[\tpunit/\tpunit]\tpmvR$}
   }
   \node[Nmarks=i](ei)(100,30){}
   \node(e1)(100,15){}
   \node(e2)(100,0){}
   \node(e3)(82,0){}
   {\scriptsize  
    \graphset{loopangle=0}
    \edge(ei,e1){$\silent[\tpblank/\tpunit]\tpmvR$}
    \edge(e1,e2){$\silent[\tpblank/\tpblank]\tpmvL$}
    \edge*(e2){$\silent[\tpunit/\tpunit]\tpmvL$}
    \edge(e2,e3){$\silent[\tpblank/\tpblank]\tpmvR$}
    \edge*[loopangle=180,ELpos=20](e3){$\send{\chani}{\tpunit}[\tpunit/\tpunit]\tpmvR$}
    \edge(e3,e1){$\send{\chani}{\tpsep}[\tpblank/\tpunit]\tpmvR$}
   }
  \end{transsys}
  \caption{Examples of reactive Turing machines.}\label{fig:ex-rtm}
 \end{figure}


To formalise our intuitive understanding of the operational behaviour
of \RTM{}s we shall below associate with every \RTM{} a transition
system.
\fullonly{\begin{definition}}%
An \emph{$\Actt$-labelled transition system} $\LTS$ is a
quadruple $(\States,\trel,\init,\final)$ consisting of a set of
\emph{states} $\States$, an \emph{initial} state $\init\in\States$, a
subset $\final\subseteq\States$ of \emph{final} states, and an
$\Actt$-labelled \emph{transition relation}
$\trel\subseteq\States\times\Actt\times\States$.  If $(\states, \act,
\statet) \in \trel$, we write $\states \step{\act} \statet$.  If
$\states$ is a final state, i.e., $\states \in \final$, we write
$\term{\states}$.
The transition system $\LTST$ is \emph{deterministic} if, for every
state $\states\in\States$ and for every $\acta\in\Actt$,
    $\states\step{\acta}\states[1]$ and $\states\step{\acta}\states[2]$
  implies
    $\states[1]=\states[2]$,
and, moreover, $\states\step{\silent}\states[1]$ implies that there do
not exist an action $\act\neq\silent$ and a state $\states[2]$ such that
$\states\step{\acta}\states[2]$.
%
\fullonly{\end{definition}}

With every \RTM{} $\TM{}$ we are going to associate a transition
system $\lts{\TM{}}$.  The states of $\lts{\TM{}}$ are the
configurations of the \RTM{}, consisting of a state of the \RTM{}, its
tape contents, and the position of the read/write head on the tape.
We represent the tape contents by an element of $\Databseqs$,
replacing precisely one occurrence of a tape symbol $\datumd$ by a
\emph{marked} symbol $\tphd{\datumd}$, indicating that the read/write
head is on this symbol.  We denote by
$\tphd{\Datab}=\{\tphd{\datumd}\mid\datumd\in\Datab\}$ the set of
\emph{marked} tape symbols; a \emph{tape instance}\label{tapeinstance} is a sequence
$\tpstrd\in{(\Datab\cup\tphd{\Datab})}^{*}$ such that $\tpstrd$
contains exactly one element of $\tphd{\Datab}$.  \fullonly{%
  Note that we do not use $\tpstrd$ exclusively for tape instances; we
  also use $\tpstrd$ for sequences over $\Data$.  A tape instance thus
  is a finite sequence of symbols that represents the contents of a
  two-way infinite tape. Henceforth, we do not distinguish between
  tape instances that are equal modulo the addition or removal of
  extra occurrences of the symbol $\tpblank$ at the left or right
  extremes of the sequence. That is, we do not distinguish tape
  instances $\tpstrd[1]$ and $\tpstrd[2]$ if
  $\tpblank^\omega\tpstrd[1]\tpblank^\omega
  =\tpblank^\omega\tpstrd[2]\tpblank^\omega$.  } \fullonly{%
\begin{definition}\label{def:RTMconf}
 A \emph{configuration} of an \RTM{}
    $\TM=(\States, \trel, \init, \final)$
  is a pair $(\states,\tpstrd)$ consisting of a state
    $\states\in\States$, and
  a tape instance $\tpstrd$.
\end{definition}%
}%
\fctonly{%
  Formally, a \emph{configuration} is now a pair $(\states,\tpstrd)$
  consisting of a state $\states\in\States$, anda tape instance $\tpstrd$.
}

Our transition system semantics defines an $\Actt$-labelled
transition relation on configurations such that an \RTM{}-transition
$\states\TMstep{\datumd}{\actta}{\datume}{M} \statet$ corresponds
with $\actta$-labelled transitions from configurations consisting of
the \RTM{}-state $\states$ and a tape instance in which some
occurrence of $\datumd$ is marked. The transitions lead to
configurations consisting of $\statet$ and a tape instance in which
the marked symbol $\datumd$ is replaced by $\datume$, and either the
symbol to the left or to right of this occurrence of $\datume$ is
replaced by its marked version, according to whether $M=L$ or
$M=R$. If $\datume$ happens to be the first symbol and $M=L$, or the
last symbol and $M=R$, then an additional blank symbol is appended at
the left or right end of the tape instance, respectively, to model the
movement of the head.

We introduce some notation to concisely denote the new placement of
the tape head marker.  Let $\tpstrd$ be an element of $\Datab^{*}$.
Then by $\tphdL{\tpstrd}$ we denote the element of
${(\Datab\cup\tphd{\Datab})}^{*}$ obtained by placing the tape head
marker on the right-most symbol of $\tpstrd$ if it exists, and
$\tphd{\tpblank}$ otherwise\fctonly{.}\fullonly{, i.e.,
\begin{equation*}
  \tphdL{\tpstrd} = 
    \begin{cases}
      \tpstrz\tphd{\datumd} & \text{if}\ \tpstrd = \tpstrz\datumd\quad
        (\datumd \in \Datab, \tpstrz\in\Datab^{*})\;,\ \text{and} \\
      \tphd{\tpblank}        & \text{if}\ \tpstrd = \emptystr\;.
    \end{cases}
\end{equation*}
(We use $\emptystr$ to denote the empty sequence.)}
Similarly, by $\tphdR{\tpstrd}$ we denote the element of
  ${(\Datab\cup\tphd{\Datab})}^{*}$
obtained by placing the tape head marker on the left-most symbol of
$\tpstrd$ if it exists, and $\tphd{\tpblank}$ otherwise\fctonly{.}\fullonly{, i.e.,
\begin{equation*}
  \tphdR{\tpstrd} = 
    \begin{cases}
      \tphd{\datumd}\tpstrz & \text{if}\ \tpstrd = \datumd\tpstrz\quad
        (\datumd \in \Datab, \tpstrz\in\Datab^{*})\;,\ \text{and} \\
      \tphd{\tpblank}        & \text{if}\ \tpstrd = \emptystr\;.
    \end{cases}
\end{equation*}}

\begin{definition}\label{def:RTMopsem}
  Let $\TM{}=(\States,\trel,\init,\final)$ be an \RTM{}.
  The \emph{transition system $\lts{\TM{}}$ associated with $\TM{}$} is defined
  as follows:
  \begin{enumerate}
  \item its set of states is the set of all configurations of $\TM{}$;
  \item its transition relation $\trel$ is the least relation
    satisfying, for all $\act\in\Actt$, $\datumd,\datume \in \Datab$ and
    $\tpstrdL,\tpstrdR\in\Datab^{*}$:
    \begin{gather*}
      (\states, \tpstrdL \tphd{\datumd} \tpstrdR)
        \step{\act}
      (\statet, \tphdL{\tpstrdL} \datume \tpstrdR)
         \text{ iff }
           \states \TMstep{\datumd}{\act}{\datume}{\tpmv{L}} \statet
    \enskip, \text{and}\\
      (\states, \tpstrdL \tphd{\datumd} \tpstrdR)
         \step{\act} 
      (\statet, \tpstrdL \datume \tphdR{\tpstrdR})
        \text{ iff }
          \states \TMstep{\datumd}{\act}{\datume}{\tpmv{R}} \statet
    \enskip;
    \end{gather*}
  \item its initial state is the configuration $(\init,\tphd{\tpblank})$;
    and
  \item its set of final states is the set of terminating configurations
      $\{(\states, \tpstrd) \mid \term{\states} \}$.
  \end{enumerate}
\end{definition}

Turing introduced his machines to define the notion of
\emph{effectively computable function}. By analogy, our notion of
\RTM{} can be used to define a notion of \emph{effectively executable
  behaviour}.
\begin{definition}
  A transition system is \emph{executable} if it is associated with
  an \RTM{}.
\end{definition}

\paragraph{Parallel composition}

To illustrate how \RTM{}s are suitable to model a form of interaction, we
proceed to define on \RTM{}s a notion of parallel composition, equipped
with a simple form of communication. 
\fullonly{(We are not trying to define the
most general or most suitable notion of parallel composition for
\RTM{}s here; the purpose of the notion of parallel composition
defined here is just to illustrate how \RTM{}s may run in parallel and
interact.)}
Let \Chan{} be a finite set of \emph{channels} for the
communication of data symbols between one \RTM{} and another.
Intuitively, $\send{\chan}{\datum}$
stands for the action of sending datum $\datum$ along channel $\chan$,
while $\recv{\chan}{\datum}$ stands for the action of receiving datum
$\datum$ along channel $\chan$.

First, we define a notion of parallel composition on transition systems.
Let
  $\LTST[1]=(\StatesS[1],\trel[1],\init[1],\final[1])$
and
  $\LTST[2]=(\StatesS[2],\trel[2],\init[2],\final[2])$
be transition systems, and let $\Chan{}'\subseteq\Chan{}$.
The \emph{parallel composition}
of $\LTST[1]$ and $\LTST[2]$ is the transition system
  $\aep{\Chan'}{\LTST[1]\parc \LTST[2]}=(\States,\trel,\init,\final)$,
with $\States$, $\trel$, $\init$ and $\final$ defined by
\begin{enumerate}
  \item $\States=\StatesS[1]\times\StatesS[2]$;
  \item $(\states[1],\states[2])\step{\acta}(\state[s_1'],\state[s_2'])$ iff
$\acta\in\Actt-\{\send{\chan}{\datum},\recv{\chan}{\datum}\mid \chan\in\Chan',\datum\in\Datab\}$ and either
    \begin{enumerate}
    \item   $\states[1]\step{\acta}\state[s_1']$ and $\states[2]=\state[s_2']$,
          or
             $\states[2]\step{\acta}\state[s_2']$ and $\states[1]=\state[s_1']$,
  or
    \item $\acta=\silent$ and
      either $\states[1]\step{\send{\chan}{\datum}}\state[s_1']$
               and
             $\states[2]\step{\recv{\chan}{\datum}}\state[s_2']$,
          or
             $\states[1]\step{\recv{\chan}{\datum}}\state[s_1']$
               and
             $\states[2]\step{\send{\chan}{\datum}}\state[s_2']$
      for some $\chan\in\Chan'$ and $\datum\in\Datab$;
    \end{enumerate}
  \item $\init=(\init[1],\init[2])$; and
  \item $\final=\{(\states[1],\states[2])\mid\states[1]\in\final[1]\ \&\ \states[2]\in\final[2]\}$.
\end{enumerate}

\begin{definition} \label{def:RTMparc}
  Let
    $\TM{_1}=(\StatesS[1],\trel[1],\init[1],\final[1])$
  and
    $\TM{_2}=(\StatesS[2],\trel[2],\init[2],\final[2])$
  be RTMs, and let $\Chan'\subseteq\Chan$;
  by $\aep{\Chan'}{\TM{_1}\parc\TM{_2}}$ we denote the
  \emph{parallel composition} of $\TM{_1}$ and $\TM{_2}$.
  The transition system $\lts{\aep{\Chan'}{\TM{_1}\parc\TM{_2}}}$ associated
  with the parallel composition $\aep{\Chan'}{\TM{_1}\parc_{\Chan}\TM{_2}}$ of
  $\TM{_1}$ and $\TM{_2}$ is the parallel composition of the transition
  systems associated with $\TM{_1}$ and $\TM{_2}$, i.e.,
    $\lts{\aep{\Chan'}{\TM{_1}\parc\TM{_2}}}=
       \aep{\Chan'}{\lts{\TM{_1}}\parc\lts{\TM{_2}}}$.
\end{definition}

\begin{example} \label{exa:RTMparc}
  Let $\Act$ be as in Example~\ref{ex:rtm}, let $\TM{}$ denote the
  left-hand side \RTM{} in Fig.~\ref{fig:ex-rtm}, and let $\TM[E]$
  denote the right-hand side \RTM{} in Fig.~\ref{fig:ex-rtm}.
  Then the parallel composition
    $\aep{\chani}{\TM\parc \TM[E]}$
  exhibits the behaviour of outputting, along channel $\chano$, the string
    $\tpunit\tpunit\tpsep
      \tpunit\tpunit\tpunit\tpunit
        \tpsep\cdots\tpsep
      \tpunit^n
        \tpsep$ ($n\geq 2$, $n$ even).
\end{example}

\paragraph{Behavioural equivalence}

In automata theory, Turing machines that compute the same function or
accept the same language are generally considered equivalent. In fact,
functional or language equivalence is underlying many of the standard
notions and results in automata theory. Perhaps most notably, a
\emph{universal} Turing machine is a Turing machine that, when started
with the code of some Turing machine on its tape, simulates this
machine up to functional or language equivalence. A result from
concurrency theory is that functional and language equivalence are
arguably too coarse for reactive systems, because they abstract from
all moments of choice (see, e.g.,~\cite{BBR09}). In concurrency theory
many alternative behavioural equivalences have been proposed; we refer
to~\cite{Gla93} for a classification.

The results about \RTM{}s that are obtained in the remainder of
this paper are modulo \emph{branching bisimilarity}~\cite{GW96}%
\fullonly{, which is the finest behavioural equivalence in Van Glabbeek's linear time --
branching time spectrum~\cite{Gla93}}. We consider both the
divergence-insensitive and the divergence-preserving variant.
\fullonly{(The divergence-preserving variant is called \emph{branching
  bisimilarity with explicit divergence} in~\cite{GW96,Gla93}, but in
this paper we prefer the term \emph{divergence-preserving} branching
bisimilarity.)

}%
\fctonly{%
Let $\LTST[1]$ and $\LTST[2]$ be transition systems.
If $\LTST[1]$ and $\LTST[2]$ are \emph{branching bisimilar}, then we
write $\LTST[1]\bbisim\LTST[2]$.
If $\LTST[1]$ and $\LTST[2]$ are \emph{divergence-preserving branching
bisimilar}, then we write $\LTST[1]\bbisimed\LTST[2]$. 
(Due to space limitations, the formal definitions had to be omitted;
the reader is referred to the full version~\cite{BLT11full}, or to \cite{GLT09}, where the
divergence-preserving variant is called \emph{branching
  bisimilarity with explicit divergence}.%
\fctsubmissiononly{
  Furthermore, for the convenience of the reviewer we have also
  included the definition in Appendix~\ref{app:bbed}.}%
)}

\fullonly{%
We proceed to define the behavioural equivalences that we
employ in this paper to compare transition systems. Let $\trel$ be
an $\Actt$-labelled transition relation on a set $\States$, and let
$\actt\in\Actt$; we write $\states\optstep{\actt}\statet$ if
$\states\step{\actt}\statet$ or $\actt=\silent$ and
$\states=\statet$.  Furthermore, we denote the transitive closure of
$\step{\silent}$ by $\stepsplus{}$, and we denote the
reflexive-transitive closure of $\step{\silent}$ by $\steps{}$.

\begin{definition}\label{def:bbisim}
  Let
    $\LTST[1]=(\StatesS[1],\trel[1],\init[1],\final[1])$
  and
    $\LTST[2]=(\StatesS[2],\trel[2],\init[2],\final[2])$
  be transition systems.
  A \emph{branching bisimulation} from $\LTST[1]$ to $\LTST[2]$ is a
  binary relation $\brelsym\subseteq\StatesS[1]\times\StatesS[2]$ and, for all states $\states[1]$ and
  $\states[2]$, $\states[1]\brel\states[2]$ implies
  \begin{enumerate}
  \item if $\states[1]\step[1]{\actt}\state[s_1']$, then there exist
      $\state[s_2'],\state[s_2'']\in\StatesS[2]$ such that
      $\states[2]\steps[2]{}\state[s_2'']\optstep[2]{\actt}\state[s_2']$, 
      $\states[1]\brel\state[s_2'']$
    and 
      $\state[s_1']\brel\state[s_2']$;
  \item if $\states[2]\step[2]{\actt}\state[s_2']$, then there exist
    $\state[s_1'],\state[s_1'']\in\StatesS[1]$ such that
      $\states[1]\steps[1]{}\state[s_1'']\optstep[1]{\actt}\state[s_1']$, 
      $\state[s_1'']\brel\states[2]$
    and
      $\state[s_1']\brel\state[s_2']$;
  \item\label{def:bbisim:term1}
    if $\term[1]{\states[1]}$, then there exists $\state[s_2']$
    such that
    $\states[2]\steps[2]{}\state[s_2']$,
    $\states[1]\brel\state[s_2']$ and
    $\term[2]{\state[s_2']}$;
    and
  \item\label{def:bbisim:term2}
    if $\term[2]{\states[2]}$, then there exists $\state[s_1']$
    such that
      $\states[1]\steps[1]{}\state[s_1']$,
      $\state[s_1']\brel\states[2]$
    and
      $\term[1]{\state[s_1']}$.
  \end{enumerate}
  The transition systems $\LTST[1]$ and $\LTST[2]$ are
  \emph{branching bisimilar} (notation: $\LTST[1]\bbisim\LTST[2]$)
  if there exists a branching bisimulation from $\LTST[1]$ to
  $\LTST[2]$ such that $\init[1]\brel\init[2]$.

  A branching bisimulation $\brel$ from $\LTST[1]$ to $\LTST[2]$ is
  \emph{divergence-preserving} if, for all states $\states[1]$ and
  $\states[2]$, $\states[1]\brel\states[2]$ implies
  \begin{enumerate}
  \addtocounter{enumi}{4}
  \item if there exists an infinite sequence
    ${(\states[1,i])}_{i\in\N}$ such that $\states[1]=\states[1,0]$,
    $\states[1,i]\step{\silent}\states[1,i+1]$ and
    $\states[1,i]\brel\states[2]$ for all $i\in\N$, then there
    exists a state $\state[s_2']$ such that
    $\states[2]\stepsplus{}\state[s_2']$ and
    $\states[1,i]\brel\state[s_2']$ for some $i\in\N$; and
  \item if there exists an infinite sequence
    ${(\states[2,i])}_{i\in\N}$ such that $\states[2]=\states[2,0]$,
    $\states[2,i]\step{\silent}\states[2,i+1]$ and
    $\states[1]\brel\states[2,i]$ for all $i\in\N$, then there
    exists a state $\state[s_1']$ such that
    $\states[1]\stepsplus{}\state[s_1']$ and
    $\state[s_1']\brel\states[2,i]$ for some $i\in\N$.
  \end{enumerate}
  The transition systems $\LTST[1]$ and $\LTST[2]$ are
  \emph{divergence-preserving branching bisimilar} (notation:
  $\LTST[1]\bbisimed\LTST[2]$) if there exists a divergence-preserving
  branching bisimulation from $\LTST[1]$ to $\LTST[2]$ such
  that $\init[1]\brel\init[2]$.
\end{definition}

The notions of branching bisimilarity and divergence-preserving
branching bisimilarity originate with~\cite{GW96}. The particular
divergence conditions we use to define divergence-preserving
branching bisimulations here are discussed in~\cite{GLT09}, where it
is also proved that divergence-preserving branching bisimilarity is an
equivalence.
 


An unobservable transition of an \RTM{}, i.e., a transition labelled
with $\silent$, may be thought of as an internal computation
step. Divergence-preserving branching bisimilarity allows us to
abstract from internal computations as long as they do not discard
the option to execute a certain behaviour. The following notion 
is used as a technical tool in the remainder of the paper.
\begin{definition}\label{def:comp}
 Given some transition system $\LTST$, an \emph{internal computation from
 state $\states$ to $\states'$} is a sequence of states
   $\states[1], \cdots, \states[n]$
 in $\LTST$ such that 
   $\states=\states[1]\step{\silent}\dots\step{\silent}\states[n]=\states'$.
 An internal computation is called \emph{deterministic} iff, for every
 state $\states[i]$ ($1\leq i < n$),
 $\states[i]\step{\acta}\states[i]'$ implies $\acta=\silent$ and
 $\states[i]'=\states[i+1]$.
 If $\states[1], \cdots, \states[n]$  is a deterministic internal
 computation from $\states$ to $\states'$, then we refer to the set
\begin{equation*}
   \{\states[1], \cdots, \states[n]\}
\end{equation*}
 as the set of \emph{intermediate states} of the deterministic
 internal computation.
\end{definition}

\begin{proposition}
  Let $\LTST$ be a transition system and let $\states$ and $\statet$
  be two states in $\LTST$. If there exists a deterministic internal
  computation from $\states$ to $\states'$, then all its intermediate
  states are related by the maximal divergence-preserving branching
  bisimulation on $\LTST$.
\end{proposition}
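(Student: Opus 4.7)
\emph{Proof plan.}
The plan is to exhibit a single divergence-preserving branching bisimulation $\mathcal{R}$ on $\LTST$ that relates every pair of intermediate states of the given deterministic internal computation $\states[1] \step{\silent} \cdots \step{\silent} \states[n]$; by the very definition of the maximal divergence-preserving branching bisimulation as a union of all such bisimulations, the conclusion follows.

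I would take $\mathcal{R}$ to be the smallest equivalence relation on the states of $\LTST$ whose only non-singleton class is $\{\states[1], \ldots, \states[n]\}$, so that $\mathcal{R}$ equates all chain states and is the identity elsewhere, and then verify the clauses of Definition~\ref{def:bbisim} for an arbitrary related pair $(\states[i], \states[j]) \in \mathcal{R}$.

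For the transfer of actions, the determinism assumption on $\states[1],\dots,\states[n-1]$ does nearly all the work. When $i < n$, the only move available at $\states[i]$ is the $\silent$-step to $\states[i+1]$, and it is matched from $\states[j]$ by the stuttering witnesses $(\states[j], \states[j])$, because $\states[i+1]$ is itself a chain state and is therefore $\mathcal{R}$-related to $\states[j]$. When $i = n$, any move $\states[n] \step{\actt} \stateu$ is matched from $\states[j]$ by first chasing the chain $\states[j] \steps{\silent} \states[n]$ and then firing the $\actt$-transition, with both $\stateu$ and its image related to themselves. The termination clauses are handled by the same ``chase-the-chain'' pattern, routing matches through chain states. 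For divergence preservation, any infinite $\silent$-sequence out of $\states[i]$ whose states all remain $\mathcal{R}$-related to $\states[j]$ must stay inside the finite set $\{\states[1], \ldots, \states[n]\}$; by determinism it is forced to follow the chain until it re-enters $\states[n]$ via a $\silent$-step that lands back in the chain, and the required nontrivial $\silent$-path from $\states[j]$ is then either a chain step (when $j < n$) or the just-witnessed $\silent$-exit from $\states[n]$ into the chain (when $j = n$).

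The step I expect to be the main obstacle is the divergence-preservation clause, since that is where the finiteness of the chain has to be combined with the determinism of the intermediate states to guarantee that the right-hand endpoint $\states[n]$ itself supplies the needed $\silent$-move to a related state; the termination clause is a close second, as its verification tacitly depends on how terminating states can occur inside a deterministic $\silent$-chain and whether a terminating chain state is always reachable from $\states[j]$ via further $\silent$-steps.
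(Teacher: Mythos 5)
The paper states this proposition without a proof, so there is nothing to compare against directly; your strategy --- take $\brelsym$ to be the identity extended with $C\times C$ for $C=\{\states[1],\dots,\states[n]\}$ and check the clauses of Definition~\ref{def:bbisim} --- is the natural (essentially the only) one, and your handling of the transfer clauses and of the divergence clauses is correct: determinism pins down the unique move out of $\states[i]$ for $i<n$, and any divergence whose states all stay $\brelsym$-related to a chain state is trapped in the finite set $C$, hence must pass through a $\silent$-step of $\states[n]$ back into $C$, which supplies the required non-empty $\silent$-path from $\states[j]$ in both the $j<n$ and $j=n$ cases.

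The genuine gap is exactly the one you flag and then wave away: the termination clause. Your ``chase-the-chain'' matching for a pair $(\states[i],\states[j])$ with $\term{\states[i]}$ produces a final $\brelsym$-related state reachable from $\states[j]$ by $\steps{}$ only when $j\leq i$ (take $\states[i]$ itself); for $j>i$ the chain only runs forward, and nothing in Definition~\ref{def:comp} guarantees that any of $\states[j],\dots,\states[n]$ is final. Indeed, the proposition is false as literally stated: let $\LTST$ have states $\states[1],\states[2]$, the single transition $\states[1]\step{\silent}\states[2]$, and $\states[1]$ final. Then $\states[1],\states[2]$ is a deterministic internal computation, yet $\term{\states[1]}$ while $\states[2]$ cannot reach any final state, so no branching bisimulation (divergence-preserving or not) relates $\states[1]$ and $\states[2]$. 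Your proof therefore needs the additional hypothesis that none of $\states[1],\dots,\states[n-1]$ is final, which should be stated and used explicitly in the termination clauses; finality of the endpoint $\states[n]$ is harmless, since $\states[n]$ is $\steps{}$-reachable from every chain state. This extra hypothesis does hold in every computation to which the paper applies the proposition (the simulator's only final states are the menu states of the step fragment, which occur only as endpoints of the deterministic computations), but as a statement about arbitrary deterministic internal computations the claim, and hence your proof of it, does not go through.
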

}

\section{Expressiveness of \RTM{}s}\label{sec:expr}

\fullonly{%
Our notion of \RTM{}s defines the class of executable transition
systems. In this section, we investigate the expressiveness of this
notion up to branching bisimilarity, using the notions of effective
transition system and computable transition system as a tool.

In Sect.~\ref{sec:expr:eff}, we recall the definitions of effective
transition system and computable transition system, observe that
executable transition systems are necessarily computable and,
moreover, have a bounded branching degree. Then, we proceed to
consider executable transition systems modulo (divergence-preserving)
branching bisimilarity. We present an example of a (non-effective)
transition system that is not executable up to branching
bisimilarity. Finally, we adapt a result by Phillips
\cite{Phi93} showing that every effective transition system is
branching bisimilar to a computable transition system with branching
degree at most two. Phillips' proof introduces divergence, and we
present an example illustrating that this is unavoidable.

In Sect.~\ref{sec:expr:bbctss},  we construct, for an arbitrary
boundedly branching computable transition system, an \RTM{} that
simulates the behaviour represented by the transition system up to
divergence-preserving branching bisimilarity. Thus, we confirm the
expressiveness of \RTM{}s: modulo divergence-preserving branching
bisimilarity, which is the finest behavioural equivalence in van
Glabbeek's spectrum \cite{Gla93}, the class of executable transition
systems coincides with the class of boundedly branching computable
transition systems. Moreover, in view of Phillips' result, we obtain
as a corrollary that every effective transition system can be
simulated  up to branching bisimilarity at the cost of introducing
divergence.

We obtain two more interesting corollaries from the result
in Sect.~\ref{sec:expr:bbctss}. Firstly, if a transition system is
deterministic, then, by our assumption that the set $\Act$ of action
symbols is finite, it is clearly boundedly branching; hence, every 
deterministic computable transition systems can be simulated, up to
divergence-preserving branching bisimilarity, by a deterministic
\RTM{}. Secondly, the parallel composition of boundedly branching
computable transition systems is clearly boundedly branching and
computable; hence, a parallel composition of \RTM{}s can be simulated,
up to divergence-preserving branching bisimilarity, by a single \RTM{}.
}%

\fctonly{%
We establish in this section that every effective
transition system can be simulated by an \RTM{} up to branching
bisimilarity, and that every boundedly branching computable transition
system can be simulated up to divergence-preserving branching
bisimilarity.  We use this as an auxiliary result to establish that a
parallel composition of \RTM{}s can be simulated by a single \RTM{}.
}

\subsection{Effective and Computable Transition Systems}\label{sec:expr:eff}

Let $\LTST=(\States,\trel,\init,\final)$ be a transition system;
the mapping $\outsetsym:\States\rightarrow 2^{\Actt\times\States}$ 
associates with every state its set of outgoing transitions, i.e., for all
$\states\in\States$,
\fullonly{%
\begin{equation*}
  \outset{\states}=\{(\actt,\statet)\mid \states\step{\actt}\statet\}
\enskip,
\end{equation*}}%
\fctonly{%
  $\outset{\states}=\{(\actt,\statet)\mid \states\step{\actt}\statet\}$,
}
and $\finaltestsym$ denotes the characteristic function of
$\final$. We restrict our attention in this section to
\emph{finitely branching} transition systems, i.e., transition systems
for which it holds that $\outset{\states}$ is finite for all states
$\states$. (The restriction is convenient for our definition of
computable transition system below, but it is otherwise unimportant
since in all our results about computable transition systems we 
further restrict to boundedly branching transition systems. The
restriction is not necessary for the definition of effective
transition system, and, in fact, our results about effective
transition systems do not depend on it.)

\begin{definition}\label{def:lts}
  Let $\LTST=(\States,\trel,\init,\final)$ be an $\Actt$-labelled
  finitely branching transition system.
  We say that $\LTST$ is \emph{effective} if $\trel$ and $\final$ are
  recursively enumerable sets.
  We say that $\LTST$ is \emph{computable} if $\outsetsym$ and
  $\finaltestsym$ are recursive functions.
\end{definition}
The notion of effective transition system originates with Boudol
\cite{Bou85}.
For the notion of computable transition system we have reformulated the
definition in \cite{BBK87} to suit our needs.
\fullonly{%
We temporarily step over the fact that, in order for the formal theory of
recursiveness to make sense, we need suitable codings into natural
numbers of the concepts involved. For now, we rely on the intuition of
the reader; in Sect.~\ref{sec:expr:bbctss} we return to this
issue in more detail.  (The reader may already want to
consult~\cite[\S 1.10]{Rog67} for more explanations.)}
A transition system is effective iff there exists an algorithm that
enumerates its transitions and an algorithm that enumerates its final
states. Similarly, a transition system is computable iff there exists
an algorithm that lists the outgoing transitions of a state and also
determines if it is final.

\begin{proposition}\label{prop:lts-RTM-comput}
  The transition system associated with an \RTM{} is computable.
\end{proposition}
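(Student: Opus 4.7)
The plan is to verify directly that the two characterizing functions, $\outsetsym$ and $\finaltestsym$, of the associated transition system $\lts{\TM{}}$ are recursive. The key observation is one of pure finiteness: by Definition~\ref{def:RTM}, an \RTM{} $\TM{}=(\States, \trel, \init, \final)$ has a finite state set, a finite transition relation (being a subset of the finite product $\States\times\Datab\times\Actt\times\Datab\times\{\tpmvL,\tpmvR\}\times\States$), and a finite set of final states. All of this finite data can be hard-coded into a recursive procedure.

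First I would fix a standard effective encoding of configurations as natural numbers. Since states lie in a fixed finite set and tape instances are finite strings over the finite alphabet $\Datab\cup\tphd{\Datab}$, Gödel-numbering the pair $(\states,\tpstrd)$ is entirely routine. Under this encoding, decoding a configuration back into its state component and its tape instance (with the position of the unique marked symbol) is also recursive.

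Next I would show that $\finaltestsym$ is recursive on configurations: by Definition~\ref{def:RTMopsem}, $(\states,\tpstrd)$ is final iff $\states\in\final$, which is decided by a finite table lookup. For $\outsetsym$, given a configuration $(\states,\tpstrd)$, the algorithm decodes $\tpstrd$ to locate the unique marked symbol $\tphd{\datumd}$, then scans the finite table for $\trel$ to collect all tuples of the form $(\states,\datumd,\actt,\datume,M,\statet)$. For each such tuple, the successor tape instance is constructed by replacing $\tphd{\datumd}$ with $\datume$ and marking the neighbour on the left or right according to $M$, appending $\tphd{\tpblank}$ if the head would move off either end, exactly as prescribed by the clauses $\tphdL{}$ and $\tphdR{}$ in the operational semantics. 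This yields a finite list of pairs $(\actt,(\statet,\tpstrd'))$, i.e.\ the value of $\outset{(\states,\tpstrd)}$.

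No step involves unbounded search: the loop over $\trel$ is finite, string manipulation on $\tpstrd$ is trivially computable, and membership in $\final$ is decidable. I do not anticipate any real obstacle; the statement is essentially the observation that the finite-control, local-rewrite nature of an \RTM{} makes its one-step successor relation computable, mirroring the same well-known fact for classical Turing machines.
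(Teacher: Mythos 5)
Your proposal is correct and follows exactly the route the paper intends: the paper omits a formal proof, remarking only that Definition~\ref{def:RTMopsem} already describes the essence of the algorithms for computing the outgoing transitions of a configuration and for deciding finality, which is precisely what you spell out (finite transition table, effective coding of configurations, local rewriting of the tape instance). No discrepancy to report.
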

\begin{proof}
  We omit a formal proof, but note that Definition~\ref{def:RTMopsem}
  describes the essence of algorithms for computing the
  outgoing transitions of a configuration and for determining if a
  configuration is final.
\end{proof}

Hence, unsurpisingly, if a transition system is not computable, then
it is not executable either. It is easy to define
transition systems that are not computable,
so there exist behaviours that are not executable. The
\fctonly{full version of this paper \cite{BLT11full} contains an}%
\fullonly{following}
example
\fullonly{takes this a little further and}%
\fctonly{that}
illustrates that there exist behaviours that are not even executable up to branching
bisimilarity.

\fullonly{%
\begin{example}\label{ex:notexecbbisim}
   (In this and later examples, we denote by $\varphi_x$ the partial recursive
  function with index $x\in\N$ in some exhaustive enumeration of
  partial recursive functions, see, e.g.,~\cite{Rog67}.)
  Assume that $\Act=\{\acta,\actb,\actc\}$ and consider the
  $\Act$-labelled transition system
     $\LTST[0]=(\StatesS[0],\trel[0],\init[0],\final[0])$
  with $\StatesS[0]$, $\trel[0]$, $\init[0]$ and $\final[0]$ defined by
  \begin{gather*}
    \StatesS[0]=\{\states,\statet,\stateu,\statev,\statew\}
                          \cup\{\states[x]\mid x \in \N\}\enskip,
\fctonly{\quad\init[0]=\states\enskip,\quad
  \final[0]=\{\statev,\statew\}\enskip,\ \text{and}}
\\
    \trel[0]=\{(\states,\acta,\statet),(\statet,\acta,\statet),(\statet,\actb,\statev),
                 (\states,\acta,\stateu),(\stateu,\acta,\stateu),(\stateu,\actc,\statew)\}\\
\qquad\qquad\mbox{}\cup
                 \{(\states,\acta,\states[0])\}
                    \cup
                 \{(\states[x],\acta,\states[x+1])\mid x\in\N\} \\
\qquad\qquad\mbox{}\cup
                 \{(\states[x],\acta,\statet),(\states[x],\acta,\stateu)\mid
                   \text{$\varphi_x$ is a total function}\}
\fctonly{\enskip.}%
\fullonly{%
\enskip,\\
    \init[0]=\states\enskip,\ \text{and}\\
    \final[0]=\{\statev,\statew\}\enskip.}
  \end{gather*}}%
\fullonly{The transition system is depicted in Fig.~\ref{fig:ltst0}.

  \begin{figure}[htb]
   \centering\small
   \begin{transsys}(75,30)(0,-5)
    \graphset{Nadjust=n,Nw=4,Nh=4,fangle=-180}
    \node[Nmarks=i](s)(15,10){$\states$}
    \node(t)(15,20){$\statet$}
    \node[Nmarks=f](v)(0,20){$\statev$}
    \node(s0)(30,10){$\states[0]$}
    \node(u)(15,0){$\stateu$}
    \node[Nmarks=f](w)(0,0){$\statew$}
    \node(s1)(45,10){$\states[1]$}
    \node(s2)(60,10){$\states[2]$}
    \node[Nframe=n,Nh=8,Nw=8](s_t)(75,10){}
    {\scriptsize
     \edge[ELside=r](s,t){$\acta$}
     \edge*[loopangle=90](t){$\acta$}
     \edge(t,v){$\actb$}
     \edge(s,u){$\acta$}
     \edge*[loopangle=-90](u){$\acta$}
     \edge(u,w){$\actc$}
     \edge(s,s0){$\acta$}
     \edge(s0,s1){$\acta$}
     \edge(s1,s2){$\acta$}
     \graphset{dash={1 0}{}}
     \edge(s2,s_t){$\acta$}
     \drawbpedge[ELpos=30,ELside=r](s0,135,10,t,-30,10){$\acta$}
     \drawbpedge[ELpos=30](s0,-135,10,u,30,10){$\acta$}
     \drawbpedge[ELpos=20,ELside=r](s1,135,10,t,0,10){$\acta$}
     \drawbpedge[ELpos=20](s1,-135,10,u,0,10){$\acta$}
     \drawbpedge[ELpos=10,ELside=r](s2,135,10,t,0,10){$\acta$}
     \drawbpedge[ELpos=10](s2,-135,10,u,0,10){$\acta$}
     \drawbpedge[ELside=r](s_t,120,10,t,0,10){$\acta$}
     \drawbpedge[dash={1 0}{}](s_t,-120,10,u,0,10){$\acta$}
    }
   \end{transsys}
   \caption{The transition system $\LTST[0]$.}\label{fig:ltst0}
  \end{figure}

  To argue that $\LTST[0]$ is not executable up to branching
  bisimilarity, we proceed by contradiction.
  Suppose that $\LTST[0]$ is executable up to branching bisimilarity.
  Then $\LTS[T_0]$ is branching
  bisimilar to a computable transition system $\LTS[T_0']$. Then, in
  $\LTS[T_0']$, the set of states reachable by a path that contains
  exactly $x$ $\acta$-transitions ($x\in\N$) and from which both a
  $\actb$- and a $\actc$-transition are still reachable, is
  recursively enumerable. It follows that the set of states in
  $\LTS[T_0']$ branching bisimilar to $\states[x]$ ($x\in\N$) is
  recursively enumerable. But then, since the problem of deciding
  whether from some state in $\LTS[T_0']$ there is a path containing
  exactly one $a$-transition and one $b$-transition such that the
  $a$-transition precedes the $b$-transition, is also recursively
  enumerable, it follows that the problem of deciding whether
  $\varphi_x$ is a total function must be recursively enumerable too, 
  which it is not.
  We conclude that $\LTST[0]$ is not executable up to branching
  bisimilarity. Incidentally, note that the language associated with
  $\LTST[0]$ is $\{a^n b,a^n c\mid n\geq 1\}$, which \emph{is}
  recursively enumerable (it is even context-free).
\end{example}
}

Phillips associates, in~\cite{Phi93}, with every effective transition
system a \emph{branching bisimilar} computable transition system of
which, moreover, every state has a branching degree of at most
$2$. (Phillips actually establishes weak bisimilarity, but it is easy to see
that branching bisimilarity holds.)

\begin{definition}
Let $\LTST=(\States,\trel,\init,\final)$ be a transition system, and
let $B$ be a natural number. We say that $\LTST$ has a branching degree
\emph{bounded by $\bdegbound$} if
  $|\outset{\states}|\leq\bdegbound$, for every state $\states\in\States$.
We say that $\LTST$ is \emph{boundedly branching} if there exists
$\bdegbound\in\N$ such that the branching degree of $\LTST$ is bounded
by $\bdegbound$.
\end{definition}

\begin{proposition}[Phillips]\label{prop:Phillips}
  For every effective transition system $\LTST$ there exists a boundedly
  branching computable transition system $\LTST'$ such that
    $\LTST\bbisim\LTST'$.
\end{proposition}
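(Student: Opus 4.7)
The plan is to construct, for any effective transition system $\LTST=(\States,\trel,\init,\final)$, a boundedly branching computable transition system $\LTST'$ with $\LTST\bbisim\LTST'$, by presenting the (recursively enumerable) transitions and final-state certificates \emph{one at a time} along a chain of $\silent$-steps out of each state.

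First, I would fix total recursive enumerations $e\colon\N\to\States\times\Actt\times\States$ of $\trel$ and $g\colon\N\to\States$ of $\final$, reshuffled by the standard pairing trick so that every element of $\trel$ occurs as $e(i)$ for infinitely many $i$ and every element of $\final$ occurs as $g(i)$ for infinitely many $i$. Then I define $\LTST'=(\States\times\N,\trel',(\init,0),\final')$ by taking, for each $s\in\States$ and $i\in\N$, the transition $(s,i)\step{\silent}(s,i+1)$; and, whenever $e(i)=(s,\actt,t)$ for some $\actt\in\Actt$, the additional transition $(s,i)\step{\actt}(t,0)$. A state $(s,i)$ is declared final exactly when $g(i)=s$. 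Each state of $\LTST'$ then has at most two outgoing transitions, so its branching degree is bounded by $2$; and because $e$ and $g$ are recursive, the outgoing-transitions set and the final-state test at $(s,i)$ are computable from $(s,i)$, making $\LTST'$ computable.

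To witness $\LTST\bbisim\LTST'$, I would take the relation $\brelsym=\{(s,(s,i))\mid s\in\States,\ i\in\N\}$ and verify the four clauses of branching bisimulation. The chain transitions $(s,i)\step{\silent}(s,i+1)$ are inert under $\brelsym$, matched on the $\LTST$-side by the empty move from $s$ via the $\optstep$-clause; every non-chain transition $(s,i)\step{\actt}(t,0)$ reflects, by construction of $\trel'$, the $\LTST$-transition $s\step{\actt}t$; conversely, any $s\step{\actt}t$ of $\LTST$ is matched from $(s,i)$ by $\silent$-stepping along the chain to some $(s,j)$ with $j\geq i$ and $e(j)=(s,\actt,t)$, then firing $(s,j)\step{\actt}(t,0)$. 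Termination is handled symmetrically using $g$: $\term{s}$ is matched from $(s,i)$ by chaining up to $(s,j)$ with $g(j)=s$, while conversely $(s,i)$ being final in $\LTST'$ forces $g(i)=s$ and hence $\term{s}$.

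The main obstacle is precisely the last kind of matching step: to cover a move of $\LTST$ from an arbitrary intermediate state $(s,i)$, the relevant transition or final-state certificate must still surface at some index $j\geq i$ in the enumeration. This is exactly what the infinite-repetition reshuffling of $e$ and $g$ secures; without it, a certificate enumerated once at some $j<i$ would be forever lost from $(s,i)$ onwards and the relation would fail to be a bisimulation. Note that $\LTST'$ introduces infinite $\silent$-chains that are absent from $\LTST$, so the bisimulation built this way will \emph{not} be divergence-preserving; this is unavoidable for general effective systems and is what motivates the sharper treatment restricting to boundedly branching computable transition systems in the remainder of the section.
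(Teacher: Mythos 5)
Your construction is correct and is essentially the argument the paper relies on: the proposition is cited from Phillips, and the paper explicitly identifies the key idea as using a divergence (an infinite $\silent$-chain) to present, one at a time, the recursively enumerated outgoing transitions and final-state certificates of a state --- which is exactly what your $(s,i)\step{\silent}(s,i+1)$ chains with infinitely-repeating enumerations $e$ and $g$ do. Your closing remarks on the branching bound of $2$, the loss of divergence-preservation, and its unavoidability also match the paper's discussion surrounding the proposition.
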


A crucial insight in Phillips' proof is that a divergence (i.e., an
infinite sequence of $\silent$-transitions) can be exploited to
simulate a state of which the set of outgoing transitions is
recursively enumerable, but not recursive. The following example,
inspired by \cite{Dar89}, shows that introducing divergence is
unavoidable.

\begin{example}\label{ex:divergence}
\fctonly{%
   (In this and later examples, we denote by $\varphi_x$ the partial recursive
  function with index $x\in\N$ in some exhaustive enumeration of
  partial recursive functions, see, e.g.,~\cite{Rog67}.)%
}
  Assume that $\Act=\{\acta,\actb\}$,
  and consider the transition system
    $\LTST[1]=(\StatesS[1],\trel[1],\init[1],\final[1])$
  with $\StatesS[1]$, $\trel[1]$, $\init[1]$ and $\final[1]$ defined by
  \begin{gather*}
    \StatesS[1]=\{\states[1,x],\ \statet[1,x]\mid x \in \N\}\enskip,
\fctonly{\quad\init[1]=\states[1,0]\enskip,}
\\
    \trel[1]=\{(\states[1,x],\acta,\states[1,x+1])\mid x\in\N\}\cup
             \{(\states[1,x],\actb,\statet[1,x])\mid x \in\N \}\enskip,
\fullonly{%
\\
    \init[1]=\states[1,0]\enskip,}\ \text{and}\\
    \final[1]=\{\statet[1,x]\mid\text{$\varphi_x(x)$ converges}\}\enskip.
  \end{gather*}
\fullonly{The transition system is depicted in Fig.~\ref{fig:ex-lts-eff}.

  \begin{figure}[htb]
    \begin{center}
    \begin{transsys}(60,15)(0,5)
      \graphset{fangle=-90}
      \node[Nmarks=i](s1_0)(0,16){$\states[1,0]$}
      \node(s1_1)(15,16){$\states[1,1]$}
      \node(s1_2)(30,16){$\states[1,2]$}
      \node(s1_3)(45,16){$\states[1,3]$}
      \node[Nframe=n](s1_t)(60,16){}
      \node(t_0)(0,6){$\statet[1,0]$}
      \node(t_1)(15,6){$\statet[1,1]$}
      \node(t_2)(30,6){$\statet[1,2]$}
      \node(t_3)(45,6){$\statet[1,3]$}
      \graphset{Nframe=n,Nw=0,Nh=0,Nadjust=n}
      \node(t_0t)(0,0){}
      \node(t_1t)(15,0){}
      \node(t_2t)(30,0){}
      \node(t_3t)(45,0){}
      {\scriptsize
       \edge(s1_0,s1_1){$\acta$}
       \edge(s1_1,s1_2){$\acta$}
       \edge(s1_2,s1_3){$\acta$}
       \edge[dash={1 0}{.2}](s1_3,s1_t){$\acta$}
       \edge(s1_0,t_0){$\actb$}
       \edge(s1_1,t_1){$\actb$}
       \edge(s1_2,t_2){$\actb$}
       \edge(s1_3,t_3){$\actb$}
       \graphset{dash={.5 0}{}}
       \edge(t_0,t_0t){}
       \edge(t_1,t_1t){}
       \edge(t_2,t_2t){}
       \edge(t_3,t_3t){}
      }
    \end{transsys}
    \end{center}
    \caption{The transition system $\LTST[1]$.}\label{fig:ex-lts-eff}
  \end{figure}}
\fctonly{%
  If $\LTST[2]$ is a transition system such that
    $\LTST[1]\bbisimed\LTST[2]$,
  as witnessed by some divergence-preserving branching bisimulation
  relation $\brel$, then it can be argued that $\LTST[2]$ is not
  computable. A detailed argument can be found in the full version~\cite{BLT11full}.
}%
\fctsubmissiononly{%
  For the convenience of the reviewer,
  the example is repeated in Appendix~\ref{app:effectivedivergence}
  with a more detailed argument.
}

\fullonly{%
  Now, suppose that $\LTST[2]$ is a transition system such that
    $\LTST[1]\bbisimed\LTST[2]$,
  as witnessed by some divergence-preserving branching bisimulation
  relation $\brel$; we argue that $\LTST[2]$ is not computable by
  deriving a contradiction from the assumption that it is.

  Clearly, since $\LTST[1]$ does not admit infinite
  sequences of $\silent$-transitions, if $\brel$ is
  divergence-preserving, then $\LTST[2]$ does not admit infinite
  sequences of $\silent$-transitions either. It follows that if
     $\states[1]\brel\states[2]$,
  then there exists a state $\state[s_2']$ in $\LTST[2]$ such that
    $\states[2]\steps[2]{}\state[s_2]$,
    $\states[1]\brel\state[s_2']$,
  and
    $\state[s_2']\nstep{\silent}$.
  Moreover, since $\LTST[2]$ is computable and does not admit infinite
  sequences of consecutive $\silent$-transitions, a state $\state[s_2']$
  satisfying the aforementioned properties is produced by the algorithm
  that, given a state of $\LTST[2]$, selects an enabled
  $\silent$-transition and recurses on the target of the transition
  until it reaches a state in which no $\silent$-transitions are enabled.

  But then we also have an algorithm that determines if $\varphi_x(x)$
  converges:
  \begin{enumerate}
  \item it starts from the initial state $\init[2]$ of $\LTST[2]$;
  \item it runs the algorithm to find a state without outgoing
      $\silent$-transitions, and then it repeats the following steps
      $x$ times:
    \begin{enumerate}
    \item execute the $\acta$-transition enabled in the reached state;
    \item run the algorithm to find a state without outgoing
      $\silent$-transitions again;
    \end{enumerate}
    since $\init[1]\brel\init[2]$, this yields a state $\states[2,x]$
    in $\LTST[2]$ such that $\states[1,x]\brel\states[2,x]$;
  \item it executes the $\actb$-transition that must be enabled
    in $\states[2,x]$, followed, again, by the algorithm to find a state
    without outgoing $\silent$-transitions; this yields a state
    $\statet[2,x]$, without any outgoing transitions, such that
    $\statet[1,x]\brel\statet[2,x]$.
  \end{enumerate}

  From $\statet[1,x]\brel\statet[2,x]$ it follows that
  $\statet[2,x]\in\final[2]$ iff $\varphi_x(x)$ converges, so the problem
  of deciding whether $\varphi_x(x)$ converges has been reduced to the
  problem of deciding whether $\statet[2,x]\in\final[2]$. Since it is
  undecidable if $\varphi_x(x)$ converges, it follows that $\final[2]$ is
  not recursive, which contradicts our assumption that $\LTST[2]$ is
  computable.
}
\end{example}

\subsection{Simulation of Boundedly Branching Computable Transition Systems}%
\label{sec:expr:bbctss}%

Let $\LTST = (\StatesS[\LTST], \trel[\LTST], \init[\LTST], \final[\LTST])$
be a boundedly branching computable transition system, say with branching
degree bounded by $\bdegbound$.
\fullonly{%
Our goal is to construct an \RTM{}
\fctonly{%
  $\BBCTSS = (\StatesS[\BBCTSS], \trel[\BBCTSS], \init[\BBCTSS],
  \final[\BBCTSS])$,
}\fullonly{%
\begin{equation*}
  \BBCTSS = (\StatesS[\BBCTSS], \trel[\BBCTSS], \init[\BBCTSS],
  \final[\BBCTSS])
\enskip,
\end{equation*}}
called the \emph{simulator} for $\LTST$, such that
 $\lts{\BBCTSS}\bbisimed\LTST$.

We have defined that $\LTST$ is computable if the associated mappings
  $\outsetsym$
and
  $\finaltestsym$
are recursive functions.
As is explained in \cite[\S 1.10]{Rog67}, the formal theory of
recursiveness can be applied to non-numerical functions (i.e., functions of which
domain and range are not the set of natural numbers), through codings
associating a unique natural number with every symbolic entity. In our
case, we fix codings of $\Actt$ and $\States$, i.e., injections
  $\actcodesym:\Actt\rightarrow\N$
and
  $\statecodesym:\States\rightarrow\N$
into the set of natural numbers $\N$.
We use these codings, and standard techniques for coding and decoding
tuples of natural numbers and finite sets of natural
numbers\footnote{%
  See, e.g., \cite[\S 5.3]{Rog67} and \cite[\S 5.6]{Rog67},
  respectively.
}
to define partial recursive functions $\outsetsym$ and $\finaltestsym$
on natural numbers:
\begin{itemize}
\item $\outsetsym:\N\rightharpoonup\N$ is the partial function
  that, for all states $\states$, maps $\statecode{\states}$ to the code
  associated with $\outset{\states}$ and is undefined for all natural
   numbers that are not codes of states; and
\item $\finaltestsym:\N\rightharpoonup\N$ is the partial function
     that maps $\statecode{\states}$ to $\finaltestsym(\states)$\footnote{Recall that
  $\finaltestsym$ on states is a characteristic function and hence
  already yields a natural number.} and is undefined on natural
  numbers that are not codes of states.
\end{itemize}
}

\fctonly{%
It is reasonably straightforward to construct an \RTM{}
 $\BBCTSS = (\StatesS[\BBCTSS], \trel[\BBCTSS], \init[\BBCTSS], \final[\BBCTSS])$,
which we call the \emph{simulator} for $\LTST$, such that
  $\lts{\BBCTSS}\bbisimed\LTST$.
The construction is detailed in~\cite{BLT11full}}\fctsubmissiononly{; here we  only give a brief
description}\fctonly{.}\fctsubmissiononly{ The simulator $\BBCTSS$ for $\LTST$ consists of three parts.
\begin{enumerate}
\item The \emph{initialiation fragment} prepares the tape of $\BBCTSS$
  for the simulation of $\LTST$. It starts by writing codes of the
  transition relation and the set of final states on a reserved
  portion of the tape. Then, the code of the initial state is written
  on another part of the tape, reserved for storing the \emph{current
      state} of the simulation.
\item The \emph{state fragment} computes what is called the
  \emph{menu} of the state, i.e., it computes whether the current state is
  terminating and what are its possible next transitions. This
  information is then written on the tape in coded form.
\item The \emph{step fragment} selects a transition to execute among all possible next transitions. Only in the construction of this fragment
  we actually use our assumption that $\LTST$ is boundedly
  branching. A crucial aspect of branching bisimilarity is that the
  choice for the next transition should be made in a single
  state. Thus, the state fragment includes a state $\states[m]$ for every possible
  menu $m$, and starts by decoding the menu of the tape, moving to the
  state corresponding with the menu on the tape. The state
  $\states[m]$ is a final state only if it is final according to the
  menu $m$, and it has exactly as many outgoing $\act$-transition as
  specified by the menu $m$. After executing one of the transitions
  specified by the menu, the step fragment computes the next current
  state and returns to the state fragment.
\end{enumerate}
}

\fullonly{%
For the remainder of this paper we fix an enumeration of the partial
recursive functions, and we denote by $\code{\outsetsym}$ and
$\code{\finaltestsym}$ indices of the partial recursive functions
$\outsetsym:\N\rightharpoonup\N$ and
$\finaltestsym:\N\rightharpoonup\N$ in this enumeration. Instead of
hardcoding computations of $\outsetsym$ and $\finaltestsym$ in the
\RTM{} $\BBCTSS$ to be constructed, we prefer to store their codes
$\code{\outsetsym}$ and $\code{\finaltestsym}$ on the tape and
interpret these codes.  This is slightly more generic than necessary
for simulation of the presupposed transition system $\LTST$, but the
genericity will pay off when we extend the simulator
to obtain a universal \RTM{} in Sect.~\ref{sec:universal}.

We are going to define $\BBCTSS$ as the union of three fragments, each
with a different purpose.
\begin{enumerate}
\item The \emph{initialiation fragment} prepares the tape of $\BBCTSS$
  for the simulation of $\LTST$, writing the codes $\code{\outsetsym}$,
  $\code{\finaltestsym}$ and $\statecode{\init[\LTST]}$ to the tape.
\item In the \emph{state fragment} the behaviour in the current state
  (i.e., whether it is terminating and what are its possible next
  transitions) is computed, and stored on the tape in coded form.
\item The \emph{step fragment} first decodes the information on the
  tape about the behaviour of the current state as computed in the
  state fragment, moving to a special selection state of $\BBCTSS$ that
  corresponds with the coded behaviour. (A crucial aspect of branching
  bisimilarity is that the choice for the next transition should be
  made in a single state. By our assumption that $\LTST$ is boundedly
  branching, we need to include only finitely many such selection states.)
  The aforementioned selection state has each of the possible outgoing
  transitions. After executing one of these transitions, the code of
  the target state, being the new current state, is written on the tape.
\end{enumerate}

\todo{Revise following paragraph}
Below we present a detailed description of the construction of
$\BBCTSS$. We first briefly discuss how we use the tape
to store information regarding the current state and its
behaviour. The implementation of the fragments involve several
straightforward computational tasks on the contents of 
the tape. We do not dwell on the details of implementing these
tasks; we just presuppose the existence of auxiliary
deterministic Turing machines capable of carrying them out.
Then, we discuss the implementation of the three
fragments decribed above. 

\paragraph{Tape}
In the above, we have declared codes for actions and states, and for
the partial recursive functions
  $\outsetsym:\N\rightharpoonup\N$
and
  $\finaltestsym:\N\rightharpoonup\N$.
The way in which natural numbers are represented as sequences over
some finite alphabet of tape symbols is largely irrelevant, but in our
construction below it is sometimes convenient to have an explicit
representation.  In such cases, we assume that numbers are stored in
unary notation using the symbol $\tpunit$.  That is, a natural number
$n$ is represented on the tape as the sequence $\tpunit^{n+1}$ of
$n+1$ occurrences of the symbol $\tpunit$.  
In addition to the symbol $\tpunit$, 
we use the symbols $\tpleft$ and $\tpright$ to delimit the codes
of $\outsetsym$ and $\finaltestsym$ that remain on the tape
throughout the simulation, $\tpsepc$ to separate the elements of a
tuple of natural numbers, and $\tpsep$ to separate tuples.
The simulator $\BBCTSS$ constructed below incorporates the
operation of some auxiliary Turing machines that may require the use
of some additional symbols; let $\Data'$ be the collection of all
these extra symbols. Then the tape alphabet $\Data$ of $\BBCTSS$ is
\begin{equation*}
  \Data=\{\tpunit,\tpleft,\tpright,\tpsepc,\tpsep\}\cup\Data'
 \enskip.
\end{equation*}

\paragraph{Auxiliary Turing machines}

\todo{Perhaps the definitions below should be integrated in Sect.~\ref{sec:rtm}.}

For our purposes, it is convenient to define a \emph{deterministic
  Turing machine} \TMM{} as a quadruple
  $\TMM=(\StatesS[\TMM{}],\trel[\TMM{}],\init[\TMM{}],\final[\TMM{}])$
with
  $\StatesS$ its set of \emph{states}, 
\begin{equation*}
  \trel[\TMM]\subseteq\StatesS[\TMM{}]\times\Datab\times\Datab\times\{\tpmvL,\tpmvR\}\times\StatesS[\TMM{}]
\end{equation*}
  its \emph{transition relation},
  $\init[\TMM{}]$ its \emph{initial state}, and
  $\final[\TMM{}]$ its (unique) \emph{final state}.
We assume that $\TMM{}$ satisfies the following requirements:
\begin{enumerate}
\renewcommand{\theenumi}{\roman{enumi}}
\renewcommand{\labelenumi}{(\theenumi)}
\item for every pair
  $(\states,\datumd)\in(\StatesS\setminus\{\final[\TMM]\})\times\Datab$
  there is precisely one triple
    $(\datume,\tpmv{M},\states')\in\Datab\times\{\tpmvL,\tpmvR\}\times\StatesS$ 
  such that
    $(\states,\datumd,\datume,\tpmv{M},\states')\in\trel[\TMM{}]$; and
\item there do not exist $\datumd, \datume\in\Datab$,
  $\tpmv{M}\in\{\tpmvL,\tpmvR\}$ and $\states\in\StatesS$ such that
  $(\final[\TMM{}],\datumd,\datume,\tpmv{M},\states)\in\trel[\TMM{}]$.
\end{enumerate}
Our definition of deterministic Turing machine is non-standard in
assuming that whenever it halts, it does so in the special
distinguished final state. This assumption is convenient when we
incorporate the functionality implemented by a Turing machine in the
definition of our simulator, below. To be convinced that our
assumption does not limit the computational expressiveness of our
notion of Turing machine, the reader may want to compare our variant
with the one described in \cite[p.\ 13--16]{Rog67}. The latter does not
have a distinguished halting state, but to convert it to one that
satisfies our requirements, it suffices to add (in the notation of
\cite{Rog67})  an internal state $q_f$ and a quadruple
$q_i,\datumd,\datumd,q_f$ for every combination of $q_i$ and $\datumd$
not already appearing as first two elements of a quadruple.

Note that a Turing machine can be viewed as an \RTM{} without the
$\Actt$ labels associated with transitions (see
Definition~\ref{def:RTM}) and with a singleton set of final
states. Similarly as for \RTM{}s (see Definition~\ref{def:RTMconf}), a
\emph{configuration} of a Turing machine is a pair $(\states,\tpstrd)$
consisting of a state $\states\in\StatesS$ and a tape instance
$\tpstrd$, and the transition relation $\trel[\TMM{}]$ of $\TMM{}$ induces
an unlabelled transition relation $\step{}$ on configurations, defined
as in Definition~\ref{def:RTMopsem} (simply omit all references to
$\Actt$ and its elements).

Let $\tpstrd$ and $\tpstrd'$ be tape instances.
By an \emph{$\TMM{}$-computation} from $\tpstrd$ to $\tpstrd'$ we
understand a sequence of configurations
\begin{equation*}
  (\states[0],\tpstrd[0]),\dots,(\states[n],\tpstrd[n])
\end{equation*}
such that
  $\states[0]=\init[\TMM{}]$,
  $\tpstrd[0]=\tpstrd$
  $\states[n]=\final[\TMM{}]$,
  $\tpstrd[n]=\tpstrd'$, and
  $(\states[i],\tpstrd[i])\step{}(\states[i+1],\tpstrd[i+1])$ for all
  $0\leq i< n$.
We write $\tpstrd\TMcomp[\TMM{}]\tpstrd'$ if there exists an
$\TMM{}$-computation from $\tpstrd$ to $\tpstrd'$.


\paragraph{Initialisation fragment}

Note that it is straightforward to define a conventional
deterministic Turing machine
  $\INIT=(\StatesS[\INIT],\trel[\INIT],\init[\INIT],\final[\INIT])$
that, when started on an empty tape, writes the given natural numbers
$\code{\outsetsym}$, $\code{\finaltestsym}$ and
$\statecode{\init[\LTST]}$ on the tape in a suitable representation,
yielding the tape instance
\begin{equation*}
  \tpleft\code{\outsetsym}\tpsepc\code{\finaltestsym}\tphd{\tpright}
    \statecode{\init[\LTST]}
\enskip.
\end{equation*}
We use $\INIT$ to define the \emph{initialisation fragment}
$\InF$. The set of states of $\InF$ is defined as
\begin{equation*}
  \InFStates=
    \StatesS[\INIT]\setminus\final[\INIT]
\enskip,
\end{equation*}
its initial state is defined as
\begin{equation*}
  \InFinit=\init[\INIT]
\enskip;\ \text{and}
\end{equation*}
its set of transitions is defined as
\begin{align*}
  \InFtrel=\mbox{}
    & \{(\InFstate,\datumd,\silent,\datume,\tpmv{M},\InFstate')
              \mid(\InFstate,\datumd,\datume,\tpmv{M},\InFstate')\in\trel[\INIT],\
                  \InFstate'\in\StatesS[\INIT]\setminus\final[\INIT]\}\\
    &\ \mbox{} \cup
           \{(\InFstate,\datumd,\silent,\datume,\tpmv{M},\StFinit)
             \mid(\InFstate,\datumd,\datume,\tpmv{M},\InFstate')\in\trel[\INIT],\
                  \InFstate'\in\final[\INIT]\} 
\enskip.
\end{align*}
(Note that $\StFinit$ is not a state in $\StatesS[\INIT]$; it is the
initial state of the state fragment to be defined next.)

%

\begin{fact}\label{fact:confs:init}
 The fragment $\InF$ gives rise to a deterministic internal computation from
   $(\InFinit, \tphd{\tpblank})$
 to
   $(\StFinit, \tpleft\code{\outsetsym}\tpsepc\code{\finaltestsym}
               \tphd{\tpright}
               \statecode{\init[\LTST]})$;
  we denote its set of intermediate states by $\IS{\InF}$.
\end{fact}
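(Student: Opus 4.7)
The plan is to derive the claim directly from the construction of $\InF$ and the determinism of the underlying Turing machine $\INIT$. First, I would invoke the fact that $\INIT$ is a deterministic Turing machine that, by design, gives rise to a unique $\INIT$-computation
\[
  (\init[\INIT],\tphd{\tpblank}) =
  (\states[0],\tpstrd[0])\step{}\dots\step{}(\states[n],\tpstrd[n])
\]
with $\states[n]=\final[\INIT]$ and $\tpstrd[n]=\tpleft\code{\outsetsym}\tpsepc\code{\finaltestsym}\tphd{\tpright}\statecode{\init[\LTST]}$. This computation exists and is unique by requirement (i) on deterministic Turing machines together with the specification of $\INIT$, and it terminates because $\INIT$ was constructed to halt after writing the required content on the tape.

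Next, I would translate this $\INIT$-computation into an internal computation in $\InF$. For each step $(\states[i],\datumd,\datume,\tpmv{M},\states[i+1])\in\trel[\INIT]$ with $0\leq i<n-1$, the definition of $\InFtrel$ provides a matching transition $(\states[i],\datumd,\silent,\datume,\tpmv{M},\states[i+1])$ (via the first clause, since $\states[i+1]\notin\final[\INIT]$). For the last step $(\states[n-1],\datumd,\datume,\tpmv{M},\final[\INIT])$, the second clause of $\InFtrel$ provides instead a $\silent$-labelled transition from $\states[n-1]$ to $\StFinit$. By the transition system semantics of Definition~\ref{def:RTMopsem}, these $\silent$-transitions of $\InF$ induce precisely the sequence of $\silent$-labelled configuration transitions
\[
  (\InFinit,\tphd{\tpblank})\step{\silent}\cdots\step{\silent}
    (\StFinit, \tpleft\code{\outsetsym}\tpsepc\code{\finaltestsym}
               \tphd{\tpright}\statecode{\init[\LTST]}),
\]
which is the required internal computation.

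Finally, I would verify that this internal computation is deterministic in the sense of Definition~\ref{def:comp}. At every intermediate configuration $(\states[i],\tpstrd[i])$ with $\states[i]\in\StatesS[\INIT]\setminus\final[\INIT]$, the outgoing transitions in $\InF$ correspond, by construction, exactly to the outgoing $\INIT$-transitions from $\states[i]$ on tape symbol $\datumd$. Since $\INIT$ is deterministic (requirement (i)), there is a unique such transition, and by construction of $\InFtrel$ its label is $\silent$. Hence at each $(\states[i],\tpstrd[i])$ the only enabled transition is the $\silent$-transition to $(\states[i+1],\tpstrd[i+1])$ (or, at the last step, to $(\StFinit,\tpstrd[n])$), so the computation meets the determinism condition. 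I expect no real obstacle here: once the bookkeeping between $\INIT$-steps and $\InF$-transitions is set up, both existence and determinism follow mechanically from the deterministic structure of $\INIT$ and the two clauses defining $\InFtrel$.
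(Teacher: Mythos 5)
Your proposal is correct and matches the argument the paper intends: the paper states this as a \emph{Fact} without an explicit proof, treating it as an immediate consequence of the construction of $\InF$ from the deterministic Turing machine $\INIT$ (whose requirement (i) gives uniqueness of each step, and whose two-clause relabelling into $\InFtrel$ gives the $\silent$-labels and the final redirection to $\StFinit$). Your write-up simply makes this bookkeeping explicit, and does so correctly.
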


\paragraph{State fragment}

The \emph{state fragment} $\StF$ replaces the code of the current
state on the tape by a sequence of codes that represents the
behaviour of $\LTST$ in the current state. It is assumed that it starts
with a tape instance of the form
  $\tpleft\code{\outsetsym}\tpsepc\code{\finaltestsym}\tphd{\tpright}
   \statecode{\states}$
for some
  $\states\in\StatesS[\LTST]$.

Recall that $\code{\outsetsym}$ and $\code{\finaltestsym}$ are indices
of the partial recursive functions $\outsetsym:\N\rightarrow\N$ and
$\finaltestsym:\N\rightarrow\N$ in some fixed enumeration of the
partial recursive functions. Hence, there exists a Turing machine
  $\STATE=(\StatesS[\STATE],\trel[\STATE],\init[\STATE],\final[\STATE])$
that interprets the codes $\code{\outsetsym}$ and
$\code{\finaltestsym}$, applies the corresponding partial recursive
functions $\outsetsym$ and $\finaltestsym$ to $\statecode{\states}$,
and decodes the code of the finite set of pairs yielded by the
function $\outsetsym$ into a list of codes of actions and target
states. Without loss of generality, we may assume that $\STATE$, when
started on a tape instance of the form
\begin{equation*}
  \tpleft\code{\outsetsym}\tpsepc\code{\finaltestsym}\tphd{\tpright}
     \statecode{\states}
\enskip,
\end{equation*}
performs a terminating deterministic computation that yields the tape
instance 
\begin{equation*}
  \tpleft\code{\outsetsym}\tpsepc\code{\finaltestsym}\tphd{\tpright}
  \finaltest{\states}
  \tpsepc\statecode{\acta[1]}
  \tpsepc\cdots\tpsepc\statecode{\acta[k]}
  \tpsep\statecode{\states[1]}
  \tpsepc\cdots\tpsepc\statecode{\states[k]}
\enskip,
\end{equation*}
where
  $\outset{\states}=\{(\acta[i],\states[i])\mid 1\leq i \leq k\}$.
Note that, since the branching degree of $\LTST$ is bounded by $\bdegbound$,
we have that $k \leq \bdegbound$.  
Henceforth, we refer to the sequence
  $\finaltest{\states},\acta[1],\dots,\acta[k]$
generated and stored on the tape by $\STATE$ as the \emph{menu} in
$\states$.

The set of states of $\StF$ is defined as
\begin{equation*}
  \StFStates=\StatesS[\STATE{}]\setminus\final[\STATE{}]
\enskip;
\end{equation*}
its initial state is defined as
\begin{equation*}
  \StFinit=\init[\STATE{}]
\enskip;\ \text{and}
\end{equation*}
its set of transitions is defined as
\begin{align*}
  \StFtrel=\mbox{}
    & \{(\StFstate,\datumd,\silent,\datume,\tpmv{M},\StFstate')
              \mid(\StFstate,\datumd,\datume,\tpmv{M},\StFstate')\in\trel[\STATE],\
                  \StFstate'\in\StatesS[\STATE{}]\setminus\final[\STATE]\}\\
    &\ \mbox{} \cup
           \{(\StFstate,\datumd,\silent,\datume,\tpmv{M},\SpFinit)
              \mid(\StFstate,\datumd,\datume,\tpmv{M},\StFstate')\in\trel[\STATE],\
                  \StFstate'\in\final[\STATE]\}
\enskip.       
\end{align*}
(Again, note that $\SpFinit$ is not a state in $\StFStates$, but the
initial state of the step fragment to be defined next.)

%

\begin{fact}\label{fact:confs:state}
 Let $\states\in\StatesS[\LTST]$, let $0\leq k \leq B$, let
 $\acta[1],\dots,\acta[k]\in\Actt$ and
 $\states[1],\dots,\states[k]\in\StatesS[\LTST]$ such that
   $\outset{\states}=\{(\acta[i],\states[i])\mid 1\leq i \leq k\}$.
 Then the fragment $\StF$ gives rise to a deterministic computation
 from
 \begin{equation*}
   (\StFinit, \tpleft\code{\outsetsym}\tpsepc\code{\finaltestsym}
              \tphd{\tpright}
              \statecode{\states})
 \end{equation*}
 to
 \begin{equation*}
   (\SpFinit,
    \tpleft\code{\outsetsym}\tpsepc\code{\finaltestsym}\tphd{\tpright}
    \finaltest{\states}
    \tpsepc\statecode{\acta[1]}
    \tpsepc\cdots\tpsepc\statecode{\acta[k]}
    \tpsep\statecode{\states[1]}
    \tpsepc\cdots\tpsepc\statecode{\states[k]})
 \enskip;
 \end{equation*}
 we denote its set of intermediate states by
   $\IS{\StF,\states}$.
\end{fact}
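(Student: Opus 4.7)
The plan is to leverage the fact that the fragment $\StF$ is constructed essentially as a copy of the deterministic Turing machine $\STATE$, with all transition labels replaced by $\silent$ and with the unique final state $\final[\STATE]$ ``short-circuited'' to $\SpFinit$. The argument therefore reduces to lifting the computation of $\STATE$ to an internal computation of the \RTM{} fragment $\StF$.

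First I would invoke the properties of $\STATE$ stated just before the definition of $\StF$: since $\code{\outsetsym}$ and $\code{\finaltestsym}$ are indices of the partial recursive functions $\outsetsym$ and $\finaltestsym$, and since $\states\in\StatesS[\LTST]$ is in the domain of both, the Turing machine $\STATE$, started on the configuration $(\init[\STATE], \tpleft\code{\outsetsym}\tpsepc\code{\finaltestsym}\tphd{\tpright}\statecode{\states})$, performs a terminating, deterministic computation $(\init[\STATE],\tpstrd[0])\step{}\cdots\step{}(\final[\STATE],\tpstrd[n])$ whose final tape instance is $\tpleft\code{\outsetsym}\tpsepc\code{\finaltestsym}\tphd{\tpright}\finaltest{\states}\tpsepc\statecode{\acta[1]}\tpsepc\cdots\tpsepc\statecode{\acta[k]}\tpsep\statecode{\states[1]}\tpsepc\cdots\tpsepc\statecode{\states[k]}$. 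By the determinism requirements (i) and (ii) imposed on Turing machines, every non-final configuration has exactly one outgoing transition, and $\final[\STATE]$ has none.

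Next I would lift this into $\StF$. By the definition of $\StFtrel$, every transition $(\StFstate,\datumd,\datume,\tpmv{M},\StFstate')\in\trel[\STATE]$ with $\StFstate'\neq\final[\STATE]$ appears in $\StFtrel$ as the $\silent$-transition $(\StFstate,\datumd,\silent,\datume,\tpmv{M},\StFstate')$, and the (unique, by determinism of $\STATE$) transition into $\final[\STATE]$ becomes a $\silent$-transition into $\SpFinit$. Consequently the $\STATE$-computation above maps step-by-step, via Definition~\ref{def:RTMopsem}, to a sequence of $\silent$-labelled transitions
\begin{equation*}
  (\StFinit,\tpstrd[0])\step{\silent}\cdots\step{\silent}(\SpFinit,\tpstrd[n])
\end{equation*}
in $\lts{\BBCTSS}$, which is the desired internal computation from $(\StFinit, \tpleft\code{\outsetsym}\tpsepc\code{\finaltestsym}\tphd{\tpright}\statecode{\states})$ to the advertised target configuration.

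Finally I would verify determinism in the sense of Definition~\ref{def:comp}: at each intermediate configuration $(\StFstate,\tpstrd[i])$ with $\StFstate\in\StFStates$, the only transitions in $\StFtrel$ from $\StFstate$ with read symbol matching that at the tape head of $\tpstrd[i]$ are the $\silent$-copies of $\STATE$-transitions, and by requirement (i) there is exactly one such transition, whose unique target is $(\StFstate[i+1],\tpstrd[i+1])$. In particular no non-$\silent$ transitions are enabled, so the computation satisfies the determinism clause of Definition~\ref{def:comp}. The principal (mild) obstacle is purely notational: carefully checking that the translation of $\STATE$-configurations into $\StF$-configurations preserves tape contents and head position at every step, which follows directly because Definition~\ref{def:RTMopsem} uses exactly the same head-movement conventions as the Turing machine semantics. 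The set of intermediate states is then $\IS{\StF,\states}=\{(\StFstate[i],\tpstrd[i])\mid 0\leq i \leq n\}$.
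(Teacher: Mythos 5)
Your proposal is correct and matches the paper's intended justification: the paper states this as a Fact immediately after the construction of $\StF$, the implicit argument being precisely that the terminating deterministic computation of the auxiliary Turing machine $\STATE$ lifts, via the $\silent$-relabelling and the redirection of transitions into $\final[\STATE]$ to $\SpFinit$, to a deterministic internal computation of the fragment. Your additional check that only $\silent$-transitions are enabled at each intermediate configuration (so the computation is deterministic in the sense of Definition~\ref{def:comp}) is exactly the right verification.
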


\paragraph{Step fragment}

\newcommand{\SpFnext}[1][]{\ensuremath{\mathit{ne}_{#1}}}
\newcommand{\SpFcr}[1][]{\ensuremath{\mathit{cr}_{#1}}}

The purpose of the \emph{step fragment} $\SpF$ is to select a
transition enabled in the current state $\states$, execute the
corresponding action, and remove  
  $\finaltest{\states}$
and all codes of actions and states from the tape, except the code of
the target state of the selected transition.

The behaviour represented by the simulated transition system $\LTST$
when it is in state $\states$ consists of a non-deterministic choice
between its $k$ outgoing transitions
  $\states\step{\acta[1]}\states[1], \dots,
    \states\step{\acta[k]}\states[k]$
and it is terminating if $\finaltest{\states}=1$. To get a branching
bisimulation between $\LTST$ and the transition system associated with
$\BBCTSS$, the latter necessarily has to include a configuration
offering exactly the same choice of outgoing transitions and the same
termination behaviour. (It is important to note that branching
bisimilarity does not, e.g., allow the choice for one of the outgoing
transitions to be made by an internal computation that eliminates
options one by one.) The fragment $\SpF$ therefore includes one
special state
  $\SpFstate[{\finaltest{\states},\acta[1],\dots,\acta[k]}]$
for every potential menu.
Since $k\leq \bdegbound$, the branching degree bound of $\LTST$,
there are
  $N=\sum_{k=0}^{B} 2\cdot |\Actt|^k$
potential menus.

The functionality of the step fragment consists of two parts.  The
first part decodes the menu on the tape ending up in a state
  $\SpFstate[{\finaltest{\states},\acta[1],\dots,\acta[k]}]$.
The second part takes care of the execution of an enabled transition
and reinitialising the simulation with the target state of the
executed transition as the new current state.

Let $\STEP=(\StatesS[\STEP],\trel[\STEP],\init[\STEP],\final[\STEP])$
be a deterministic Turing machine with distinguished states
$\SpFstate[{\finaltest{\states},\acta[1],\dots,\acta[k]}]$ (one for
every potential menu) that, when started on a tape instance
 \begin{equation*}
   \tpleft\code{\outsetsym}\tpsepc\code{\finaltestsym}\tphd{\tpright}
   \finaltest{\states}
   \tpsepc\statecode{\acta[1]}
   \tpsepc\cdots\tpsepc\statecode{\acta[k]}
   \tpsep\statecode{\states[1]}
   \tpsepc\cdots\tpsepc\statecode{\states[k]}
 \end{equation*}
performs a deterministic computation that halts in the state
  $\SpFstate[{\finaltest{\states},\acta[1],\dots,\acta[k]}]$
with tape instance
  $\tpleft\code{\outsetsym}\tpsepc\code{\finaltestsym}\tphd{\tpright}
    \statecode{\states[1]}\tpsepc\cdots\tpsepc\statecode{\states[k]}$.
Note that we can assume, without loss of generality, that
\begin{equation*}
  \final[\STEP]=\{\SpFstate[{t,\acta[1],\dots,\acta[k]}]\mid
    t\in\{0,1\},\ 0\leq k \leq B,\ \acta[1],\dots,\acta[k]\in\Actt\}
\enskip.
\end{equation*}

The state $\SpFstate[{\finaltest{\states},\acta[1],\dots,\acta[k]}]$
is declared final iff $\finaltest{\states}=1$, and it has
$k$ outgoing transitions labelled $\acta[1],\dots,\acta[k]$,
respectively. After performing the $i$th transition labelled with
$\acta[i]$, the list of codes of states 
  $\statecode{\states[1]}\tpsepc\cdots\tpsepc\statecode{\states[k]}$
remaining on the tape should be replaced by the $i$th code in
the list, after which the simulation returns to the state
fragment. For each $1\leq i \leq B$, let
  $\REINIT[i]=(\StatesS[{\REINIT[i]}],\trel[{\REINIT[i]}],\init[{\REINIT[i]}],\final[{\REINIT[i]}])$
be a deterministic Turing machine that, when started on a tape instance of
the form
 \begin{equation*}
   \tpleft\code{\outsetsym}\tpsepc\code{\finaltestsym}\tpright
   \tphdR{\statecode{\states[1]}}
   \tpsepc\cdots\tpsepc\statecode{\states[k]}
\qquad (k\geq i)
 \end{equation*}
halts with a tape instance
 \begin{equation*}
   \tpleft\code{\outsetsym}\tpsepc\code{\finaltestsym}\tphd{\tpright}
   \statecode{\states[i]}
\enskip.
 \end{equation*}

The set of states of $\SpF$ is defined as
\begin{equation*}
  \SpFStates=
    (\StatesS[\STEP]
      \cup
     \textstyle\bigcup_{i=1}^B\StatesS[{\REINIT[i]}])
      \setminus\textstyle\bigcup_{i=1}^B\final[{\REINIT[i]}]
 \enskip;
\end{equation*}
its initial state is defined as
\begin{equation*}
  \SpFinit=\init[\STEP{}]
\enskip;\ \text{and}
\end{equation*}
its set of transitions is defined as
\begin{align*}
  \SpFtrel=\mbox{}
    & \{(\SpFstate,\datumd,\silent,\datume,\tpmv{M},\SpFstate')
              \mid(\SpFstate,\datumd,\datume,\tpmv{M},\SpFstate')\in\trel[\STEP]\}\\
    &\ \mbox{} \cup
           \{(\SpFstate[{t,\acta[1],\dots,\acta[k]}],\tpright,\acta[i],\tpright,\tpmvR,\init[{\REINIT[i]}])\\
    & \mbox{}\qquad\qquad\qquad\qquad
             \mid t\in\{0,1\},
                  \acta[1],\dots,\acta[k] \in \Actt,
                  k \leq \bdegbound, 1 \leq i\leq k\} \\
    &\ \mbox{} \cup \textstyle\bigcup_{i=1}^B
       \{(\SpFstate,\datumd,\silent,\datume,\tpmv{M},\SpFstate') \\
    & \mbox{}\qquad\qquad \qquad\qquad
              \mid(\SpFstate,\datumd,\datume,\tpmv{M},\SpFstate')\in\trel[{\REINIT[i]}],\
                  \SpFstate'\in\StatesS[{\REINIT[i]}]\setminus\final[{\REINIT[i]}]\}\\
    &\ \mbox{} \cup \textstyle\bigcup_{i=1}^B
           \{(\SpFstate,\datumd,\silent,\datume,\tpmv{M},\StFinit) \\
    & \mbox{}\qquad\qquad \qquad\qquad
              \mid(\SpFstate,\datumd,\datume,\tpmv{M},\SpFstate')\in\trel[{\REINIT[i]}],\
                  \SpFstate'\in\final[{\REINIT[i]}]\}
\enskip.       
\end{align*}

\begin{fact}\label{lem:confs:step}
 Let $\states\in\StatesS[\LTST]$, let $0\leq k \leq B$, let
 $\acta[1],\dots,\acta[k]\in\Actt$ and
 $\states[1],\dots,\states[k]\in\StatesS[\LTST]$ such that
   $\outset{\states}=\{(\acta[i],\states[i])\mid 1\leq i \leq k\}$.
  Then the fragment $\SpF$ gives rise to the following deterministic
  internal computations:
\begin{enumerate}
\renewcommand{\theenumi}{\roman{enumi}}
\renewcommand{\labelenumi}{(\theenumi)}
\item a deterministic internal computation from
  \begin{equation*}
    (\SpFinit,
     \tpleft\code{\outsetsym}\tpsepc\code{\finaltestsym}\tphd{\tpright}
     \finaltest{\states}
     \tpsepc\statecode{\acta[1]}
     \tpsepc\cdots\tpsepc\statecode{\acta[k]}
     \tpsep\statecode{\states[1]}
     \tpsepc\cdots\tpsepc\statecode{\states[k]})
  \end{equation*}
 to
  \begin{equation*}
   (\SpFstate[{\finaltest{\states},\acta[1],\dots,\acta[k]}],
    \tpleft\code{\outsetsym}\tpsepc\code{\finaltestsym}\tphd{\tpright}
    \statecode{\states[1]}
    \tpsepc\cdots\tpsepc\statecode{\states[k]})
\enskip;
  \end{equation*}
  we denote its set of intermediate states by $\IS{\SpF,1,\states}$;
  and
\item a deterministic internal computation from
  \begin{equation*}
    (\init[{\REINIT[i]}],
     \tpleft\code{\outsetsym}\tpsepc\code{\finaltestsym}\tpright
     \tphdR{\statecode{\states[1]}}
     \tpsepc\cdots\tpsepc\statecode{\states[k]})
  \end{equation*}
 to
  \begin{equation*}
    (\StFinit,
     \tpleft\code{\outsetsym}\tpsepc\code{\finaltestsym}\tphd{\tpright}
     \statecode{\states[i]})
  \enskip;
  \end{equation*}
  we denote its set of intermediate states by $\IS{\SpF,2,\states[i]}$.
\end{enumerate}
\end{fact}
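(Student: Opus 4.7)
The plan is to prove both claims by lifting the deterministic terminating computations of the auxiliary Turing machines $\STEP$ and $\REINIT[i]$ to $\silent$-labelled deterministic internal computations in $\lts{\BBCTSS}$. This reuses the same pattern already applied for Facts \ref{fact:confs:init} and \ref{fact:confs:state}: the construction of $\SpFtrel$ copies every transition of $\STEP$, respectively $\REINIT[i]$, as a $\silent$-labelled transition, except that transitions leading into the final state of the auxiliary machine are redirected, either to the distinguished menu state (in the case of $\STEP$) or to $\StFinit$ (in the case of $\REINIT[i]$).

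For part (i), I would invoke the defining property of $\STEP$: started on the input tape instance $\tpleft\code{\outsetsym}\tpsepc\code{\finaltestsym}\tphd{\tpright}\finaltest{\states}\tpsepc\statecode{\acta[1]}\tpsepc\cdots\tpsepc\statecode{\acta[k]}\tpsep\statecode{\states[1]}\tpsepc\cdots\tpsepc\statecode{\states[k]}$, it performs a deterministic terminating computation that halts in state $\SpFstate[{\finaltest{\states},\acta[1],\dots,\acta[k]}]$ with the target tape instance. By the first clause in the definition of $\SpFtrel$, every non-halting step of this $\STEP$-computation corresponds to a $\silent$-transition in $\lts{\BBCTSS}$ with the same source state, tape update, and head movement, so the computation lifts directly to a $\silent$-path of the required form, whose set of intermediate states we take as $\IS{\SpF,1,\states}$. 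To verify that the lifted path is a \emph{deterministic} internal computation, it suffices to observe that for every state of $\STEP$ that appears strictly before the halting step, the determinism condition on $\STEP$ ensures exactly one outgoing transition on the tape symbol being read, and this transition is the unique element of $\SpFtrel$ available at that state-symbol pair.

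For part (ii) I would argue symmetrically with $\REINIT[i]$ in place of $\STEP$, now using the third and fourth clauses in the definition of $\SpFtrel$: transitions of $\REINIT[i]$ whose target lies in $\StatesS[{\REINIT[i]}]\setminus\final[{\REINIT[i]}]$ lift to $\silent$-transitions within $\SpFStates$, while the single transition that would enter $\final[{\REINIT[i]}]$ is redirected to $\StFinit$, thereby completing the internal computation at the initial state of the state fragment. The determinism of this lifted path is established exactly as above, from the assumption that $\REINIT[i]$ is a deterministic Turing machine.

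The main subtlety in both parts is confirming that no intermediate state of either lifted computation carries an extraneous outgoing transition in $\SpFtrel$. This follows from the implicit disjointness of $\StatesS[\STEP]$ and of each $\StatesS[{\REINIT[i]}]$ in the union forming $\SpFStates$, together with the observation that the distinguished menu states $\SpFstate[{t,\acta[1],\dots,\acta[k]}]$ acquire their observable action-transitions only through the second clause of $\SpFtrel$, which is inactive until $\STEP$ has reached one of these states. Everything else is a routine unfolding of the definitions.
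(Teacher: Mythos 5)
Your proposal is correct and takes essentially the same route as the paper, which states this as a Fact without a separate proof precisely because the lifting you describe—turning the deterministic terminating computations of $\STEP$ and $\REINIT[i]$ into $\silent$-labelled deterministic internal computations via the clauses of $\SpFtrel$—is the intended justification. (One minor imprecision: for $\STEP$ no redirection occurs, since its final states are themselves the menu states $\SpFstate[{t,\acta[1],\dots,\acta[k]}]$ and the first clause of $\SpFtrel$ copies all of its transitions; this does not affect the argument.)
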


\paragraph{Simulator}

The \emph{simulator}
  $\BBCTSS=(\BBCTSSStates,\BBCTSStrel,\BBCTSSinit,\BBCTSSfinal)$
for $\LTST$ is defined as the union of the fragments $\InF$, $\StF$
and $\SpF$ defined above:
the set of states of $\BBCTSS$ is defined as the union of the sets of
states of all fragments
\begin{equation*}
  \BBCTSSStates=
    \InFStates
  \cup
    \StFStates
  \cup
    \SpFStates
\enskip;
\end{equation*}
the transition relation of $\BBCTSS$ is the union of the
transition relations of all fragments
\begin{equation*}
  \BBCTSSStates=
    \InFtrel
  \cup
    \StFtrel
  \cup
    \SpFtrel
\enskip;
\end{equation*}
the initial state of $\BBCTSS$ is the initial state of $\InF$
\begin{equation*}
  \BBCTSSinit=\InFinit
\enskip;\ \text{and}
\end{equation*}
the set of final states $\BBCTSSfinal$ of $\BBCTSS$ is
\begin{equation*}
  \BBCTSSfinal=\{\SpFstate[{1,\acta[1],\dots,\acta[k]}]
                 \mid 0\leq k\leq B\ \&\ \acta[1],\dots,\acta[k]\in\Actt\}
\enskip.
\end{equation*}



}

\begin{theorem}\label{thm:BBCTSS}
  For every boundedly branching computable transition system $\LTST$
  there exists an \RTM{} $\BBCTSS$ such that
    $\LTST \bbisimed \lts{\BBCTSS}$.
\end{theorem}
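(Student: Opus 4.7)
The plan is to take the simulator $\BBCTSS$ already constructed from the three fragments $\InF$, $\StF$, $\SpF$ (or to sketch these if working from the factorised version) and to exhibit an explicit divergence-preserving branching bisimulation between $\lts{\BBCTSS}$ and $\LTST$. The overall idea is that the tape of $\BBCTSS$ carries, throughout the simulation, the fixed codes $\code{\outsetsym}$ and $\code{\finaltestsym}$ together with a working area that encodes either the code of the currently simulated state or, temporarily, its computed menu; every configuration reached inside one of the three fragments is either silently heading towards, or silently coming back from, a ``synchronisation point'' that represents some state $\states$ of $\LTST$.

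Concretely, I would define a relation $\brelsym\subseteq \BBCTSSStates\times\StatesS[\LTST]$ by the following clauses, driven by the facts about deterministic internal computations recorded for the three fragments. First, every intermediate configuration in $\IS{\InF}$ and every intermediate configuration in $\IS{\StF,\init[\LTST]}$ is related to $\init[\LTST]$; more generally, for every reachable $\states\in\StatesS[\LTST]$, the configurations in $\IS{\StF,\states}$ and in $\IS{\SpF,1,\states}$ are related to $\states$, and the distinguished choice configuration
\[
  (\SpFstate[{\finaltest{\states},\acta[1],\dots,\acta[k]}],
    \tpleft\code{\outsetsym}\tpsepc\code{\finaltestsym}\tphd{\tpright}
    \statecode{\states[1]}\tpsepc\cdots\tpsepc\statecode{\states[k]})
\]
is also related to $\states$. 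Finally, for each target $\states[i]$ of a transition $\states\step{\acta[i]}\states[i]$, the intermediate configurations in $\IS{\SpF,2,\states[i]}$ are related to $\states[i]$. Verifying the transfer conditions of Definition~\ref{def:bbisim} then reduces to two observations: (i) inside the three fragments every transition is a $\silent$-step along a deterministic internal computation, so all configurations related to $\states$ can silently reach the single ``menu'' configuration $\SpFstate[{\finaltest{\states},\acta[1],\dots,\acta[k]}]$, which is declared final precisely when $\finaltest{\states}=1$ and which offers exactly the transitions $\acta[1],\dots,\acta[k]$ leading to configurations related to $\states[1],\dots,\states[k]$; and (ii) conversely, every observable transition of $\BBCTSS$ fires from such a choice configuration, and thus is directly matched in $\LTST$ by the unique transition $\states\step{\acta[i]}\states[i]$ of $\states$.

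The main obstacle, and the only place where I expect real care to be needed, is divergence-preservation. Here the hypothesis that $\LTST$ is computable (as opposed to merely effective) is crucial: because $\outsetsym$ and $\finaltestsym$ are total recursive functions, the Turing machines $\INIT$, $\STATE$, $\STEP$ and $\REINIT[i]$ incorporated into the fragments terminate on every legitimate input tape, so each of the deterministic internal computations $\IS{\InF}$, $\IS{\StF,\states}$, $\IS{\SpF,1,\states}$, $\IS{\SpF,2,\states[i]}$ is finite. Consequently the only infinite $\silent$-paths in $\lts{\BBCTSS}$ are those passing infinitely often through a choice configuration via $\silent$-labelled outgoing transitions of $\LTST$, which by construction of $\brel$ are matched step by step by an infinite $\silent$-path of $\LTST$ through $\brel$-related states, and vice versa; boundedness of the branching degree by $\bdegbound$ guarantees that the required finitely many menu states $\SpFstate[{t,\acta[1],\dots,\acta[k]}]$ fit into the finite state set of an \RTM{}. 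Checking clauses~(5) and~(6) of Definition~\ref{def:bbisim} then amounts to transporting $\silent$-steps across the deterministic internal computation in the obvious direction, which concludes the proof.
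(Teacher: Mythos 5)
Your proposal is correct and follows essentially the same route as the paper's own proof: the paper defines $\brelsym$ as the union of $\brelsym[\init]=\{(\init[\LTST],\statet)\mid\statet\in\IS{\InF}\}$ and, for each $\states$, $\brelsym[\states]=\{(\states,\statet)\mid\statet\in\IS{\StF,\states}\cup\IS{\SpF,1,\states}\cup\IS{\SpF,2,\states}\}$, exactly the relation you describe (your explicitly listed menu configuration is already contained in $\IS{\SpF,1,\states}$). The paper leaves the transfer and divergence conditions as ``straightforwardly verified''; your observations about finiteness of the deterministic internal computations and about divergences having to pass through $\silent$-labelled menu transitions supply precisely the missing details.
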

\fullonly{%
\begin{proof}
 Consider the \RTM{} $\BBCTSS$ of which the definition is sketched
 above. Referring to Fact~\ref{fact:confs:init} we define the following relation:
 \begin{equation*}
   \brelsym[\init] = \{(\init[\LTST], \statet) \mid \statet \in \IS{\InF} \}
 \enskip,
 \end{equation*}
 and referring to Facts~\ref{fact:confs:state} and
 \ref{lem:confs:step}, we define, for every
 $\states\in\StatesS[\LTST]$, the relation
 \begin{equation*}
   \brelsym[\states] = 
      \{(\states, \statet) \mid \statet \in \IS{\StF,\states}\cup\IS{\SpF,1,\states}\cup\IS{\SpF,2,\states}\}
 \enskip.
 \end{equation*}
 Then it can be verified straightforwardly that the binary relation
 \begin{equation*}
    \brelsym = \brelsym[\init] \cup 
               \textstyle\bigcup_{\states\in\StatesS[\LTST]} \brelsym[\states]
  \end{equation*}
  is a divergence-preserving branching bisimulation from $\LTST$ to
  $\lts{\BBCTSS}$.
  Since $(\init[\BBCTSS],\tphd{\tpblank})\in\IS{\InF}$, it follows
  that $(\init[\LTST],(\init[\BBCTSS],\tphd{\tpblank}))\in\brelsym$,
  and hence $\LTST\bbisimed\lts{\BBCTSS}$.
\todo{Discuss divergence!}
\end{proof}%
}

Recall that, by Proposition~\ref{prop:Phillips}, every effective
transition system is branching bisimilar to a computable transition
system with branching degree bounded by $2$. According to
Theorem~\ref{thm:BBCTSS}, the resulting transition can be simulated
with an \RTM{} up to divergence-preserving branching
bisimilarity. We can conclude that \RTM{}s can simulate effective
transition systems up to branching bisimilarity, but, in view of
Example~\ref{ex:divergence}, not in a divergence-preserving manner.

\begin{corollary}\label{cor:BBCTSS-eff}
  For every effective transition system $\LTST$ there exists a reactive
  Turing machine $\BBCTSS$ such that
    $\LTST \bbisim \lts{\BBCTSS}$.
\end{corollary}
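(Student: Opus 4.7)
The plan is to chain together the two preceding results, Proposition~\ref{prop:Phillips} and Theorem~\ref{thm:BBCTSS}, using the transitivity of branching bisimilarity and the fact that divergence-preserving branching bisimilarity refines branching bisimilarity.

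First, given an arbitrary effective transition system $\LTST$, I would invoke Proposition~\ref{prop:Phillips} (Phillips' result) to obtain a boundedly branching computable transition system $\LTST'$ such that $\LTST \bbisim \LTST'$. In fact, Phillips' construction even yields a branching degree bound of $2$, but any finite bound suffices for the next step. Then I would apply Theorem~\ref{thm:BBCTSS} to $\LTST'$, which produces an \RTM{} $\BBCTSS$ such that $\LTST' \bbisimed \lts{\BBCTSS}$.

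It remains to combine these two relations. Since every divergence-preserving branching bisimulation is in particular a branching bisimulation (the divergence conditions are extra requirements on top of the standard clauses of Definition~\ref{def:bbisim}), we have $\LTST' \bbisim \lts{\BBCTSS}$. Branching bisimilarity is an equivalence, hence transitive, so from $\LTST \bbisim \LTST'$ and $\LTST' \bbisim \lts{\BBCTSS}$ we conclude $\LTST \bbisim \lts{\BBCTSS}$, which is the desired statement.

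There is essentially no obstacle: the corollary is a direct composition of two already-established results, and the only subtlety is observing that $\bbisimed$ is a strengthening of $\bbisim$ so that the two equivalences can be composed transitively. It is worth noting, in connection with Example~\ref{ex:divergence}, that the step from $\LTST'$ to $\lts{\BBCTSS}$ can be strengthened to $\bbisimed$, but the step from $\LTST$ to $\LTST'$ provided by Phillips' construction generally introduces divergence and so cannot be so strengthened; hence the final equivalence is only branching bisimilarity, not its divergence-preserving variant.
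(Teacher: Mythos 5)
Your proposal is correct and matches the paper's own argument exactly: the paper likewise obtains the corollary by composing Proposition~\ref{prop:Phillips} with Theorem~\ref{thm:BBCTSS}, using that divergence-preserving branching bisimilarity implies branching bisimilarity and that the latter is transitive. Your closing remark about the divergence introduced by Phillips' construction also mirrors the paper's observation via Example~\ref{ex:divergence}.
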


Note that if $\LTST$ is deterministic, then $|\outset{\states}|\leq
|\Actt|$ for every state $\states$ in $\LTST$, so every deterministic
transition system is, in fact, boundedly branching.
\fullonly{%
Furthermore, since all internal computations involved in the
simulation of a boundedly branching $\LTST$ are deterministic,
if $\BBCTSS$ is non-deterministic, then this can only be due to a state
   $\SpFstate[{\finaltest{\states},\acta[1],\dots,\acta[k]}]$%
with $\acta[i]=\acta[j]$ for some $1 \leq i < j\leq k$.}%
\fctonly{%
All computations involved in the simulation of $\LTST$ are
deterministic; if $\BBCTSS$ is non-deterministic, then this is due
to a state of which the menu includes some action $\acta$ more than
once.}
It follows that a deterministic computable transition system can be
simulated up to divergence-preserving branching bisimilarity by a
deterministic \RTM{}.
The following corollary to Theorem~\ref{thm:BBCTSS} summarises the
argument.
\begin{corollary}\label{cor:BBCTSS-dc}
  For every deterministic computable transition system $\LTST$ there exists
  a deterministic \RTM{} \TM{} such that
    $\lts{\TM{}}\bbisimed\LTST$.
\end{corollary}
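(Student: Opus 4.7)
The plan is to derive this as a direct consequence of Theorem~\ref{thm:BBCTSS}, by observing that a deterministic transition system is automatically boundedly branching, and then by checking that the simulator constructed in the proof of that theorem can be taken to be deterministic whenever the simulated transition system is.

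First I would note that if $\LTST=(\States,\trel,\init,\final)$ is deterministic, then for every state $\states\in\States$, every action $\acta\in\Actt$ labels at most one outgoing transition; hence $|\outset{\states}|\leq|\Actt|$. Since $\Actt$ is a fixed finite set (our standing assumption), $\LTST$ is boundedly branching with bound $\bdegbound=|\Actt|$. By Theorem~\ref{thm:BBCTSS} there is an \RTM{} $\BBCTSS$ with $\LTST\bbisimed\lts{\BBCTSS}$; it remains only to show that $\BBCTSS$ can be chosen deterministic.

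The main step is a verification that the three fragments comprising $\BBCTSS$ introduce no non-determinism when $\LTST$ itself is deterministic. The initialisation fragment $\InF$ and the state fragment $\StF$ arise from deterministic auxiliary Turing machines and produce only $\silent$-transitions arranged in linear chains, so their transitions trivially satisfy Definition~\ref{def:RTM}. The same applies to the auxiliary parts of the step fragment $\SpF$ (the decoding stage and each re-initialisation machine $\REINIT[i]$). The only place where $\SpF$ offers a genuine choice is at a menu-state $\SpFstate[{\finaltest{\states},\acta[1],\dots,\acta[k]}]$, which has $k$ outgoing transitions labelled $\acta[1],\dots,\acta[k]$, all reading and writing $\tpright$ and moving to the right. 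Here the potential obstacle to determinism would be either a repeated action label among the $\acta[i]$ or a mixture of a $\silent$-transition with other transitions. However, because $\outset{\states}=\{(\acta[i],\states[i])\mid 1\leq i\leq k\}$ and $\LTST$ is deterministic, the actions $\acta[1],\dots,\acta[k]$ are pairwise distinct; moreover, if some $\acta[i]=\silent$ then determinism of $\LTST$ forces $k=1$, so no other action is in competition. Thus both determinism clauses of Definition~\ref{def:RTM} are met at every menu-state.

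The expected main obstacle, and the only nontrivial point, is precisely this case analysis at the menu-states; everything else reduces to noting that the auxiliary Turing machines used in the construction are deterministic by assumption and that concatenating their transitions via $\silent$ preserves determinism. Having discharged this, we conclude that the simulator $\BBCTSS$ produced by the construction underlying Theorem~\ref{thm:BBCTSS} is a deterministic \RTM{} satisfying $\lts{\BBCTSS}\bbisimed\LTST$, which establishes the corollary.
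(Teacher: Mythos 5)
Your proposal is correct and follows essentially the same route as the paper: determinism gives the bound $|\Actt|$ on the branching degree, all internal computations of the simulator are deterministic, and the only possible source of non-determinism is a menu state with a repeated action label, which determinism of $\LTST$ rules out. Your explicit treatment of the case where some $\acta[i]=\silent$ (forcing $k=1$ by the second determinism clause) is a small point the paper glosses over, but it does not change the argument.
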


Using Theorem~\ref{thm:BBCTSS} we can now also establish that a
parallel composition of \RTM{}s can be simulated, up to
divergence-preserving branching bisimilarity, by a single \RTM{}.  To
this end, note that the transition systems associated with \RTM{}s are
boundedly branching and computable. Further note that the parallel
composition of boundedly branching computable transition systems is
again computable. It follows that the transition system associated
with a parallel composition of \RTM{}s is boundedly branching and
computable, and hence, by Theorem~\ref{thm:BBCTSS}, there exists an
\RTM{} that simulates it up to divergence-preserving branching
bisimilarity. Thus we get the following corollary.

\begin{corollary}
  For every pair of \RTM{}s $\TM_{1}$ and $\TM_{2}$
  and for every set of communication channels $\Chan$
  there is an \RTM{} $\TM$ such that
    $\lts{\TM{}}\bbisimed\lts{\aep{\Chan}{\TM_{1}\rtmpar{}\TM_{2}}}$.
\end{corollary}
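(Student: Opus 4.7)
The plan is to combine Proposition~\ref{prop:lts-RTM-comput} with Theorem~\ref{thm:BBCTSS}, via the observation that parallel composition preserves both bounded branching and computability. Given $\TM_1$ and $\TM_2$, Proposition~\ref{prop:lts-RTM-comput} immediately gives that $\lts{\TM_1}$ and $\lts{\TM_2}$ are computable. Moreover, inspecting Definition~\ref{def:RTM} shows that each \RTM-state of $\TM_i$ has at most $|\Datab|^{2}\cdot|\Actt|\cdot 2\cdot|\StatesS|$ outgoing \RTM-transitions, so every configuration of $\lts{\TM_i}$ (for $i\in\{1,2\}$) has a uniformly bounded number of outgoing transitions; hence each $\lts{\TM_i}$ is boundedly branching.

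Next, I would verify that the class of boundedly branching computable transition systems is closed under the parallel composition operator defined just before Definition~\ref{def:RTMparc}. Bounded branching is immediate from the three clauses in that definition: the outgoing transitions of $(\states[1],\states[2])$ arise either as autonomous moves of one component (filtered according to $\Chan$) or as synchronisations between the two, so their number is bounded by $B_1 + B_2 + B_1\cdot B_2$, where $B_i$ bounds the branching degree of $\lts{\TM_i}$. For computability, observe that $\outset{(\states[1],\states[2])}$ can be obtained by a finite inspection of the (computable) sets $\outset{\states[1]}$ and $\outset{\states[2]}$, discarding autonomous transitions whose label belongs to the synchronising set and pairing up matching $\send{\chan}{\datum}$ and $\recv{\chan}{\datum}$ transitions; the final-state test for $(\states[1],\states[2])$ is simply the conjunction of the two component tests.

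By Definition~\ref{def:RTMparc}, the transition system $\lts{\aep{\Chan}{\TM_1\rtmpar{}\TM_2}}$ coincides with $\aep{\Chan}{\lts{\TM_1}\rtmpar{}\lts{\TM_2}}$ and is therefore boundedly branching and computable. Applying Theorem~\ref{thm:BBCTSS} yields an \RTM{} $\TM$ with $\lts{\TM}\bbisimed\lts{\aep{\Chan}{\TM_1\rtmpar{}\TM_2}}$, as desired. The only step requiring any real work is the preservation of computability under parallel composition; since $\Chan$ is finite, the synchronisation search is bounded and routine, so this amounts to straightforward bookkeeping with the recursive procedures supplied by Proposition~\ref{prop:lts-RTM-comput}. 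Every other step is a direct appeal to a previously established result.
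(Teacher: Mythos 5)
Your proposal is correct and follows essentially the same route as the paper: the paper's own justification is precisely that transition systems associated with \RTM{}s are boundedly branching and computable, that the parallel composition of boundedly branching computable transition systems is again boundedly branching and computable, and that Theorem~\ref{thm:BBCTSS} then yields the simulating \RTM{}. You merely make explicit the bookkeeping (the branching bound $B_1+B_2+B_1\cdot B_2$ and the effective computation of $\outsetsym$ and $\finaltestsym$ for pairs of configurations) that the paper leaves implicit.
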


\section{Universality}\label{sec:universal}

Recall that a \emph{universal Turing machine} is a Turing machine that
can simulate an arbitrary Turing machine on arbitrary input. The
assumptions are that a finite description of the to be simulated
Turing machine (e.g., a G\"odel number, see \cite{Rog67}) as well as
its input are available on the tape of the universal Turing machine, and the
simulation is up to functional or language equivalence. We adapt this
scheme in two ways. Firstly, we let the simulation start by inputting
the description of an arbitrary \RTM{} \TMM{} along some dedicated
channel $\chanrtm{}$, rather than assuming its presence on the tape
right from the start. This is both conceptually desirable ---for our
aim is to give interaction a formal status---, and technically
necessary ---for in the semantics of \RTM{}s we have assumed that the
tape is initially empty.  Secondly, we require the behaviour of
$\TMM{}$ to be simulated up to divergence-preserving branching
bisimilarity.

Thus, we arrive at the following tentative definitions. For an
arbitrary \RTM{} \TMM{}, denote by $\TMsend{\TMM}$ a deterministic
\RTM{} with no other behaviour than outputting a G\"odel number
$\code{\TMM{}}$ of \TMM{} in an appropriate representation along
channel $\chanrtm{}$ after which it halts in its unique final state. A
\emph{universal} \RTM{} is then an \RTM{} $\URTM$ such that, for every
\RTM{}~$\TMM$, the parallel composition $\aep{\{\chanrtm\}}{\URTM\parc
  \TMsend{\TMM}}$ simulates $\lts{\TMM}$.

Although such a universal \RTM{} $\URTM$ exists up to branching
bisimilarity, as we shall see below, it does not exist up to
divergence-preserving branching bisimilarity. To see this, note that
the transition system associated with any particular \RTM{} \URTM{}
has a branching degree that is bounded by some natural number $B$.
It can then be established that, up to divergence-preserving branching
bisimilarity, that \URTM{} can only simulate \RTM{}s with a branching
degree bounded by $B$.
\fctonly{The argument is formalised in the following proposition;
  see~\cite{BLT11full} for a proof.}

\begin{proposition}
 There does not exist an \RTM{} $\URTM{}$ such that for all \RTM{}s
 \TMM{} it holds that
    $\aep{\{\chanrtm\}}{\URTM \parc \TMsend{\TMM}}\bbisimed\lts{\TMM}$.
\end{proposition}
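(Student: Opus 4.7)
The plan is to derive a contradiction from the hypothetical existence of such a $\URTM$ by showing that the branching degree of every configuration of $\aep{\{\chanrtm\}}{\URTM\parc\TMsend{\TMM}}$ is uniformly bounded independently of $\TMM$, and then constructing a particular $\TMM$ whose initial state has strictly more pairwise non-bisimilar outgoing transitions than the bound allows.

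Fix an arbitrary $\URTM$. Since the transition relation of $\URTM$ is a finite subset of $\StatesS[\URTM]\times\Datab\times\Actt\times\Datab\times\{\tpmvL,\tpmvR\}\times\StatesS[\URTM]$, there exists $B\in\N$ such that every configuration of $\lts{\URTM}$ has at most $B$ outgoing transitions. Because $\TMsend{\TMM}$ is deterministic by construction, every configuration of $\lts{\TMsend{\TMM}}$ has at most one outgoing transition. By Definition~\ref{def:RTMparc} and the underlying definition of parallel composition on transition systems, every configuration $(u,v)$ of $\aep{\{\chanrtm\}}{\URTM\parc\TMsend{\TMM}}$ admits at most $B$ autonomous $\URTM$-moves, at most one autonomous $\TMsend{\TMM}$-move, and at most $B$ synchronizations on $\chanrtm{}$, so its branching degree is at most $2B+1$, a bound that does not depend on $\TMM$.

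For each $n\in\N$ I next construct an RTM $\TMM_n$ whose initial configuration in $\lts{\TMM_n}$ has $n$ distinct outgoing $\acta$-transitions leading to pairwise non-$\bbisimed$-related configurations. Pick two distinct action symbols $\acta,\actb\in\Act$ and let $\TMM_n$ have an initial state $\init[\TMM_n]$ together with, for each $i\in\{1,\dots,n\}$, a fresh chain of states exhibiting exactly $i$ consecutive $\actb$-labelled transitions ending in a final state, and with a transition from $\init[\TMM_n]$ labelled $\acta$ to the head of each such chain. The $n$ configurations reached after an $\acta$-transition accept the pairwise distinct singleton languages $\{\actb^{i}\}$, so they are pairwise not $\bbisimed$-related.

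Now choose $n>2B+1$ and suppose, towards a contradiction, that $\aep{\{\chanrtm\}}{\URTM\parc\TMsend{\TMM_n}}\bbisimed\lts{\TMM_n}$, witnessed by a divergence-preserving branching bisimulation $\brel$ with $s_0\brel t_0$ for the respective initial configurations. Applying the branching-degree preservation property of $\bbisimed$ to the $n$ pairwise non-$\bbisimed$-related $\acta$-successors of $t_0$ yields a state $s^{\ast}$ with $s_0\steps{\silent}s^{\ast}$, $s_0\brel s^{\ast}$, and $n$ distinct outgoing $\acta$-transitions. In particular $s^{\ast}$ has branching degree at least $n>2B+1$, contradicting the uniform bound established above, so no such $\URTM$ can exist.

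The main obstacle is the branching-degree preservation step: the transfer clauses of $\brel$ only supply the required $\acta$-transitions one at a time, from potentially different $\silent$-successors of $s_0$. Collapsing them into a single state $s^{\ast}$ uses divergence preservation to rule out unbounded $\silent$-evasion inside the $\brel$-class of $s_0$, and this is precisely where divergence preservation is indispensable; the analogous statement fails for plain branching bisimilarity, which is why universal $\URTM$s do exist modulo $\bbisim$, as will be established later in the section.
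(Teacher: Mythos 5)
Your proposal is correct and follows the same overall strategy as the paper's proof: bound the branching degree of $\aep{\{\chanrtm\}}{\URTM\parc\TMsend{\TMM}}$ uniformly in $\TMM$, exhibit a machine whose initial configuration branches beyond that bound, and use divergence preservation (together with the absence of $\silent$-transitions at the initial configuration of $\lts{\TMM}$) to force a single state of the composition to offer all the required transitions at once. One substantive difference deserves mention, and it is in your favour: the paper's witness machine sends $\acta$ to $B+2$ deadlocked states of which only one is final, so all the non-final targets are mutually divergence-preserving branching bisimilar and a simulating state needs only \emph{two} outgoing $\acta$-transitions to match them all; the paper's assertion that the targets are ``pairwise not divergence-preserving branching bisimilar'' fails for its own construction. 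Your chains of $\actb$-transitions of distinct lengths make the $\acta$-successors genuinely pairwise non-bisimilar, which is exactly what the counting argument needs (at least one transition of $s^{\ast}$ per $\bbisimed$-class of targets, by transitivity of $\bbisimed$), so your construction actually repairs this slip. Two small points to tidy up. First, $\brel$ relates states of the two different transition systems, so the state you reach should satisfy $s^{\ast}\brel t_0$ rather than $s_0\brel s^{\ast}$. Second, to collapse the matches into a single state you should make explicit that every $\silent$-successor of a state related to $t_0$ is again related to $t_0$ (since $t_0$ has no $\silent$-transitions, the matching $\silent$-step on the $\lts{\TMM_n}$ side must be the idle option); divergence preservation then forbids an infinite $\silent$-path from $s_0$, so one reaches an $s^{\ast}$ related to $t_0$ with no outgoing $\silent$-transitions, at which point the transfer condition leaves no room for the $\silent$-evasion you worry about and all $n$ matching $\acta$-transitions must emanate from $s^{\ast}$ itself.
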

\fullonly{%
\begin{proof}
 Suppose that $\URTM$ is an \RTM{} such that 
    $\aep{\{\chanrtm\}}{\URTM \parc
      \TMsend{\TMM}}\bbisimed\lts{\TMM}$
 holds for every \RTM{} \TMM{}.
 Then, by the way $\TMsend{\TMM}$ is defined, the branching degree of
 $\lts{\aep{\{\chanrtm\}}{\URTM \parc
     \TMsend{\TMM}}\bbisimed\lts{\TMM}}$
 is bounded by the branching degree bound on $\lts{\URTM}$, say
 $\bdegbound$.
 Now, consider the \RTM{}
   $\TMM=(\StatesS[\TMM],\trel[\TMM],\init[\TMM],\final[\TMM])$
  with
  \begin{align*}
    \StatesS[\TMM] & = \{\init[\TMM],0,\dots,\bdegbound+1\}
  \enskip,\\
    {\trel[\TMM]} & = \{(\init[\TMM],\tpblank,\acta,\tpblank,\tpmvR,i)\mid
    i=0,\dots,B+1\}\enskip,\ \text{and}\\
   {\final[\TMM]} & = \{0\}\enskip.
  \end{align*}
  Clearly, the configuration $(\init[\TMM],\tphd{\tpblank})$ in
  $\lts{\TMM}$ has branching degree $B+1$.
  Since $\aep{\{\chanrtm\}}{\URTM \parc
      \TMsend{\TMM}}\bbisimed\lts{\TMM}$,
  there exists a state $\states$ in
     $\lts{\aep{\{\chanrtm\}}{\URTM \parc
        \TMsend{\TMM}}}$
  that is related by a  divergence-preserving branching bisimulation
  to $(\init[\TMM],\tphd{\tpblank})$.
  Moreover, since $(\init[\TMM],\tphd{\tpblank})$ has no outgoing
  $\silent$-transitions, it follows from the definition of
  divergence-preserving branching bisimulation that from $\states$ we
  can execute at most finitely many $\silent$-transitions to state
  $\states'$ without outgoing $\silent$-transitions that must also
  related to $(\init[\TMM],\tphd{\tpblank})$ by the same
  divergence-preserving branching bisimulation. But then $\states'$
  must simulate each of the $B+1$ outgoing $\acta$-transitions to
  states that are pairwise not divergence-preserviung branching
  bisimilar, and therefore it has a branching degree of $\bdegbound + 1$.
 This is a contradiction, and we conclude that the supposed \RTM{}
 $\URTM$ cannot exist.
\end{proof}%
}

If we insist on simulation up to divergence-preserving branching
bisimilarity, then we need to relax the notion of universality.
\begin{definition}
  Let $\bdegbound$ be a natural number.
  An \RTM~$\UbRTM$ is \emph{universal up to $\bdegbound$} if for every
  \RTM~$\TMM$ of which the associated transition system $\lts{\TMM}$
  has a branching degree bounded by $\bdegbound$ it holds that
  $\lts{\TMM} \bbisimed \aep{\{\chanrtm\}}{\UbRTM\parc \TMsend{\TMM}}$.
\end{definition}

\fullonly{%
We now present the construction of a collection of \RTM{}s $\UbRTM$ for all
branching degree bounds $\bdegbound$. We now benefit from our generic
approach in Sect.~\ref{sec:expr:bbctss}: to obtain a definition of
$\UbRTM$, it is enough to adapt the initialisation fragment of the
simulator \BBCTSS{}.
}%
\fctonly{%
The construction of the simulator for a transition system of which the
branching degree is bounded by $\bdegbound$ in the proof of
Theorem~\ref{thm:BBCTSS} can be adapted to
get the definition of an \RTM{} $\UbRTM$ that is universal up to
$\bdegbound$. It suffices to slightly modify the initialisation
fragment. Instead of writing the codes of the functions $\outsetsym$
and $\finaltest{\_}$ and the initial state directly on the tape, the
\emph{initialisation fragment}\fullonly{$\InUF$} receives the code
$\TMcode{\TMM}$ of an arbitrary $\TMM$ along some dedicated
channel~$\chanrtm$. Then, it recursively computes the codes of the 
functions $\outsetsym$ and $\finaltestsym$, and the initial state
of $\lts{\TMM}$ and stores these on the tape.}
\fullonly{\paragraph{Initialisation fragment}%
Recall that the initialisation fragment \InF{} of the simulator
\BBCTSS{} is designed to write the codes $\code{\outsetsym}$,
$\code{\finaltestsym}$ and $\statecode{\init[{\lts{\TMM{}}}]}$ for a
fixed $\LTST$ on the tape. The initialisation fragment $\InUF$ of
$\UbRTM{}$ should, instead, input the G\"odel number $\code{\TMM}$ of
an arbitrary $\TMM{}$ along channel $\chanrtm{}$ and \emph{compute}
the codes $\code{\outsetsym}$, $\code{\finaltestsym}$ and
$\statecode{\init[{\lts{\TMM{}}}]}$ of the associated transition
system $\lts{\TMM{}}$. We do not elaborate on the details of
computing these codes from $\code{\TMM{}}$; their existence follows from
standard recursion-theoretic arguments.
Here, it suffices to declare a deterministic Turing machine
  $\TRANSF=(\StatesS[\TRANSF],\trel[\TRANSF],\init[\TRANSF],\final[\TRANSF])$
that, when started on a tape instance of the form
\begin{equation*}
  \tpleft\code{\TMM{}}\tphd{\tpright}
\enskip,
\end{equation*}
performs a terminating computation that yields the tape instance
\begin{equation*}
  \tpleft\code{\outsetsym}\tpsepc\code{\finaltestsym}\tphd{\tpright}
    \statecode{\init[{\lts{\TMM{}}}]}
\enskip,
\end{equation*}
where $\code{\outsetsym}$ and $\code{\finaltestsym}$ are indices of the
partial recursive functions $\outsetsym$ and $\finaltestsym$
associated with $\lts{\TMM{}}$, and $\init[{\lts{\TMM{}}}]$ is the
initial configuration of $\lts{\TMM{}}$.

Let us assume that $\TMsend{\TMM{}}$ outputs the G\"odel number
$\code{\TMM{}}$ of $\TMM{}$ along channel $\chanrtm{}$ as a sequence
of $\code{\TMM{}}+1$ $\tpunit$s delimited by $\tpleft$ and $\tpright$. Then the
initialisation fragement $\InUF$ should first receive along channel
$\chanrtm{}$ the symbol $\tpleft$, then a sequence of $\tpunit$s,
until it receives the symbol $\tpright$, and then continue with the
computation defined by the deterministic Turing machine $\TRANSF$.

The set of states of $\InUF$ is defined as
\begin{equation*}
  \InUFStates=
    \{\InUFstate[0],\InUFstate[1],\InUFstate[2]\}
      \cup
    (\StatesS[\TRANSF]\setminus \final[\TRANSF])
\enskip,
\end{equation*}
its initial state is defined as
\begin{equation*}
  \InUFinit=\InUFstate[0]
\enskip;\ \text{and}
\end{equation*}
its set of transitions is defined as
\begin{align*}
  \InUFtrel=\mbox{}
    &
    \{(\InUFstate[0],\tpblank,\recv{\chanrtm{}}{\tpleft},\tpleft,\tpmvR,\InUFstate[1]),\
       (\InUFstate[1],\tpblank,\recv{\chanrtm{}}{\tpunit},\tpunit,\tpmvR,\InUFstate[1]),\\
    & \qquad
       (\InUFstate[1],\tpblank,\recv{\chanrtm{}}{\tpright},\tpright,\tpmvR,\InUFstate[2]),\
       (\InUFstate[2],\tpblank,\silent,\tpblank,\tpleft,\tpmvL,\init[\TRANSF])
    \}\\
    &\ \mbox{} \cup
           \{(\InUFstate,\datumd,\silent,\datume,\tpmv{M},\init[\TRANSF])
             \mid(\InUFstate,\datumd,\datume,\tpmv{M},\InUFstate')\in\trel[\TRANSF],\
                 \InUFstate'\in\StatesS[\TRANSF]\setminus\final[\TRANSF]\}\\
    &\ \mbox{} \cup
           \{(\InUFstate,\datumd,\silent,\datume,\tpmv{M},\StFinit)
              \mid(\InUFstate,\datumd,\datume,\tpmv{M},\InUFstate')\in\trel[\TRANSF],\
                  \InUFstate'\in\final[\TRANSF]\}
\end{align*}



Note that $\InUF$ gives rise to an deterministic internal computation
only in parallel composition with an \RTM{} $\TMsend{\TMM{}}$ that
sends the G\"odel number of some \RTM{} $\TMM{}$.

\begin{fact}\label{fact:confs:initU}
 Let $\TMM{}$ be an arbitrary \RTM{}, let $\initc[\TMsend{\TMM{}}]$ denote the
 initial configuration of $\TMsend{\TMM{}}$ and let $\finalc[\TMsend{\TMM{}}]$
 denote the final configuration of $\TMsend{\TMM{}}$.
 Then the parallel composition
   $\aep{\{\chanrtm\}}{\TMsend{\TMM} \parc \UbRTM}$
 of $\TMsend{\TMM{}}$ with the fragment $\InUF$ gives rise to a
 deterministic internal computation from
\begin{equation*}
  (\initc[\TMsend{\TMM{}}], (\InFinit, \tphd{\tpblank}))
\end{equation*}
 to
\begin{equation*}
  (\finalc[\TMsend{\TMM{}}], (\StFinit, \tpleft\code{\outsetsym}\tpsepc\code{\finaltestsym}
               \tphd{\tpright}
               \statecode{\init[\LTST]}))
\enskip;
\end{equation*}
  we denote its set of intermediate states by $\IS{\InUF}$.
\end{fact}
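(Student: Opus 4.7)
The plan is to establish the claim by decomposing the deterministic internal computation into two consecutive phases, corresponding to the two stages of $\InUF$: an input phase in which $\TMsend{\TMM{}}$ transmits $\code{\TMM{}}$ along channel $\chanrtm{}$, and a computation phase in which the embedded Turing machine $\TRANSF$ converts $\code{\TMM{}}$ into the codes $\code{\outsetsym}$, $\code{\finaltestsym}$ and $\statecode{\init[\LTST]}$.

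For the input phase, I would trace through the transitions of $\InUF$ step by step, alongside the transitions of $\TMsend{\TMM{}}$. First, $\InUF$ starts in $\InUFstate[0]$ and $\TMsend{\TMM{}}$ is poised to output $\tpleft$ along $\chanrtm{}$; their synchronization yields a $\silent$-transition in the parallel composition. Then $\InUF$ enters the self-loop in $\InUFstate[1]$, and for each of the $\code{\TMM{}}+1$ $\tpunit$-symbols output by $\TMsend{\TMM{}}$, the pair produces one synchronized $\silent$-step and writes a $\tpunit$ on the tape. A final synchronization on $\tpright$ puts $\InUF$ in $\InUFstate[2]$ with tape instance $\tpleft\code{\TMM{}}\tphd{\tpright}$, and $\TMsend{\TMM{}}$ in $\finalc[\TMsend{\TMM{}}]$. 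Each of these steps is deterministic because in each intermediate state there is exactly one action enabled for each component, and the only communication partner available at each point is the one used; the restriction on $\chanrtm{}$ in the parallel composition prevents any observable transition on that channel. One $\silent$-step in $\InUF$ from $\InUFstate[2]$ then moves the head one position left, readying the tape for $\TRANSF$.

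For the computation phase, I would invoke the fact that $\TRANSF$ is, by its defining specification, a deterministic Turing machine that performs a terminating computation from $\tpleft\code{\TMM{}}\tphd{\tpright}$ to $\tpleft\code{\outsetsym}\tpsepc\code{\finaltestsym}\tphd{\tpright}\statecode{\init[\LTST]}$. Since every $\TRANSF$-transition has been embedded into $\InUFtrel$ as a $\silent$-transition, and since $\TMsend{\TMM{}}$ has already reached its final configuration $\finalc[\TMsend{\TMM{}}]$ from which no further transition is possible, the $\TRANSF$-computation lifts directly to a deterministic sequence of $\silent$-transitions in the parallel composition. The very last $\TRANSF$-step, which would enter $\final[\TRANSF]$, is by construction redirected to $\StFinit$, delivering the configuration specified on the right-hand side of the claim.

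Concatenating these two deterministic internal computations yields the desired deterministic internal computation, with the set $\IS{\InUF}$ of intermediate states obtained as the union of the intermediate states of both phases. The only mild subtlety (the main thing to verify carefully) is the determinism condition of Definition~\ref{def:comp}: I must check that at every intermediate configuration the only enabled action is the prescribed $\silent$-action of the parallel composition. This amounts to observing, for the input phase, that the synchronization constraint imposed by the scope $\{\chanrtm\}$ rules out unmatched send/receive transitions and that $\TMsend{\TMM{}}$ offers at each point exactly one communication action; and, for the computation phase, that $\TRANSF$ is deterministic and $\TMsend{\TMM{}}$ is blocked in its final state so that no additional transitions become available in the product.
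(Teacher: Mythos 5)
Your proposal is correct and follows exactly the route the paper intends: the paper states this as a Fact without an explicit proof, relying on the two-stage construction of $\InUF$ (synchronised receipt of $\code{\TMM{}}$ over $\chanrtm{}$ followed by the embedded deterministic computation of $\TRANSF$), which is precisely the decomposition you spell out. Your additional care about the determinism condition --- that the restriction to $\{\chanrtm\}$ blocks unmatched communications and that $\TMsend{\TMM{}}$ and $\TRANSF$ each offer exactly one step at every point --- is the detail the paper leaves implicit, and you have verified it correctly.
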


\paragraph{A universal \RTM{} for branching degree bound $\bdegbound$}

For a fixed branching degree bound $\bdegbound$, we define the \RTM{}
  $\UbRTM=(\UbRTMStates,\UbRTMtrel,\UbRTMinit,\UbRTMfinal)$
as the union of the fragments $\InUF$, $\StF$ and $\SpF$ defined
above: the set of states of each particular $\UbRTM$ is defined as the
union of the sets of states of the fragments:
\begin{equation*}
  \UbRTMStates=
    \InUFStates
  \cup
    \StFStates
  \cup
    \SpFStates
\enskip;
\end{equation*}
the transition relation of $\UbRTM$ is the union of the
transition relations of all fragments:
\begin{equation*}
  \UbRTMStates=
    \InUFtrel
  \cup
    \StFtrel
  \cup
    \SpFtrel
\enskip;
\end{equation*}
the initial state of $\UbRTM$ is the initial state of $\InUF$:
\begin{equation*}
  \UbRTMinit=\InUFinit
\enskip;\ \text{and}
\end{equation*}
the set of final states of $\UbRTM$ is
\begin{equation*}
  \UbRTMfinal=\{\SpFstate[{1,\acta[1],\dots,\acta[k]}]
                \mid 0\leq k\leq B\ \&\ \acta[1],\dots,\acta[k]\in\Actt\}
\enskip.
\end{equation*}
The following theorem establishes that $\UbRTM{}$ is universal up to $\bdegbound$.
}

\begin{theorem}\label{thm:UbRTM}
 For all \RTM{}s $\TMM$ with a branching degree bounded by
 $\bdegbound$, it holds that
  $\lts{\TMM} \bbisimed 
   \aep{\{\chanrtm\}}{\TMsend{\TMM} \parc \UbRTM}$.
\end{theorem}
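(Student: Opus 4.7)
The plan is to leverage Theorem~\ref{thm:BBCTSS} almost verbatim: the universal \RTM{} $\UbRTM$ was built by reusing the state fragment $\StF$ and step fragment $\SpF$ of the generic simulator $\BBCTSS$ and replacing only its initialisation fragment by $\InUF$. Hence, once the initial communication of the G\"odel number along $\chanrtm{}$ has been completed, the parallel composition $\aep{\{\chanrtm\}}{\TMsend{\TMM} \parc \UbRTM}$ behaves exactly like the simulator $\BBCTSS$ that Theorem~\ref{thm:BBCTSS} constructs for the particular transition system $\lts{\TMM{}}$. So the task reduces to handling the initialisation phase and then invoking the bisimulation argument of that theorem unchanged.

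First I would observe that, by the definition of parallel composition with synchronisation set $\{\chanrtm\}$, every matching send/receive pair between $\TMsend{\TMM{}}$ and $\InUF$ along $\chanrtm{}$ becomes a $\silent$-transition, and these are the only transitions $\TMsend{\TMM{}}$ contributes. Combined with Fact~\ref{fact:confs:initU}, this yields a deterministic internal computation from the initial state of $\aep{\{\chanrtm\}}{\TMsend{\TMM} \parc \UbRTM}$ to a configuration in which $\TMsend{\TMM{}}$ has terminated and $\UbRTM$ has reached $(\StFinit, \tpleft\code{\outsetsym}\tpsepc\code{\finaltestsym}\tphd{\tpright}\statecode{\init[{\lts{\TMM{}}}]})$---exactly the post-initialisation configuration of the $\BBCTSS$ that Theorem~\ref{thm:BBCTSS} would associate with $\lts{\TMM{}}$.

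Next I would define the candidate divergence-preserving branching bisimulation as $\brelsym=\brelsym[\init]\cup\bigcup_{\states}\brelsym[\states]$, where
\[
  \brelsym[\init]=\{(\init[{\lts{\TMM{}}}],\statet)\mid \statet\in\IS{\InUF}\}
\]
pairs the initial configuration of $\lts{\TMM{}}$ with every intermediate state of the joint initialisation computation from the previous paragraph, and, for every configuration $\states$ of $\lts{\TMM{}}$, $\brelsym[\states]$ is exactly the family used in the proof of Theorem~\ref{thm:BBCTSS}, relating $\states$ to all intermediate states of the state- and step-fragment computations at $\states$. Because the initialisation computation is deterministic and entirely silent, any left-hand transition out of $\init[{\lts{\TMM{}}}]$ is matched by first silently completing initialisation on the right and then invoking the Theorem~\ref{thm:BBCTSS} argument, while any silent right-hand initialisation step is absorbed by stuttering on the left.

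Verification proceeds state by state, and for pairs in $\brelsym[\states]$ it is literally the one from Theorem~\ref{thm:BBCTSS}, using $\bdegbound$-boundedness of $\lts{\TMM{}}$ exactly where the earlier proof did. The step I expect to require most care is the divergence-preservation clause at initialisation: I must check that no infinite $\silent$-sequence of $\lts{\TMM{}}$ is masked indefinitely by the right-hand initialisation, and, conversely, that the finite right-hand initialisation does not itself constitute a divergence unmatched on the left. The first holds because $\IS{\InUF}$ is finite, so each $\brelsym[\init]$-related silent path terminates and hands over to the Theorem~\ref{thm:BBCTSS} argument for $\init[{\lts{\TMM{}}}]$; the second holds because each finite silent right-hand path is mirrored on the left by zero steps, and $\InUF$ contains no silent loop. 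With these checks the relation meets all clauses of divergence-preserving branching bisimilarity and, noting that the pair of initial states belongs to $\brelsym[\init]$, the theorem follows.
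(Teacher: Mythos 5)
Your proposal is correct and follows essentially the same route as the paper's own proof: it builds the divergence-preserving branching bisimulation as $\brelsym[\init]\cup\bigcup_{\states}\brelsym[\states]$, with $\brelsym[\init]$ covering the intermediate states $\IS{\InUF}$ of the joint initialisation computation from Fact~\ref{fact:confs:initU} and $\brelsym[\states]$ reusing the relations from the proof of Theorem~\ref{thm:BBCTSS} (the only cosmetic difference being that the paper explicitly pairs those states with the terminated configuration $\finalc[{\TMsend{\TMM{}}}]$ of the sender, which you acknowledge implicitly).
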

\fullonly{%
\begin{proof}
 Referring to Fact~\ref{fact:confs:initU} we define the following
 relation on configurations of the parallel composition
   $\aep{\{\chanrtm\}}{\TMsend{\TMM} \parc \UbRTM}$:
 \begin{equation*}
   \brelsym[\init] = \{(\init[{\lts{\TMM}}],(\statet[1],\statet[2])) \mid \statet \in \IS{\InUF} \}
 \enskip,
 \end{equation*}
 and referring to Facts~\ref{fact:confs:state} and
 \ref{lem:confs:step}, we define, for every
 $\states\in\StatesS[\LTST]$, the relation
 \begin{equation*}
   \brelsym[\states] = 
      \{(\states, (\finalc[{\TMsend{\TMM{}}}],\statet)) \mid \statet \in \IS{\StF,\states}\cup\IS{\SpF,1,\states}\cup\IS{\SpF,2,\states}\}
 \enskip.
 \end{equation*}
 Then it is straightforward to verify that the binary relation
 \begin{equation*}
    \brelsym = \brelsym[\init] \cup 
               \textstyle\bigcup_{\states\in\StatesS[\LTST]} \brelsym[\states]
  \end{equation*}
  is a divergence-preserving branching bisimulation from $\lts{\TMM{}}$ to
  $\lts{\UbRTM}$.
  Since
    $(\initc[{\TMsend{\TMM{}}}],(\init[\UbRTM{}],\tphd{\tpblank}))\in\IS{\InUF}$,
  it follows that $(\init[{\lts{\TMM}}],(\initc[{\TMsend{\TMM{}}}],(\init[\UbRTM{}],\tphd{\tpblank})) )\in\brelsym$,
  and hence $\lts{\TMM}\bbisimed \aep{\{\chanrtm\}}{\TMsend{\TMM} \parc \UbRTM}$.
\end{proof}%
}

At the expense of introducing divergence it is possible to define a
universal \RTM{}. Recall that, by Proposition~\ref{prop:Phillips},
every effective transition system is branching bisimilar to a
boundedly branching transition system. The proof of this result
exploits a trick, first described in~\cite{BBK87} and adapted by
Phillips in \cite{Phi93}, to use a divergence with (infinitely many)
states of at most a branching degree of $2$ to simulate, up to
branching bisimilarity, a state with arbitrary (even countably
infinite) branching degree.
\fullonly{%
The auxiliary Turing machine $\TRANSF$, used in the fragment $\InUF$
to compute the codes of $\outsetsym$, $\finaltestsym$ and
$\init[\lts{\TMM{}}]$ for $\lts{\TMM{}}$ can be adapted to deliver,
instead, the codes of functions $\outsetsym'$, $\finaltestsym'$ and
$\init[\LTST]$ of a computable transition system $\LTST$ with
branching degree bounded by $2$ such that
$\LTST\bbisim\lts{\TMM{}}$. Thus, we get the following corollary to Theorem~\ref{thm:UbRTM}.}

\begin{corollary}\label{cor:U2RTM}
 There exists an \RTM{} $\URTM$ such that
    $\lts{\TMM} \bbisim 
       \aep{\{\chanrtm\}}{\TMsend{\TMM} \parc \URTM}$
   for every \RTM{} $\TMM$.
\end{corollary}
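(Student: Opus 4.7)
The plan is to combine Phillips' result (Proposition~\ref{prop:Phillips}) with Theorem~\ref{thm:UbRTM} for the specific branching degree bound $\bdegbound = 2$. Since Phillips' construction shows that every effective transition system is branching bisimilar to a computable one with branching degree bounded by $2$, and since $\lts{\TMM{}}$ is always computable (Proposition~\ref{prop:lts-RTM-comput}), hence effective, for every \RTM{} $\TMM{}$ there exists a computable transition system $\LTST[\TMM{}]$ with branching degree bounded by $2$ such that $\LTST[\TMM{}]\bbisim\lts{\TMM{}}$. My target \RTM{} $\URTM$ will be a variant of $\UbRTM$ (with $\bdegbound = 2$) that, given $\code{\TMM{}}$ on its tape, simulates $\LTST[\TMM{}]$ up to divergence-preserving branching bisimilarity.

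Concretely, I would replace only the initialisation fragment $\InUF$ of $\UbRTM$ (for $\bdegbound=2$). The original $\InUF$ uses an auxiliary deterministic Turing machine $\TRANSF$ that, from $\code{\TMM{}}$, computes indices $\code{\outsetsym}$ and $\code{\finaltestsym}$ of the partial recursive functions associated with $\lts{\TMM{}}$, along with $\statecode{\init[\lts{\TMM{}}]}$. The adaptation is to replace $\TRANSF$ by a deterministic Turing machine $\TRANSF{'}$ that, from $\code{\TMM{}}$, instead computes indices $\code{\outsetsym'}$ and $\code{\finaltestsym'}$ of the partial recursive functions associated with $\LTST[\TMM{}]$, together with $\statecode{\init[{\LTST[\TMM{}]}]}$. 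The existence of $\TRANSF{'}$ follows from the fact that Phillips' construction in Proposition~\ref{prop:Phillips} is itself effective: given an algorithm enumerating the transitions and final states of $\lts{\TMM{}}$ (obtainable uniformly from $\code{\TMM{}}$ via Proposition~\ref{prop:lts-RTM-comput}), one can uniformly produce indices describing $\LTST[\TMM{}]$ as computable functions. Everything else in the construction of $\UbRTM$---the state fragment $\StF$ and the step fragment $\SpF$---remains unchanged, since they operate purely on the tape representation of the codes.

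With this modified initialisation fragment plugged in, the same reasoning as in the proof of Theorem~\ref{thm:UbRTM} yields a divergence-preserving branching bisimulation witnessing
\begin{equation*}
  \LTST[\TMM{}] \bbisimed
  \aep{\{\chanrtm\}}{\TMsend{\TMM} \parc \URTM}
\enskip.
\end{equation*}
Composing this with $\LTST[\TMM{}]\bbisim\lts{\TMM{}}$ from Phillips' theorem, and using that divergence-preserving branching bisimilarity is included in (ordinary) branching bisimilarity, we obtain
\begin{equation*}
  \lts{\TMM{}} \bbisim
  \aep{\{\chanrtm\}}{\TMsend{\TMM} \parc \URTM}
\end{equation*}
as required, since branching bisimilarity is transitive.

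The one nontrivial point is verifying that Phillips' transformation is indeed effective uniformly in $\code{\TMM{}}$, i.e.\ that $\TRANSF{'}$ exists as a genuine deterministic Turing machine. This is essentially built into the classical presentation of the result (the construction in \cite{Phi93} is an explicit, finitely described procedure acting on a recursive enumeration of transitions), so the argument reduces to the observation that a recursive enumeration of the transitions and final states of $\lts{\TMM{}}$ can be obtained uniformly from $\code{\TMM{}}$ by inspecting the finite description of $\TMM{}$ and simulating the semantics in Definition~\ref{def:RTMopsem}. The price paid for handling arbitrary branching degree---precisely the divergence introduced by Phillips' construction---is exactly what prevents the result from being strengthened to divergence-preserving branching bisimilarity, in line with Example~\ref{ex:divergence}.
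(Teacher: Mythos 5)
Your proposal is correct and follows essentially the same route as the paper: the paper likewise obtains this corollary by adapting the auxiliary machine $\TRANSF$ in the initialisation fragment of $\UbRTM$ so that it computes, from $\code{\TMM{}}$, the codes of the functions $\outsetsym'$, $\finaltestsym'$ and the initial state of a branching-degree-$2$ computable transition system branching bisimilar to $\lts{\TMM{}}$ as given by Proposition~\ref{prop:Phillips}, and then invokes Theorem~\ref{thm:UbRTM}. Your explicit attention to the uniformity of Phillips' construction in $\code{\TMM{}}$ is a point the paper leaves implicit, but it is the same argument.
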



\section{A process calculus}\label{sec:expl-int}

\fullonly{%
  We have presented reactive Turing machines and studied the ensued
  notion of executable behaviour modulo (divergence-preserving)
  branching bisimilarity. In process theory, behaviour is usually
  specified in some process calculus. In this section, we 
  present a simple process calculus with only standard
  process-theoretic notions and establish that every executable
  behaviour can be defined with a finite specification in our calculus
  up to divergence-preserving branching bisimilarity.  }
\fullonly{%
The process calculus we define below is closest to value-passing
{CCS} \cite{Mil89} for a finite set of data. It deviates from
value-passing {CCS} in that it combines parallel
composition and restriction in one construct (i.e., it includes the
special form of parallel composition already presented in
Sect.~\ref{sec:rtm}), omits the relabelling construction, and
distinguishes successful and unsuccessful termination. Our process
calculus may also be viewed as a special instance of the fragment of
$\TCPt$, excluding sequential composition (see~\cite{BBR09}).}

\fctonly{%
  A well-known result from the theory of automata and formal languages
  is that the formal languages accepted by Turing machines correspond
  with the languages generated by an unrestricted grammar. A
  \emph{grammar} is a formal system for describing a formal language. The
  corresponding notion in concurrency theory is the notion of
  \emph{recursive specification}, which is a formal system for
  describing behaviour. In this section, we show that the behaviours
  of \RTM{}s correspond with the behaviours described by so-called
  $\TCPt$ recursive specifications. The process theory $\TCPt$ is a
  general theory for describing behaviour, encompassing the key
  features of the well-known process theories \ACPt{}~\cite{BK85},
  \CCS{}~\cite{Mil89} and \CSP{}~\cite{Hoa85}.
}

\fctonly{%
  We briefly introduce the syntax of \TCPt{} and informally
  describe its operational semantics. We refer to the
  textbook~\cite{BBR09} for an elaborate treatment.}
Recall the finite sets $\Chan$ of channels and $\Datab$ of data
on which the notion of parallel composition defined in
Sect.~\ref{sec:rtm} is based.
For every subset $\Chan'$ of $\Chan$ we define a special set of
actions $\IAct[{\Chan'}]$ by:
\fullonly{\begin{equation*}}\fctonly{$}
  \IAct[{\Chan'}] = \{\recv{\chan}{\datum}, \send{\chan}{\datum}
                 \mid \datum \in \Datab, \chan \in \Chan' \}
\fctonly{$}
\fullonly{\enskip}.
\fullonly{\end{equation*}}
The actions $\recv{\chan}{\datum}$ and $\send{\chan}{\datum}$ denote
the events that a datum $\datum$ is received or sent along
\emph{channel}~$\chan$. Furthermore, let $\Name$ be a countably
infinite set of names.  The set of \emph{process expressions} $\PExp$
is generated by the following grammar ($\acta
\in \Actt \cup \IAct{}, \name \in \Name, \Chan'\subseteq \Chan$):
\begin{equation*}
   \pexp~::=~%
     \dl\ \mid\ \emp\ \mid\ 
     \pref{\acta}{\pexp}\ \mid\
     \pexp\altc\pexp\ \mid\
     \aep{\Chan'}{\pexp \parc \pexp}\ \mid\
     \name
\enskip.
\end{equation*}
\fullonly{Let us briefly comment on the operators in this syntax.}
The constant $\dl$ denotes \emph{deadlock}, the unsuccessfully
terminated process.
The constant $\emp$ denotes \emph{skip}, the successfully terminated process.
For each action $\act \in \Actt \cup \IAct$ there is a unary operator
$\pref{\act}{}$ denoting action prefix; the process denoted by
$\pref{\act}{\pexpp}$ can do an $\act$-transition to the process
denoted by $\pexpp$. 
The binary operator $\altc$ denotes \emph{alternative composition} or 
\emph{choice}.
The binary operator $\aep{\Chan{}'}{\_ \parc \_}$ denotes the special kind
of \emph{parallel composition} that we have also defined on \RTM{}s.  It
enforces communication along the channels in $\Chan{}'$, and communication
results in $\silent$.
\fctonly{(By including the restricted kind of parallel
composition, we deviate from the definition of \TCPt{} discussed
in~\cite{BBR09}, but we note that our notion of parallel composition is
definable with the operations $\parc$, $\encap{\_}{\_}$ and
$\abstr{\_}{\_}$ of \TCPt{} in~\cite{BBR09}.)}%

\fctonly{%
A \emph{recursive specification} $\RSpec$ is a set of
equations of the form: $\name \defeq \pexp$, with as left-hand side a
name $\name$ and as right-hand side a \TCPt{} process expression
$\pexp$.  It is required that a recursive specification $\RSpec$
contains, for every $\name \in \Name$, at most one equation with
$\name$ as left-hand side; this equation is referred to as the
\emph{defining equation} for $\name$ in $\Name$.  Furthermore, if some
name occurs in the right-hand side of some defining equation, then the
recursive specification must include a defining equation for it.
Let $\RSpec$ be a recursive specification, and let $\pexpp$ be a
process expression. There is a standard method to associate with
$\pexpp$ a transition system $\lts[\RSpec]{\pexpp}$. The details can
be found, e.g., in~\cite{BBR09}%
}%
\fctsubmissiononly{%
  , and, for the convenience of the reviewer, can also be found in
  Appendix~\ref{app:tcptos}%
}%
\fctonly{.}%
\fullonly{%
A \emph{recursive specification} $\RSpec$ is a set of equations
\begin{equation*}
\RSpec=
  \{\name \defeq \pexp \mid
        \name\in\Name\ \&\ \pexp\in\PExp
  \}
\end{equation*}
satisfying the requirements that
\begin{enumerate}
\renewcommand{\theenumi}{\roman{enumi}}
\renewcommand{\labelenumi}{(\theenumi)}
\item for every $\name \in \Name$ it includes at most one equation
  with $\name$ as left-hand side, which is referred to as the
  \emph{defining equation} for $\name$; and
\item if some name $\name$ occurs in the right-hand side $\pexp'$ of
  some equation $\name'=\pexp'$ in $\RSpec$, then $\RSpec$ must
  include a defining equation for $\name$.
\end{enumerate}
Let $\RSpec$ be a recursive specification and let $\pexp$ be a process
expression. We say that $\pexp$ is \emph{$\RSpec$-interpretable} if
all occurrences of names in $\pexp$ have a defining equation in
$\RSpec$.

We use Structural Operational Semantics~\cite{Plo04a} to associate a
transition relation with process expressions:
let $\trel$ be the $(\Actt\cup\IAct)$-labelled transition relation
induced on the set of process expressions by operational rules in
Table~\ref{tbl:sos-tcpt}. Note that the operational rules presuppose a
recursive specification $\RSpec$. 


\begin{framedtable}[htb]
 \begin{center}
 \begin{osrules}
   \osrule*{}{\term{\emp}} \qquad \qquad \qquad \qquad
     \osrule*{}{\pref{\act}{\pexp} \step{\act} \pexp} \\[2em]
   \osrule*{\pexpp \step{\act} \pexpp'}%
           {\pexpp \altc \pexpq \step{\act} \pexpp'} \qquad
     \osrule*{\pexpq \step{\act} \pexpq'}%
             {\pexpp \altc \pexpq \step{\act} \pexpq'} \qquad
     \osrule*{\term{\pexpp}}{\term{(\pexpp \altc \pexpq)}} \qquad
     \osrule*{\term{\pexpq}}{\term{(\pexpp \altc \pexpq)}} \\[2em]
   \osrule*{\pexpp \step{\act} \pexpp' & \act\not\in\IAct[\Chan']}%
           {\aep{\Chan'}{\pexpp \parc \pexpq} \step{\act} \aep{\Chan'}{\pexpp' \parc \pexpq}} \qquad
     \osrule*{\pexpq \step{\act} \pexpq' & \act\not\in\IAct[\Chan']}%
             {\aep{\Chan'}{\pexpp \parc \pexpq} \step{\act} \aep{\Chan'}{\pexpp \parc \pexpq'}} \qquad
     \osrule*{\term{\pexpp} & \term{\pexpq}}%
             {\term{\aep{\Chan'}{\pexpp \parc \pexpq}}} \\[2em]
     \osrule*{\pexpp \step{\recv{\chan}{\datum}} \pexpp' &
              \pexpq \step{\send{\chan}{\datum}} \pexpq'}%
             {\aep{\Chan'}{\pexpp \parc \pexpq}\step{\silent}
              \aep{\Chan'}{\pexpp' \parc \pexpq'}} 
\qquad
     \osrule*{\pexpp \step{\send{\chan}{\datum}} \pexpp' &
              \pexpq \step{\recv{\chan}{\datum}} \pexpq'}%
             {\aep{\Chan'}{\pexpp \parc \pexpq}\step{\silent}
              \aep{\Chan'}{\pexpp' \parc \pexpq'}} \\[2em]
   \osrule*{\pexpp \step{\act} \pexpp' &
            (\name \defeq \pexpp) \in \RSpec}%
           {\name \step{\act} \pexpp'} \qquad 
     \osrule*{\term{\pexpp} &
              (\name \defeq \pexpp) \in \RSpec}%
             {\term{\name}}
  \end{osrules}
 \caption{Operational rules for a recursive specification 
   $\RSpec$, with $\pexpp,\pexpp',\pexpq,\pexpq'\in\PExp$, $\acta
\in \Actt \cup \IAct{}$, $\name \in \Name$, $\chan\in\Chan$,
$\datum\in\Data$, and $\Chan'\subseteq \Chan$.}\label{tbl:sos-tcpt}
 \end{center}
\end{framedtable}

\begin{definition}\label{def:lts-rspec}
  Let $\RSpec$ be a recursive specification and let $\pexpp$ be an
  $\RSpec$-interpretable process expression.
  We define the labelled transition system
  \begin{equation*}
    \lts[\RSpec]{\pexpp}=
     (\StatesS[\pexpp],\trel[\pexpp],\init[\pexpp],\final[\pexpp])
  \end{equation*}
  associated with $\pexpp$ and $\RSpec$ as follows:
  \begin{enumerate}
  \item the set of states $\StatesS[\pexpp]$ consists of all process
    expressions reachable from $\pexpp$;
  \item the transition relation $\trel[\pexpp]$ is the restriction
    to $\StatesS[\pexpp]$ of the transition relation $\trel$ defined
    on all process expressions by the operational rules in
    Table~\ref{tbl:sos-tcpt}, i.e.,
      $\trel[\pexpp]=\trel\cap
        (\StatesS[\pexpp]\times (\Actt\cup\IAct)\times\StatesS[\pexpp])$.
  \item the process expression $\pexpp$ is the initial state,
    i.e. $\init[\pexpp]=\pexpp$; and
  \item the set of final states consists of all process expressions
    $\pexpq\in\StatesS[\pexpp]$ such that $\term{\pexpq}$, i.e.,
      $\final[\pexpp]=\final\cap\StatesS[\pexpp]$.
  \end{enumerate}
\end{definition}

It is straightforward to associate with every \RTM{}
  $\TM{}=(\States_{\TM{}},\trel_{\TM{}},\init_{\TM{}},\final_{\TM{}})$ 
a recursive specification $\RSpecTMinf{}$ and a process expression $\pexp$
such that   $\lts{\TM{}}$
is divergence-preserving branching bisimilar to
  $\lts[\RSpecTMinf{}]{\pexp}$:
\begin{enumerate}
\item Associate a distinct name $\name[M]_c$ with every configuration $c$
  of $\TM{}$; and
\item let $\RSpecTM{}$ consist of all equations
\begin{equation*}
  \name[M]_c\defeq{}\textstyle\sum_{(\acta,c')\in\outset{c}}\pref{\acta}{\name[M]_{c'}}\
    [\mbox{}\altc\emp{}]_{\final{c}}
    \qquad\text{($c$ a configuration of $\TM{}$)}
\enskip.
\end{equation*}
  (We use summation $\sum$ to abbreviate an $\outset{c}$-indexed
  alternative composition, and indicate by
  $[\mbox{}\altc\emp{}]_{\final{c}}$ that the summand $\emp{}$ is only
  included if $\final{c}$ holds.)
\end{enumerate}
It can be easily verified that
  $\lts{\TM{}}\bbisimed\lts[\RSpecTM{}]{\name_{({\init[\TM{}]},\tphd{\tpblank})}}$.

Our main goal in this section is to show that one does not have
to resort to infinite recursive specifications: we establish
that for every \RTM{} $\TM{}$ there exists a \emph{finite} recursive
specification $\RSpecTM{}$ and an $\RSpecTM{}$-interpretable process
expression $\pexp$ such that
  $\lts{\TM{}}$
is divergence-preserving branching bisimilar to
  $\lts[\RSpecTM{}]{\pexp}$.
Our specification consists of two parts: a generic finite
specification of the behaviour of a tape, and a finite specification
of a control process that is specific for the \RTM{} $\TM{}$ under
consideration. In the end we establish that an \RTM{} $\TM{}$ is
finitely specified by the parallel composition of its associated
control with a process modelling the tape. 

It is convenient to know that divergence-preserving branching
bisimilarity is compatible with the notion of parallel composition in
our calculus, allowing us to establish the correctness of both
components separately.
\begin{lemma} \label{lem:congruence}
  Let $\RSpec$ be a recursive specification and let $\pexpp[1]$,
  $\pexpp[2]$, $\pexpq[1]$ and $\pexpq[2]$ be $\RSpec$-interpretable
  process expressions.
  If $\lts[\RSpec]{\pexpp[1]}\bbisimed\lts[\RSpec]{\pexpp[2]}$ and
  $\lts[\RSpec]{\pexpq[1]}\bbisimed\lts[\RSpec]{\pexpq[2]}$, then
$\lts[\RSpec]{\aep{\Chan'}{\pexpp[1]\parc\pexpq[1]}}
     \bbisimed
   \lts[\RSpec]{\aep{\Chan'}{\pexpq[1]\parc\pexpq[2]}}$.
\end{lemma}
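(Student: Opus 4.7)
The plan is to build a divergence-preserving branching bisimulation on the product transition systems by pairing states componentwise. Let $\brelsym[1]$ and $\brelsym[2]$ be divergence-preserving branching bisimulations witnessing $\lts[\RSpec]{\pexpp[1]}\bbisimed\lts[\RSpec]{\pexpp[2]}$ and $\lts[\RSpec]{\pexpq[1]}\bbisimed\lts[\RSpec]{\pexpq[2]}$, respectively, and define
\[
\brelsym = \{(\aep{\Chan'}{\pexpp\parc\pexpq},\aep{\Chan'}{\pexpp'\parc\pexpq'}) \mid \pexpp\brel[1]\pexpp'\text{ and }\pexpq\brel[2]\pexpq'\}.
\]
By assumption the initial states of the two products are in $\brelsym$, so it suffices to verify the six clauses of Definition~\ref{def:bbisim} for $\brelsym$.

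For the simulation clauses, I would proceed by case analysis on a product transition $\aep{\Chan'}{\pexpp\parc\pexpq}\step{\actt}\aep{\Chan'}{\pexpp''\parc\pexpq''}$: inspection of the operational rules shows it is either a transition inherited from one component with label outside $\IAct[\Chan']$, or a silent synchronisation of a matched $\recv{\chan}{\datum}/\send{\chan}{\datum}$ pair through some $\chan\in\Chan'$. In the unilateral case, the simulation clause of $\brelsym[1]$ or $\brelsym[2]$ supplies a silent prefix followed by a matching action, and these lift directly to the product because the other component stays put. In the synchronisation case, I would apply the simulation clauses of $\brelsym[1]$ and $\brelsym[2]$ independently to the matched receive and send, obtaining two silent prefixes followed by the respective matching actions; these prefixes become unilateral silent steps of the product (performed first on one side, then the other) and are closed off by a single synchronisation step. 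The termination clauses are immediate: a product is terminating iff both components are, and componentwise termination transfers through $\brelsym[1]$ and $\brelsym[2]$.

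The main obstacle is the two divergence-preservation clauses. Consider an infinite silent sequence $\aep{\Chan'}{\pexpp[0]\parc\pexpq[0]}\step{\silent}\aep{\Chan'}{\pexpp[1]\parc\pexpq[1]}\step{\silent}\cdots$ every state of which is $\brelsym$-related to a fixed $\aep{\Chan'}{\pexpp\parc\pexpq}$, so $\pexpp[i]\brel[1]\pexpp$ and $\pexpq[i]\brel[2]\pexpq$ hold throughout. Each step is either a unilateral silent step or a synchronisation. If at least one step is a synchronisation, the simulation argument from the previous paragraph already yields a non-empty silent path in the right-hand product ending in a $\brelsym$-related state, matching that step and discharging the clause. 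Otherwise every step is a unilateral silent step, and by pigeonhole one of the two components contributes infinitely many; those steps form an infinite $\silent$-sequence in that component all of whose states are related to $\pexpp$ (or $\pexpq$), so the divergence-preservation clause of $\brelsym[1]$ (or $\brelsym[2]$) supplies a non-empty silent path on the right, which lifts back to the product with the other component held fixed. The key insight making the case analysis work is that a single synchronisation step can always be matched wholesale via the simulation clauses, so divergence-preservation of $\brelsym$ reduces cleanly to divergence-preservation of each of the two component bisimulations.
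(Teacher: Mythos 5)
The paper states Lemma~\ref{lem:congruence} without proof, so there is no in-paper argument to compare against; your proposal supplies the standard congruence proof, and it is correct. The componentwise relation $\brelsym$ is the right choice (parallel composition, unlike choice, needs no root condition for branching bisimilarity), your treatment of the synchronisation case --- matching the send and receive separately, interleaving the two silent prefixes as unilateral product steps, and closing with one synchronisation --- is exactly how the simulation clauses go through, and your handling of divergence preservation (first synchronisation step, if any, already yields a non-empty matching $\silent$-path; otherwise pigeonhole an infinite unilateral subsequence into one component and invoke that component's divergence clause) is sound: the extracted subsequence starts at the initial component state after deleting stutters, all its states remain related to the fixed component, and the resulting $\stepsplus{}$ path lifts with the other component held fixed because that component's relation to its partner is unchanged. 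Two small glosses worth tightening in a full write-up: the termination clauses are not literally immediate, since matching $\term{(\aep{\Chan'}{\pexpp\parc\pexpq})}$ requires first executing the silent prefixes supplied by clause~3 of $\brelsym[1]$ and $\brelsym[2]$ in each component before reaching a terminating product state; and one should note explicitly that every state reachable from $\aep{\Chan'}{\pexpp[1]\parc\pexpq[1]}$ has the form $\aep{\Chan'}{\pexpp\parc\pexpq}$ with $\pexpp$, $\pexpq$ reachable from $\pexpp[1]$, $\pexpq[1]$ respectively, so that $\brelsym$ really is a relation between the two state spaces and membership in $\brelsym$ determines the component pairs uniquely. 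Neither point is a gap in the argument.
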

}

\fullonly{%
\subsection{Tape}  \label{sec:finitetapespec}

We want to present a finite specification of the behaviour of the tape
of an \RTM{}, but before we do so, we give a straightforward infinite
specification. As an intermediate correctness result, we then
establish that the behaviour defined by our finite specification is
divergence-preserving branching bisimilar to the behaviour induced by
our infinite specification. Recall that our definition of tape
instance (see p.~\pageref{tapeinstance}) uses a tape head marker to
indicate the position of the tape head; the state of the tape is,
therefore, uniquely represented by a tape instance. The behaviour of a
tape in the state represented by the tape instance
$\tpstrdL\tphd{\datumd}\tpstrdR$ is characterised by the following
equation:
\begin{multline} \label{eq:TapeSpec}
 \name[T]_{\tpstrdL\tphd{\datumd}\tpstrdR}  
   \defeq\send{\chanr}{\datumd}.\name[T]_{\tpstrdL\tphd{\datumd}\tpstrdR}
          \altc\sum_{\datume \in \Datab} 
          \recv{\chanw}{\datume}.\name[T]_{\tpstrdL\tphd{\datume}\tpstrdR} \\
          \altc\recv{\chanm}{\tpmv{L}}.\name[T]_{\tphdL{\tpstrdL}\datumd\tpstrdR}
          \altc\recv{\chanm}{\tpmv{R}}.\name[T]_{\tpstrdL\datumd\tphdR{\tpstrdR}}
          \altc\emp{}\;.
\end{multline}
The equation expresses that a tape, when it is in the state
represented by tape instance $\tpstrdL\tphd{\datumd}\tpstrdR$, can
either output the datum $\datumd$ under the head along its
\emph{read-channel} $\chanr$, input a new datum $\datume$ along its
\emph{write-channel} $\chanw$ which then replaces the datum $\datumd$
under the head, or receive over its \emph{move-channel} $\chanm$ the
instruction to move the head either one position to the left
($\tpstrdL$) or one position to the right ($\tpstrdR$). It is for
notational convenience and not essential that we here separate the
operations of reading, writing and moving, instead of combining them
in a single communication. (Note that separating the operations is
harmless, since our specification of the finite control specified
below (see Section~\ref{subsec:fincontrol}) will ensure that always an
appropriate combination of tape operations is carried out one after
the other, and the eventual parallel composition of the tape and the
finite control eventually abstracts from the actual calling of tape
operations.)
The additional summand $\emp$ of the
right-hand side of  the equation indicates that the state of the tape
represented by $\tpstrdL\tphd{\datumd}\tpstrdR$ is final; this is
needed to ensure that the parallel composition of a tape with a finite
control for an \RTM{} is final whenever the finite control is in a
final state.

We denote by $\RSpecTinf$ the recursive specification consisting of
all instantiations of Eqn.~\eqref{eq:TapeSpec} with concrete values
for $\datumd$, $\datume$, $\tpstrdL$ and $\tpstrdR$. It is easy to see
that the $\RSpecTinf$-interpretable process expression
$\name[T]_{\tphd{\tpblank}}$ completely specifies all possible
behaviour of the tape when it is started with an empty tape
instance. It is also clear that there are infinitely many combinations
of concrete values for $\datumd$, $\datume$, $\tpstrdL$ and
$\tpstrdR$, so $\RSpecTinf$ is infinite.

For our finite specification of the behaviour of the tape, we make use
of a seminal result in the process theoretic literature, to the effect that
the behaviour of a queue can be specified in the process calculus at
hand with finitely many equations. Note that the state of a queue is
uniquely represented by its contents, a string $\tpstrd$; we denote
the behaviour of a queue with contents $\tpstrd$ by
$\name[Q]_{\tpstrd}$. Then the behaviour of a queue in all its
possible states is specified by the following infinite recursive
specification $\RSpecQinf$ (with $\datumd \in \Datab$ and $\tpstrd \in
\Datab^{*}$, and $\emptystr$ denoting the empty string):
\begin{spec}
 \name[Q]_{\emptystr} & 
   \defeq \textstyle\sum_{\datumd \in \Data}
          \recv{\chani}{\datumd}.\name[Q]_{\datumd}
          \altc\emp
 \;, \\
 \name[Q]_{\tpstrd\datumd} & 
   \defeq \send{\chano}{\datumd}.\name[Q]_{\tpstrd}
          \altc\textstyle\sum_{\datume \in \Data} 
              \recv{\chani}{\datume}.\name[Q]_{\datume\tpstrd\datumd}
          \altc\emp \;.
\end{spec}
The equation for $\name[Q]_{\emptystr}$ expresses that the empty queue
can only receive an arbitrary datum $\datumd$ along its
\emph{input-channel} $\chani$; the equation for
$\name[Q]_{\tpstrd\datumd}$ expresses that a queue containing at least
one element $\datumd$ at the front of the queue may also output
$\datumd$ along its \emph{output-channel} $\chano$.

Bergstra and Klop \cite{BK86} discovered the following intricate
finite specification $\RSpecQ$, consisting of six equations,
completely defining the behaviour of a queue. Its correctness has been
formally established by Bezem and Ponse \cite{BP97}:
\begin{spec}
 {\name[Q]}^{\chanj\chank}_{\chanp} &
  \defeq \sum_{\datumd \in \Datab}
                  \recv{\chanj}{\datumd}.%
                  \aep{\{\chanp\}}{{\name[Q]}^{\chanj\chanp}_{\chank} \parc 
                           (\emp\altc
                           \send{\chank}{\datumd}.{\name[Q]}^{\chanp\chank}_{\chanj})}
              \altc\emp
 \quad\text{for all $\{\chanj,\chank,\chanp\} = \{\chani,\chano,\chanl\}$.}
\end{spec}
The $\emp$-summands are not part of Bergstra and Klop's
specification. We have added them to make sure that every state of our
queue is final. It is easy to see that they have no further influence
on the behaviour of the specified process. Also, it can be easily
verified that the proof in \cite{BP97} is still valid, and that, in
fact, the proof establishes that the finite and infinite
specifications of the empty queue are divergence-preserving branching bisimilar.
\begin{theorem}\label{thm:expl-int:queue-fin}
 $\lts[\RSpecQinf]{\name[Q]_{\emptystr}}
     \bbisimed
   \lts[\RSpecQ]{\name[Q]^{\chani \chano}_{\chanl}}$.
\end{theorem}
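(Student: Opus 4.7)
The plan is to follow Bezem and Ponse's proof in~\cite{BP97} to obtain a branching bisimulation between $\lts[\RSpecQinf]{\name[Q]_{\emptystr}}$ and $\lts[\RSpecQ]{{\name[Q]}^{\chani\chano}_{\chanl}}$, and then supplement it with a divergence-freedom argument. Following~\cite{BP97}, I would characterise the reachable states of the finite queue by a canonical nested-parallel-composition structure, and extract from each reachable expression $\pexp$ a string $c(\pexp)\in\Datab^{*}$ representing the current contents of the queue (read off the spine of nested $\aep{\{\chanp\}}{\_ \parc \_}$'s, from output side to input side). The relation
\begin{equation*}
  \brelsym = \{(\name[Q]_{c(\pexp)}, \pexp) \mid \pexp \text{ reachable from } {\name[Q]}^{\chani\chano}_{\chanl}\}
\end{equation*}
is then a branching bisimulation: external $\recv{\chani}{\datumd}$- and $\send{\chano}{\datumd}$-transitions of the finite queue match the corresponding transitions of the infinite queue; internal $\silent$-transitions (forced communications on currently private channels) correspond to stuttering moves that leave $c(\pexp)$ unchanged; and every reachable state is a final state, matching the $\emp$-summand present in every equation of $\RSpecQinf$.

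To upgrade branching bisimilarity to the divergence-preserving variant, I would exploit that $\RSpecQinf$ admits no $\silent$-transitions whatsoever, so clauses~(5) and~(6) of Definition~\ref{def:bbisim} can only fail through an infinite sequence of consecutive $\silent$-transitions in the finite queue. A straightforward well-founded-measure argument based on the canonical form shows that this cannot happen: each $\silent$-step consumes a $\send{\chanp}{\datumd}$-prefix on a currently private channel, and the supply of such prefixes in any reachable expression is finite and can only be replenished by subsequent external actions. Hence the finite queue is divergence-free from every reachable state, and clauses~(5) and~(6) are vacuously satisfied.

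The main obstacle is the first part: identifying the canonical form of reachable states and verifying that $\brelsym$ satisfies the four transfer conditions of branching bisimulation. This is precisely the non-trivial content of~\cite{BP97}, whose argument carries over directly; the $\emp$-summands we have added in every state only introduce additional termination behaviour that is uniformly matched by the $\emp$-summands already present in every equation of $\RSpecQinf$, so Bezem and Ponse's bisimulation extends trivially to handle termination. The divergence-freedom supplement is then a short corollary of the same canonical-form analysis, and together these two ingredients yield $\lts[\RSpecQinf]{\name[Q]_{\emptystr}} \bbisimed \lts[\RSpecQ]{{\name[Q]}^{\chani\chano}_{\chanl}}$.
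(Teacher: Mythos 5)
Your proposal follows essentially the same route as the paper, which offers no independent proof of this theorem: it simply observes that Bezem and Ponse's correctness proof of Bergstra and Klop's queue specification remains valid after adding the $\emp$-summands and, on inspection, already yields a \emph{divergence-preserving} branching bisimulation. Your elaboration of the canonical-form bisimulation relation, the uniform matching of termination via the $\emp$-summands, and the well-founded-measure argument for divergence-freedom of the finite queue merely makes explicit the details the paper delegates to \cite{BP97}, and is consistent with that argument.
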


\begin{figure}[htb]
 {\footnotesize
  \begin{center}
  \begin{transsys}(51,15)(0,4)
   \graphset{Nadjust=n,Nmr=0,Nw=20,Nh=6}
   \node(right)(16,12){$\tpstrdR$}
   \node(left)(42,12){$\tpstrdL$}
   \graphset{Nw=5}
   \node(end)(29,12){$\qend$}
   \node(Hd)(26,3){$\datumd$}
   \put(16,1){$\name[H]_{\datumd}$:}
   \node[dash={.7 0}{}](new)(3,12){}
   \drawline[dash={.7 0}{},AHnb=0](47,15)(47,9)

   \graphset{curvedepth=3,ELside=l}
   \edge[ELpos=80,exo=-4,syo=1](Hd,new){insert ($\chani$)}
   \edge[ELpos=20,sxo=12,eyo=1](left,Hd){remove ($\chano$)}

   \put(-15,11){$\name[Q]_{\tpstrdR\qend\tpstrdL}$:}
   \drawline(0,16)(20,16)
  \end{transsys}
  \end{center}
 }
 \caption{Schematic representation of the tape process.}\label{fig:ov-tape}
\end{figure}

We proceed to explain how we use the queue to finitely specify
a tape. Our specification consists of a \emph{tape controller} that
implements the interface of the tape with its environment and uses the
queue to store information regarding the current tape instance. See
Fig.\ref{fig:ov-tape} for an illustration of how we use the
queue to store part of the tape instance
$\tpstrdL\tphd{\datumd}\tpstrdR$: $\tpstrdL$ constitutes the front
of the queue $\tpstrdL$, $\tpstrdR$ constitutes the tail of the
queue, and we use a special symbol $\qend$ to mark where $\tpstrdR$
ends and $\tpstrdL$ begins. The symbol $\datumd$ under the head is
maintained separately by the tape controller. The four operations on
the tape are implemented by the tape controller as follows:
\begin{enumerate}
\item If the tape controller receives the instruction to output the
  symbol under the head of the tape, then it suffices to send
  $\datumd$  along its read-channel. For this operation no interaction
  with the queue is needed.
\item If the tape controller receives the instruction to overwrite the
  symbol under its head with the symbol $\datume$, then it suffices to
  forget $\datumd$ and continue maintaining $\datume$. For
  this operation also no interaction with the queue is needed.
\item If the tape controller receives the instruction to move the head
  one position to the left, then this amounts to inserting the datum
  $\datumd$ at the tail of the queue and removing the datum at the front
  of the queue, which then becomes the new symbol maintained by
  the tape controller. There is one exeception, though: if the symbol
  at the front of the queue is $\qend$, then the left-most position so
  far has been reached. The new symbol maintained by the tape
  controller should become $\tpblank$, and the queue should be
  restored in its original state. This is achieved by first inserting
  a marker $\qmark$ at the tail of the queue, and then repeatedly
  moving symbols from the front to  the tail of the queue, until the
  symbol $\qmark$ is removed from the front of the queue.
\item If the tape controller receives the instruction to move the head
  one position to the right, then this is implemented by first
  placing the marker $\qmark$ at the tail of the queue, and then
  repeatedly moving symbols from the front of the queue to the tail of
  the queue, until the symbol $\qmark$ is removed from the front of
  the queue. The symbol that was removed before removing $\qmark$
  becomes the new symbol under the head.
\end{enumerate}

The tape controller is defined by the finite recursive specification
$\RSpecT$ consisting of the following equations:
\begin{spec}
 \displaybreak[1]
 \name[H]_{\datumd}   & \defeq \send{\chanr}{\datumd}.\name[H]_{\datumd}
                          \altc\textstyle\sum_{\datume \in \Datab} \recv{\chanw}{\datume}.\name[H]_{\datume}
                          \altc\recv{\chanm}{\tpmv{L}}.\name[H^L_{\mathit\datumd}]
                          \altc\recv{\chanm}{\tpmv{R}}.\name[H^R_{\mathit\datumd}] \altc\emp\;,\\
 \name[H^L_{\mathit\datumd}] & \defeq \send{\chani}{\datumd}.%
                                 \Bigr(\textstyle\sum_{\datume \in \Datab} \recv{\chano}{\datume}.\name[H]_{\datume}
                                       \altc\recv{\chano}{\qend}.\send{\chani}{\qmark}.\send{\chani}{\qend}.\name[Back]\Bigr) \;,\\
 \displaybreak[1]
 \name[Back]          & \defeq \textstyle\sum_{\datumd \in \Datab} \recv{\chano}{\datumd}.\send{\chani}{\datumd}.\name[Back]
                           \altc\recv{\chano}{\qmark}.\name[H]_{\tpblank} \;, \\
 \name[H^R_{\mathit\datumd}] & \defeq \send{\chani}{\qmark}.\send{\chani}{\datumd}.\Bigr(%
                                 \textstyle\sum_{\datume \in \Datab} \recv{\chano}{\datume}.\name[Fwd]_{\datume}
                                 \altc\recv{\chano}{\qend}.\name[Fwd]_{\qend}\Bigr)\;,\\
 \name[Fwd]_{\datumd} & \defeq \textstyle\sum_{\datume \in \Datab} \recv{\chano}{\datume}.%
                            \send{\chani}{\datumd}.\name[Fwd]_{\datume}
                          \altc\recv{\chano}{\qend}.\send{\chani}{\datumd}.\name[Fwd]_{\qend}
                          \altc\recv{\chano}{\qmark}.\name[H]_{\datumd} \;,\\
 \name[Fwd]_{\qend}   & \defeq \textstyle\sum_{\datume \in \Datab} \recv{\chano}{\datume}.%
                            \send{\chani}{\qend}.\name[Fwd]_{\datume}
                          \altc\recv{\chano}{\qmark}.\send{\chani}{\qend}.\name[H]_{\tpblank} \;.
\end{spec}

To establish the correctness of our finite recursive specification of
the tape controller, we first prove below that
\begin{equation*}
   \lts[\RSpecTinf]{\name[T]_{\tpstrdL\tphd{\datumd}\tpstrdR}}
      \bbisimed
    \lts[\RSpecT\cup\RSpecQinf]{\aep{\chanio}{\name[H]_{\datumd} \parc
        \name[Q]_{\tpstrdR\qend\tpstrdL}}}
\enskip.
\end{equation*}

The following two lemmas establish that the operation of moving the
tape head one position to the left, or to the right, is implemented
correctly by the tape controller processes
$\name[H^L_{\mathit\datumd}]$  and $\name[H^R_{\mathit\datumd}]$.

\begin{lemma}\label{lem:expl-int:tp-mvL}
 Let $\datumd \in \Datab$ and let $\tpstrdL,\tpstrdR \in \Datab^{*}$.
 \begin{enumerate}
 \renewcommand{\theenumi}{\roman{enumi}}
 \renewcommand{\labelenumi}{(\theenumi)}
 \item If $\tpstrdL=\tpstrzL\datumd[L]$ for some
   $\tpstrzL\in\Datab^{*}$ and $\datumd[L]\in\Datab$, then
     $\aep{\chanio}{\name[H^L_{\mathit\datumd}] \parc
       \name[Q]_{\tpstrdR\qend\tpstrdL}}$
   has a deterministic internal computation to
     $\aep{\chanio}{\name[H]_{\datumd[L]} \parc 
                         \name[Q]_{\datumd\tpstrdR\qend\tpstrzL}}$;
   we denote its set of intermediate states by
   $\IS{L,\tphd{\datumd[L]}\datumd\tpstrdR}$.
  \item If $\tpstrdL=\emptystr$, then
     $\aep{\chanio}{\name[H^L_{\mathit\datumd}] \parc
       \name[Q]_{\tpstrdR\qend\tpstrdL}}$
   has a deterministic internal computation to
     $\aep{\chanio}{\name[H]_{\tpblank} \parc
       \name[Q]_{\datumd\tpstrdR\qend}}$;
    we denote its set of intermediate states by
      $\IS{L,\tphd{\tpblank}\datumd\tpstrdR}$.
  \end{enumerate}
\end{lemma}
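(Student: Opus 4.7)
The approach is to trace the synchronized execution of the tape-controller process against the queue. Since the parallel composition $\aep{\chanio}{\_\parc\_}$ restricts $\chani$ and $\chano$, no component action on these channels can fire in isolation: only joint input/output communications (relabelled $\silent$) contribute transitions in the composed process. Determinism at each intermediate state will then follow from the fact that the queue's front symbol uniquely selects which summand of a $\chano$-guarded choice in the controller can synchronize, since $\qend$ and $\qmark$ do not occur in $\Datab$.

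For part (i), the trace uses three $\silent$-steps. First, the prefix $\send{\chani}{\datumd}$ of $\name[H^L_{\mathit\datumd}]$ synchronizes with the queue's input action, yielding queue $\name[Q]_{\datumd\tpstrdR\qend\tpstrzL\datumd[L]}$. Second, the queue outputs its front symbol $\datumd[L]$ on $\chano$; since $\datumd[L]\in\Datab$, only the summand $\sum_{\datume}\recv{\chano}{\datume}.\name[H]_{\datume}$ with $\datume=\datumd[L]$ can fire, the $\recv{\chano}{\qend}$ branch being blocked. Third, the process reduces to $\name[H]_{\datumd[L]}$, producing the target configuration $\aep{\chanio}{\name[H]_{\datumd[L]}\parc\name[Q]_{\datumd\tpstrdR\qend\tpstrzL}}$. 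At each intermediate state the only enabled action is the next synchronization in the trace.

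For part (ii), the queue starts as $\name[Q]_{\tpstrdR\qend}$, so after the initial insertion of $\datumd$ its front is $\qend$, forcing the controller to take the $\recv{\chano}{\qend}$ branch; two further insertions via $\send{\chani}{\qmark}$ and $\send{\chani}{\qend}$ yield the state $\aep{\chanio}{\name[Back]\parc\name[Q]_{\qend\qmark\datumd\tpstrdR}}$. The $\name[Back]$ loop then performs a receive-and-reinsert on the queue front, rotating the queue by one position per iteration. Writing $\tpstrdR=\datumd[r_1]\cdots\datumd[r_n]$, after $n+1$ iterations of this loop the queue reaches $\name[Q]_{\datumd\tpstrdR\qend\qmark}$ with $\qmark$ at the front; one final synchronization via $\recv{\chano}{\qmark}$ transitions the controller to $\name[H]_{\tpblank}$ with queue $\name[Q]_{\datumd\tpstrdR\qend}$, as required.

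The main obstacle, though essentially routine, is the $\name[Back]$ loop in (ii): one must verify both that the loop terminates in exactly the claimed configuration and that every intermediate state enables only its designated $\silent$-transition. I would formalize this by induction on $|\datumd\tpstrdR|$, with the invariant that after $k$ iterations the queue has the shape $\tpstrd[1]\qend\qmark\tpstrd[2]$ where $\tpstrd[2]$ is the still-to-be-rotated suffix of $\datumd\tpstrdR$; since $\qmark$ is strictly separated from data symbols in $\Datab$, each iteration's branch selection (data vs.\ $\qmark$) is uniquely forced, so both the rotation count and the determinism claim are immediate.
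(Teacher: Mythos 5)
Your proof is correct and follows the same route as the paper, whose entire proof is the single remark that the lemma follows by computing the relevant fragment of the transition system of the parallel composition; your explicit trace, the branch-selection argument based on $\qend,\qmark\notin\Datab$, and the rotation invariant for the $\name[Back]$ loop are exactly that computation spelled out. One small slip: part (i) takes two $\silent$-steps, not three --- the queue's output of $\datumd[L]$ on $\chano$ and the controller's receipt of it form a single handshaking synchronisation, not two separate transitions.
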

\begin{proof}
  The validity of the lemma is straightforwardly proved by computing
  a fragment of the transition system associated with
    $\aep{\chanio}{\name[H^L_{\mathit\datumd}] \parc
       \name[Q]_{\tpstrdR\qend\tpstrdL}}$.
\end{proof}

\begin{lemma}\label{lem:expl-int:tp-mvR}
 Let $\datumd \in \Datab$ and let $\tpstrdL,\tpstrdR \in \Datab^{*}$.
 \begin{enumerate}
 \renewcommand{\theenumi}{\roman{enumi}}
 \renewcommand{\labelenumi}{(\theenumi)}
 \item If $\tpstrdR = \datumd[R]\tpstrzR$ for some
   $\tpstrzR\in\Datab^{*}$ and $\datumd[R]\in\Datab$, then
     $\aep{\chanio}{\name[H^R_{\mathit\datumd}] \parc
       \name[Q]_{\tpstrdR\qend\tpstrdL}}$
   has a deterministic internal computation to
     $\aep{\chanio}{\name[H]_{\datumd[R]} \parc 
                         \name[Q]_{\tpstrzR\qend\tpstrzL\datumd}}$;
   we denote its set of intermediate states by
     $\IS{R,\tpstrzL\datumd\tphd{\datumd[R]}\tpstrzR}$.
  \item If $\tpstrdR=\emptystr$, then
     $\aep{\chanio}{\name[H^R_{\mathit\datumd}] \parc
       \name[Q]_{\tpstrdR\qend\tpstrdL}}$
   has a deterministic internal computation to
     $\aep{\chanio}{\name[H]_{\tpblank} \parc \name[Q]_{\qend\tpstrdL\datumd}}$;
   we denote its set of intermediate states by
     $\IS{R,\tpstrdL\datumd\tphd{\tpblank}}$.
  \end{enumerate}
\end{lemma}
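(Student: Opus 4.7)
The proof follows the pattern of Lemma~\ref{lem:expl-int:tp-mvL}, by directly computing a fragment of the transition system of $\aep{\chanio}{\name[H^R_{\mathit\datumd}] \parc \name[Q]_{\tpstrdR\qend\tpstrdL}}$. The move right is considerably more involved than the move left, because the target symbol---the leftmost symbol of $\tpstrdR$, or a fresh $\tpblank$ when $\tpstrdR$ is empty---sits at the input (left) end of the queue rather than at its output (right) end; the controller must therefore rotate the queue a full lap, using the marker $\qmark$ that it first plants at the back to detect completion.

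I would structure the computation in three phases. In the \emph{initialisation} phase, the two $\silent$-synchronisations arising from $\send{\chani}{\qmark}.\send{\chani}{\datumd}$ take the queue to $\name[Q]_{\datumd\qmark\tpstrdR\qend\tpstrdL}$ and leave the controller in the summation that offers $\chano$-receives. The \emph{rotation} phase is handled by a straightforward induction on the combined length of $\tpstrdL$ and of the consumed prefix of $\tpstrdR$: each iteration consists of one $\chano$-synchronisation (receiving the queue's current rightmost symbol into the controller's buffer) followed by one $\chani$-synchronisation (writing the previously buffered symbol back at the left end of the queue). The controller state cycles between $\name[Fwd]_{\datume}$ and the partially-evaluated $\send{\chani}{\datume}.\name[Fwd]_{\datume}$, with a distinguished transition through $\name[Fwd]_{\qend}$ when the end-marker $\qend$ is read and re-inserted. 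Finally, in the \emph{termination} phase, for case (i) the rotation eventually makes $\datumd[R]$ the buffered symbol and $\qmark$ the rightmost queue symbol, so the branch $\recv{\chano}{\qmark}.\name[H]_{\datumd[R]}$ of $\name[Fwd]_{\datumd[R]}$ fires without a subsequent $\chani$-send and the final queue is as claimed; for case (ii) the rotation makes $\qend$ the buffer and $\qmark$ the rightmost queue symbol, so the branch $\recv{\chano}{\qmark}.\send{\chani}{\qend}.\name[H]_{\tpblank}$ of $\name[Fwd]_{\qend}$ fires, prepending $\qend$ to yield queue $\name[Q]_{\qend\tpstrdL\datumd}$ and controller $\name[H]_{\tpblank}$.

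The main obstacle is bookkeeping rather than anything conceptual: one must track in lockstep the evolving queue contents and the controller state through an arbitrary number of rotation steps. Determinism at each step rests on two simple observations: the restriction $\aep{\chanio}{\_ \parc \_}$ hides every $\chani$- and $\chano$-action, so only $\silent$-synchronisations survive and no alternative observable move is enabled; and the controller's receive summations are pairwise distinguished by the received symbol, so the unique rightmost symbol of the queue selects a unique matching branch. The sets of configurations so traversed are exactly those denoted $\IS{R,\cdot}$ in the statement.
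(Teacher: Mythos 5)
Your proposal is correct and follows exactly the route the paper takes: its proof consists of the single remark that the lemma ``is straightforwardly proved by computing a fragment of the transition system associated with $\aep{\chanio}{\name[H^R_{\mathit\datumd}] \parc \name[Q]_{\tpstrdR\qend\tpstrdL}}$'', which is precisely the computation you carry out (and your three-phase bookkeeping, the role of the markers $\qmark$ and $\qend$, and the determinism argument via the restriction on $\chanio$ and the symbol-indexed receive summands all check out against the defining equations of $\name[H^R_{\mathit\datumd}]$, $\name[Fwd]_{\datumd}$ and $\name[Fwd]_{\qend}$).
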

\begin{proof}
  The validity of the lemma is straightforwardly proved by computing
  a fragment of the transition system associated with
    $\aep{\chanio}{\name[H^R_{\mathit\datumd}] \parc
       \name[Q]_{\tpstrdR\qend\tpstrdL}}$.
\end{proof}

The following theorem establishes the correct behaviour of our tape
controller in a parallel composition with a queue.
\begin{theorem}\label{thm:expl-int:tp-hd-queue}
 Let $\datumd\in\Datab$ and let $\tpstrdL, \tpstrdR\in\Datab^{*}$ .
 Then
\begin{equation*}
   \lts[\RSpecTinf]{\name[T]_{\tpstrdL\tphd{\datumd}\tpstrdR}}
      \bbisimed
    \lts[\RSpecT\cup\RSpecQinf]{\aep{\chanio}{\name[H]_{\datumd} \parc
        \name[Q]_{\tpstrdR\qend\tpstrdL}}}
\enskip.
\end{equation*}
\end{theorem}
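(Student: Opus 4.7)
The plan is to construct an explicit divergence-preserving branching bisimulation $\brelsym$ between the states of $\lts[\RSpecTinf]{\name[T]_{\tpstrdL\tphd{\datumd}\tpstrdR}}$ and the states of $\lts[\RSpecT\cup\RSpecQinf]{\aep{\chanio}{\name[H]_{\datumd} \parc \name[Q]_{\tpstrdR\qend\tpstrdL}}}$. The relation should associate each tape-instance state $\name[T]_{\tpstrdL\tphd{\datumd}\tpstrdR}$ with the ``stable'' configuration $\aep{\chanio}{\name[H]_{\datumd} \parc \name[Q]_{\tpstrdR\qend\tpstrdL}}$, and additionally with every intermediate state of the deterministic internal computations identified in Lemmas~\ref{lem:expl-int:tp-mvL} and~\ref{lem:expl-int:tp-mvR}. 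Concretely, I would define
\begin{equation*}
  \brelsym = \bigcup_{\tpstrdL\tphd{\datumd}\tpstrdR}
    \{(\name[T]_{\tpstrdL\tphd{\datumd}\tpstrdR}, \statet) \mid
      \statet \in \{\aep{\chanio}{\name[H]_{\datumd} \parc \name[Q]_{\tpstrdR\qend\tpstrdL}}\}
              \cup \IS{L,\tpstrdL\tphd{\datumd}\tpstrdR}
              \cup \IS{R,\tpstrdL\tphd{\datumd}\tpstrdR} \}\;.
\end{equation*}

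First I would verify the transfer conditions at each stable pair. The summands of $\name[T]_{\tpstrdL\tphd{\datumd}\tpstrdR}$ in~\eqref{eq:TapeSpec} split into four non-$\silent$ transitions and a termination option. The read summand $\send{\chanr}{\datumd}.\name[T]_{\tpstrdL\tphd{\datumd}\tpstrdR}$ and the write summands $\recv{\chanw}{\datume}.\name[T]_{\tpstrdL\tphd{\datume}\tpstrdR}$ are matched one-to-one by the corresponding summands of $\name[H]_{\datumd}$, because these do not interact with the queue; in both cases the resulting pair is again of the form $(\name[T]_{\cdot}, \aep{\chanio}{\name[H]_{\cdot}\parc\name[Q]_{\tpstrdR\qend\tpstrdL}})$, which lies in $\brelsym$. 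The move summands $\recv{\chanm}{\tpmvL}$ and $\recv{\chanm}{\tpmvR}$ are matched by the corresponding summands of $\name[H]_{\datumd}$ that lead into $\name[H^L_{\mathit\datumd}]$ or $\name[H^R_{\mathit\datumd}]$; by Lemmas~\ref{lem:expl-int:tp-mvL} and~\ref{lem:expl-int:tp-mvR} the result reaches, via a deterministic internal computation, the appropriate new stable configuration, and every intermediate state has been added to $\brelsym$ by construction. Finally, termination on the left is matched immediately because the stable configuration inherits termination from both $\name[H]_{\datumd}$ and $\name[Q]_{\tpstrdR\qend\tpstrdL}$ via the $\emp$-summands we included.

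Second, I would handle the pairs $(\name[T]_{\tpstrdL\tphd{\datumd}\tpstrdR}, \statet)$ with $\statet$ an intermediate state of a move computation. By the determinism statements in Lemmas~\ref{lem:expl-int:tp-mvL} and~\ref{lem:expl-int:tp-mvR}, every outgoing transition from $\statet$ is a $\silent$-transition to the next intermediate state in the computation (eventually reaching the new stable configuration). Such a transition is matched on the left by doing nothing, and both the source and target remain $\brel$-related: the source is related to the original $\name[T]_{\tpstrdL\tphd{\datumd}\tpstrdR}$ (because the entire deterministic computation is the witness in $\brelsym$), while the target, if not the final stable state, belongs to the same $\IS{\cdot}$-set, and if it is the final stable state, it is the stable state for the post-move tape instance, which is exactly what the move summand on the left would step to. Termination conditions in intermediate states coincide with termination of the associated final stable state, closed under $\steps$ as required by branching bisimilarity.

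Third, divergence preservation is almost automatic: by Lemmas~\ref{lem:expl-int:tp-mvL} and~\ref{lem:expl-int:tp-mvR}, all internal computations arising on the right-hand side are \emph{finite} deterministic sequences ending in a stable configuration, so no infinite $\silent$-path lives entirely inside a single $\brelsym$-equivalence class, and the left-hand transition system has no $\silent$-transitions at all. The main obstacle I anticipate is not conceptual but bookkeeping: one must be careful about the boundary cases of Lemmas~\ref{lem:expl-int:tp-mvL}(ii) and~\ref{lem:expl-int:tp-mvR}(ii), where moving left past the leftmost symbol or right past the rightmost symbol requires the controller to append a $\tpblank$, and to check that the resulting $\name[Q]$-content still represents the same tape instance under our convention of identifying tape instances up to additional $\tpblank$s at the extremes. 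Once these boundary cases are handled, verification of the bisimulation clauses is routine.
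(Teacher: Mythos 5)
Your proposal is correct and takes essentially the same route as the paper: the relation you define --- the stable pairs together with the intermediate states of the move computations supplied by Lemmas~\ref{lem:expl-int:tp-mvL} and~\ref{lem:expl-int:tp-mvR} --- is exactly the divergence-preserving branching bisimulation the paper exhibits, whose verification it leaves to the reader. Your walk-through of the transfer, termination and divergence conditions just fills in the routine details the paper omits.
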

\begin{proof}
  Referring to Lemmas~\ref{lem:expl-int:tp-mvL} and
  \ref{lem:expl-int:tp-mvR} for the definitions of $\IS{L,\tpstrd}$
  and $\IS{R,\tpstrd}$, we define the binary relation $\brel$ by
  \begin{multline*}
    \brelsym=
      \{(\name[T_{\tpstrdL\tphd{\datumd}\tpstrdR}],
          \aep{\chanio}{\name[H]_{\datumd} \parc
        \name[Q]_{\tpstrdR\qend\tpstrdL}})\mid \datumd\in\Datab\ \&\
      \tpstrdL,\tpstrdR\in\Datab^{*}\}\\
         \mbox{}\cup 
      \{(\name[T_{\tpstrd}],\states)\mid
           \text{$\tpstrd$ a tape instance and}\
           \states\in\IS{L,\tpstrd}\cup\IS{R,\tpstrd}
      \}
\enskip.
  \end{multline*}
  We leave it to the reader to verify that $\brelsym$ is a
  divergence-preserving branching bisimulation from
  $\lts[\RSpecTinf]{\name[T]_{\tpstrdL\tphd{\datumd}\tpstrdR}}$ to $\lts[\RSpecT\cup\RSpecQinf]{\aep{\chanio}{\name[H]_{\datumd} \parc
        \name[Q]_{\tpstrdR\qend\tpstrdL}}}$.
\end{proof}

Note that, as a direct consequence of
Theorem~\ref{thm:expl-int:queue-fin}, we get that
\begin{equation*}
  \lts[\RSpecQinf]{\name[Q]_{\qend}}
     \bbisimed
   \lts[\RSpecQ]{\aep{\{\chanl\}}{\name[Q]^{\chani
         \chanl}_{\chano}\parc\pref{\send{\chano}{\qend}}{\name[Q]^{\chanl\chano}_{\chani}}}}
\enskip.
\end{equation*}
Hence, by Lemma~\ref{lem:congruence}, we can replace the infinite
recursive specification of the queue in
Theorem~\ref{thm:expl-int:tp-hd-queue} by the finite recursive
specification due to Bergstra and Klop \cite{BK86}, to get the
following corollary.

\begin{corollary} \label{cor:finitetape}
 Let $\datumd\in\Datab$ and let $\tpstrdL, \tpstrdR\in\Datab^{*}$ .
 Then
\begin{equation*}
   \lts[\RSpecTinf]{\name[T]_{\tphd{\tpblank}}}
      \bbisimed
    \lts[\RSpecT\cup\RSpecQ]{\aep{\chanio}{\name[H]_{\tpblank} \parc \aep{\{\chanl\}}{\name[Q]^{\chani
         \chanl}_{\chano}\parc\pref{\send{\chano}{\qend}}{\name[Q]^{\chanl\chano}_{\chani}}}
        }}
\enskip.
\end{equation*}
\end{corollary}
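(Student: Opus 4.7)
The plan is to derive the corollary by specialising Theorem~\ref{thm:expl-int:tp-hd-queue} to the empty tape and then plugging in the finite queue specification via the congruence property of Lemma~\ref{lem:congruence}.

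First, I would instantiate Theorem~\ref{thm:expl-int:tp-hd-queue} with $\datumd = \tpblank$ and $\tpstrdL = \tpstrdR = \emptystr$, which immediately gives
\begin{equation*}
  \lts[\RSpecTinf]{\name[T]_{\tphd{\tpblank}}} \bbisimed \lts[\RSpecT\cup\RSpecQinf]{\aep{\chanio}{\name[H]_{\tpblank} \parc \name[Q]_{\qend}}}\enskip.
\end{equation*}
This step is essentially cost-free: the behaviour of the tape controller started on $\tpblank$ with empty left and right contexts is, by the theorem, captured exactly by the infinite-queue implementation storing just the end-marker $\qend$.

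Next, I would establish the auxiliary bisimilarity
\begin{equation*}
   \lts[\RSpecQinf]{\name[Q]_{\qend}}
      \bbisimed
   \lts[\RSpecQ]{\aep{\{\chanl\}}{\name[Q]^{\chani\chanl}_{\chano}\parc\pref{\send{\chano}{\qend}}{\name[Q]^{\chanl\chano}_{\chani}}}}\enskip.
\end{equation*}
This is a direct consequence of Theorem~\ref{thm:expl-int:queue-fin}: in the proof of that theorem (due to Bezem and Ponse, adapted here to the divergence-preserving setting) one constructs a divergence-preserving branching bisimulation relating every state $\name[Q]_{\tpstrd}$ of the infinite queue to a corresponding network configuration of the finite queue. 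The configuration encoding the content $\qend$ is precisely the pre-loaded right-hand side above (possibly reached via a few matched $\silent$-steps accounting for the renaming of ports $\chanj,\chank,\chanp$). One simply specialises that bisimulation to the relevant reachable part.

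Finally, applying Lemma~\ref{lem:congruence} with the trivial bisimilarity $\lts[\RSpecT]{\name[H]_{\tpblank}} \bbisimed \lts[\RSpecT]{\name[H]_{\tpblank}}$ on the left component and the bisimilarity just established on the right component yields
\begin{equation*}
 \lts{\aep{\chanio}{\name[H]_{\tpblank}\parc\name[Q]_{\qend}}}
  \bbisimed
 \lts{\aep{\chanio}{\name[H]_{\tpblank}\parc\aep{\{\chanl\}}{\name[Q]^{\chani\chanl}_{\chano}\parc\pref{\send{\chano}{\qend}}{\name[Q]^{\chanl\chano}_{\chani}}}}}\enskip,
\end{equation*}
and composing with the bisimilarity from the first step closes the argument by transitivity of $\bbisimed$.

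The main obstacle will be verifying the auxiliary bisimilarity for the pre-loaded queue, in particular matching termination behaviour. The process $\pref{\send{\chano}{\qend}}{\name[Q]^{\chanl\chano}_{\chani}}$ is itself not final, whereas $\name[Q]_{\qend}$ is final via its $\emp$ summand; one therefore has to check that, at every pair of related states, any termination on the infinite-queue side can be matched by a short sequence of $\silent$-steps on the finite-queue side leading to a genuinely terminating configuration (and vice versa), and that the divergence-preservation clause is satisfied. This is the only point where the slightly asymmetric shape of the pre-loaded expression requires care; once it is handled, the rest of the proof is bookkeeping.
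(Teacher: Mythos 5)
Your proof follows exactly the paper's argument: the paper likewise obtains the preloaded-queue bisimilarity $\lts[\RSpecQinf]{\name[Q]_{\qend}}\bbisimed\lts[\RSpecQ]{\aep{\{\chanl\}}{\name[Q]^{\chani\chanl}_{\chano}\parc\pref{\send{\chano}{\qend}}{\name[Q]^{\chanl\chano}_{\chani}}}}$ as a direct consequence of Theorem~\ref{thm:expl-int:queue-fin} and then invokes Lemma~\ref{lem:congruence} to substitute the finite queue specification into Theorem~\ref{thm:expl-int:tp-hd-queue} instantiated at the empty tape. The termination mismatch you flag for the non-final prefix $\pref{\send{\chano}{\qend}}{\name[Q]^{\chanl\chano}_{\chani}}$ versus the final state $\name[Q]_{\qend}$ is a genuine subtlety that the paper's one-line derivation does not address at all, so your treatment is, if anything, more careful than the original.
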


\subsection{Finite control} \label{subsec:fincontrol}

It remains to associate with every \RTM{}
  $\TM{}=(\StatesS[\TM{}],\trel[\TM{}],\init[\TM{}],\final[\TM{}])$
a finite recursive specification $\RSpecFC$ that, intuitively,
implements the finite control of the \RTM{} defined by its transition relation.
For every state $\states \in \States$ and datum $\datumd \in \Datab$,
denote by $\name[C]_{\states,\datumd}$ the process that controls the
behaviour of \TM{} when it is in state $\states$ with $\datumd$ under
the head. We define the behaviour of $\name[C]_{\states,\datumd}$ by
the following equation ($\statet \in \States$, $\acta \in \Actt$,
$\datume \in \Datab$, and $\tpmv{M} \in \{\tpmvL, \tpmvR\}$): 
\begin{spec}
  \name[C]_{\states,\datumd}
    & \defeq \textstyle\sum_{(\states,\datumd,\acta,\datume,\tpmv{M},\statet) \in \trel}
                \left(\acta.\send{\chanw}{\datume}.\send{\chanm}{\tpmv{M}}.%
                  \textstyle\sum_{\datumf \in \Datab} \recv{\chanr}{\datumf}.
                    \name[C]_{\statet,\datumf} \right) [\altc\:\emp]_{\term{\states}} \;;
\end{spec}
we denote by $\RSpecFC$ the set of all instances with a concrete state
$\states\in\States$ and a concrete datum $\datumd\in\Datab$.

The following two lemmas establish that the sequence of instructions
from the finite control to write $\datume$ at the position of the head, move
the tape head one position to the left or right, and then read the
datum at the new position of the head has the desired effect.

\todo{Improve notation for sets of intermediate states (below and
  everywhere else in the paper).}
\begin{lemma}\label{lem:fcinstructionsmvL}
  Let $\datumd,\datume\in\Datab$ and let $\tpstrdL,\tpstrdR\in\Datab^{*}$.
  \begin{enumerate}
  \renewcommand{\theenumi}{\roman{enumi}}
  \renewcommand{\labelenumi}{(\theenumi)}
  \item
    If $\tpstrdL=\tpstrzL\datumd[L]$ for some $\datumd[L]\in\Datab$
    and $\tpstrzL\in\Datab^{*}$,  then 
  \begin{equation*}
    \aep{\chanrwm}{\send{\chanw}{\datume}.\send{\chanm}{\tpmv{L}}.%
     \textstyle\sum_{\datumf \in \Datab} \recv{\chanr}{\datumf}.\name[C]_{\statet,\datumf} \parc 
                               \name[T]_{\tpstrdL\tphd{\datumd}\tpstrdR}}
  \end{equation*}
  has a deterministic internal computation to
  \begin{equation*}
     \aep{\chanrwm}{\name[C]_{\statet,\datumd[L]} \parc 
                                    \name[T]_{\tpstrzL\tphd{\datumd[L]}\datume\tpstrdR}}
  \enskip;
  \end{equation*}
  we denote its set of intermediate states by
    $\IS{[\datumd/\datume]L,\statet,\tpstrzL\tphd{\datumd[L]}\datume\tpstrdR}$.
  \item
    If $\tpstrdL=\emptystr$,
    then 
  \begin{equation*}
    \aep{\chanrwm}{\send{\chanw}{\datume}.\send{\chanm}{\tpmv{L}}.%
     \textstyle\sum_{\datumf \in \Datab} \recv{\chanr}{\datumf}.\name[C]_{\statet,\datumf} \parc 
                               \name[T]_{\tpstrdL\tphd{\datumd}\tpstrdR}}
  \end{equation*}
  has a deterministic internal computation to
  \begin{equation*}
     \aep{\chanrwm}{\name[C]_{\statet,\tpblank} \parc 
                                    \name[T]_{\tphd{\tpblank}\datume\tpstrdR}}
  \enskip;
  \end{equation*}
  we denote its set of intermediate states by
    $\IS{[\datumd/\datume]L,\statet,\tphd{\tpblank}\datume\tpstrdR}$.
  \end{enumerate}
\end{lemma}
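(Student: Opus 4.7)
The plan is to prove both parts by directly unfolding the structural operational semantics three communication steps, verifying at each point that only a single $\silent$-transition is enabled, so that the resulting computation meets the determinism criterion from Definition~\ref{def:comp}. The outer operator $\aep{\chanrwm}{\_\parc\_}$ restricts the channels $\chanr$, $\chanw$, and $\chanm$, so every non-$\silent$ summand of either component that uses one of these channels is blocked unless it synchronises with its counterpart in the other component; all potentially visible actions are therefore invisible at the level of the parallel composition.

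The first silent step is the synchronisation of the control prefix $\send{\chanw}{\datume}$ with the summand $\recv{\chanw}{\datume}.\name[T]_{\tpstrdL\tphd{\datume}\tpstrdR}$ of the tape equation \eqref{eq:TapeSpec}. The tape's other visible summands ($\send{\chanr}{\datumd}$, the $\recv{\chanw}{\datume'}$ with $\datume'\neq\datume$, and the two $\recv{\chanm}{\tpmv{M}}$) offer no matching co-action on the control side, so this step is forced. The second step similarly synchronises $\send{\chanm}{\tpmv{L}}$ with the tape's $\recv{\chanm}{\tpmv{L}}$-summand, producing the tape process $\name[T]_{\tphdL{\tpstrdL}\datume\tpstrdR}$ in parallel with the residual control $\textstyle\sum_{\datumf\in\Datab}\recv{\chanr}{\datumf}.\name[C]_{\statet,\datumf}$.

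Here the case split on $\tpstrdL$ kicks in, and this is the only place where anything non-mechanical happens: by the definition of $\tphdL{\_}$ on page~\pageref{tapeinstance}, $\tphdL{\tpstrzL\datumd[L]}=\tpstrzL\tphd{\datumd[L]}$ in case (i) and $\tphdL{\emptystr}=\tphd{\tpblank}$ in case (ii). The third step then synchronises the tape's $\send{\chanr}{\datumd[L]}$ (respectively $\send{\chanr}{\tpblank}$) summand with exactly one branch of the summation on the control side, namely the one for $\datumf=\datumd[L]$ (respectively $\datumf=\tpblank$); the other branches of the summation have no matching action because the tape sends only one fixed datum. This yields the desired target configuration.

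The only subtlety I anticipate is confirming that no other transitions are enabled at the intermediate configurations---in particular, one must rule out that a blocked action on either component could fire unrestricted. This is immediate from the operational rules for $\aep{\Chan'}{\_\parc\_}$ in Table~\ref{tbl:sos-tcpt}, since $\chanr,\chanw,\chanm\in\chanrwm$, so each intermediate state offers precisely the single synchronising $\silent$-transition described above. Collecting the three intermediate configurations into $\IS{[\datumd/\datume]L,\statet,\tpstrzL\tphd{\datumd[L]}\datume\tpstrdR}$ (respectively $\IS{[\datumd/\datume]L,\statet,\tphd{\tpblank}\datume\tpstrdR}$) completes the proof.
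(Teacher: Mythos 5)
Your proposal is correct and is exactly the verification the paper has in mind: the paper states Lemma~\ref{lem:fcinstructionsmvL} without an explicit proof, treating it (like the analogous tape lemmas) as a routine computation of a fragment of the transition system, and your three forced $\silent$-synchronisations on $\chanw$, $\chanm$ and $\chanr$ --- with the case split resolved by the definition of $\tphdL{\_}$ and determinism secured by the restriction of $\chanrwm$ --- spell out precisely that computation.
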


\begin{lemma}\label{lem:fcinstructionsmvR}
  Let $\datumd,\datume\in\Datab$ and let $\tpstrdL,\tpstrdR\in\Datab^{*}$.
  \begin{enumerate}
  \renewcommand{\theenumi}{\roman{enumi}}
  \renewcommand{\labelenumi}{(\theenumi)}
  \item
    If $\tpstrdR=\datumd[R]\tpstrzR$ for some $\datumd[R]\in\Datab$
    and $\tpstrzR\in\Datab^{*}$,
    then 
  \begin{equation*}
    \aep{\chanrwm}{\send{\chanw}{\datume}.\send{\chanm}{\tpmv{R}}.%
     \textstyle\sum_{\datumf \in \Datab} \recv{\chanr}{\datumf}.\name[C]_{\statet,\datumf} \parc 
                               \name[T]_{\tpstrdL\tphd{\datumd}\tpstrdR}}
  \end{equation*}
  has a deterministic internal computation to
  \begin{equation*}
     \aep{\chanrwm}{\name[C]_{\statet,\datumd[L]} \parc 
                                    \name[T]_{\tpstrdL\datume\tphd{\datumd[R]}\tpstrzR}}
  \enskip;
  \end{equation*}
  we denote its set of intermediate states by
    $\IS{[\datumd{}/\datume{}]R,\statet,\tpstrdL\datume\tphd{\datumd[R]}\tpstrzR}$.
  \item
    If $\tpstrdR=\emptystr$,
    then 
  \begin{equation*}
    \aep{\chanrwm}{\send{\chanw}{\datume}.\send{\chanm}{\tpmv{L}}.%
     \textstyle\sum_{\datumf \in \Datab} \recv{\chanr}{\datumf}.\name[C]_{\statet,\datumf} \parc 
                               \name[T]_{\tpstrdL\tphd{\datumd}\tpstrdR}}
  \end{equation*}
  has a deterministic internal computation to
  \begin{equation*}
     \aep{\chanrwm}{\name[C]_{\statet,\tpblank} \parc 
                                    \name[T]_{\tpstrdL\datume\tphd{\tpblank}}}
  \enskip;
  \end{equation*}
  we denote its set of intermediate states by
    $\IS{[\datumd{}/\datume{}]R,\statet,\tpstrdL\datume\tphd{\tpblank}}$.
  \end{enumerate}
\end{lemma}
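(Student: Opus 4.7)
The plan is to mimic the proof of Lemma~\ref{lem:fcinstructionsmvL} by directly computing the relevant fragment of the transition system associated with the given parallel composition. All three channels $\chanr$, $\chanw$, $\chanm$ lie in the encapsulation set $\chanrwm$, so no action on these channels can occur independently: each must arise as a $\silent$-labelled synchronisation between the finite-control summand on the left and the tape summand $\name[T]_{\tpstrdL\tphd{\datumd}\tpstrdR}$ on the right. Together with Eqn.~\eqref{eq:TapeSpec} and the operational rules of Table~\ref{tbl:sos-tcpt}, the next step is uniquely determined at every reached configuration, and no competing non-$\silent$ transition is enabled, so the resulting computation is deterministic.

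For case~(i), with $\tpstrdR = \datumd[R]\tpstrzR$, I trace three synchronisations in sequence. The first, between the $\chanw$-prefix carrying $\datume$ and the matching write-summand of the tape, reaches the configuration $\aep{\chanrwm}{\send{\chanm}{\tpmvR}.\sum_{\datumf \in \Datab}\recv{\chanr}{\datumf}.\name[C]_{\statet,\datumf} \parc \name[T]_{\tpstrdL\tphd{\datume}\tpstrdR}}$. The second, on the $\chanm$-prefix carrying $\tpmvR$, produces the tape instance $\tpstrdL\datume\tphdR{\tpstrdR}$; by the first clause in the definition of $\tphdR{\_}$ and the hypothesis, this equals $\tpstrdL\datume\tphd{\datumd[R]}\tpstrzR$. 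The third, on $\recv{\chanr}{\datumd[R]}$, selects the matching branch of the indexed sum, yielding $\aep{\chanrwm}{\name[C]_{\statet,\datumd[R]} \parc \name[T]_{\tpstrdL\datume\tphd{\datumd[R]}\tpstrzR}}$. Collecting the states along this path defines $\IS{[\datumd/\datume]R,\statet,\tpstrdL\datume\tphd{\datumd[R]}\tpstrzR}$.

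Case~(ii), with $\tpstrdR = \emptystr$, proceeds identically until the $\chanm$-synchronisation, after which the tape instance becomes $\tpstrdL\datume\tphdR{\emptystr} = \tpstrdL\datume\tphd{\tpblank}$ by the fallback clause of $\tphdR{\_}$, and the final $\chanr$-synchronisation then delivers the value $\tpblank$ and yields $\name[C]_{\statet,\tpblank}$ in the target. I foresee no real obstacle: the argument is entirely symmetric to the one for Lemma~\ref{lem:fcinstructionsmvL}, and the only care needed is the case analysis on which clause of $\tphdR{\_}$ applies after the move, together with the routine verification at each intermediate configuration that the encapsulation by $\chanrwm$ leaves exactly one enabled synchronisation and no independently executable transition.
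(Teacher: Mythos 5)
Your proposal is correct and follows exactly the route the paper intends: the paper gives no explicit proof of this lemma, but its companion lemmas are justified by ``computing a fragment of the transition system associated with'' the given parallel composition, which is precisely the three-synchronisation trace (write on $\chanw$, move on $\chanm$, read on $\chanr$) that you carry out, including the required check that the encapsulation by $\chanrwm$ leaves a unique $\silent$-step at each intermediate configuration. Note only that your computation (rightly) yields $\name[C]_{\statet,\datumd[R]}$ in case~(i) and uses a $\send{\chanm}{\tpmv{R}}$ prefix in case~(ii), so the occurrences of $\datumd[L]$ and $\tpmv{L}$ in the lemma as stated are typographical slips that your argument silently corrects.
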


 \begin{figure}[htb]
  \begin{center}
  \begin{transsys}(50,33)(10,1)
   {\small
    \node(s)(0,30){$(\states,\tpstrdL\tphd{\datumd}\tpstrdR)$}
    \node(t)(0,15){$(\statet,\tpstrzL\tphd{\datumd[L]}\datume\tpstrdR)$}

    \node(sC)(50,30){$\aep{\chanrwm}{\name[C]_{\states,\datumd} \parc 
                                     \name[T]_{\tpstrdL\tphd{\datumd}\tpstrdR}}$}
    \node(twmrC)(50,15){$\aep{\chanrwm}{\send{\chanw}{\datume}.\send{\chanm}{\tpmv{L}}.%
      \sum_{\datumf \in \Datab} \recv{\chanr}{\datumf}.\name[C]_{\statet,\datumf} \parc 
                                \name[T]_{\tpstrdL\tphd{\datumd}\tpstrdR}}$}
    \node(tC)(50,0){$\aep{\chanrwm}{\name[C]_{\statet,\datumd[L]} \parc 
                                     \name[T]_{\tpstrzL\tphd{\datumd[L]}\datume\tpstrdR}}$}

    \edge(s,t){$\acta$}
    \edge(sC,twmrC){$\acta$}
    \edge[AHnb=2,ELpos=60](twmrC,tC){}

    \graphset{AHnb=0,dash={1 0}{}}
    \edge(s,sC){}
    \edge(t,twmrC){}
    \drawqbedge(t,0,0,tC){}
   }
  \end{transsys}
  \end{center}
  \caption{Relation between an \RTM{} transition and specification
    transitions.}\label{fig:bbisim-exec-tcpt}
 \end{figure}

\begin{theorem} \label{thm:TMspecTinf}
  $\lts{\TM}\bbisimed
      \lts[\RSpecFC\cup\RSpecTinf]{\aep{\chanrwm}{\name[C]_{\init[\TM{}],\tpblank} \parc
   \name[T]_{\tphd{\tpblank}}}}$.
\end{theorem}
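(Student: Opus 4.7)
The plan is to exhibit an explicit divergence-preserving branching bisimulation between the two transition systems. The natural candidate relates each configuration $(\states,\tpstrd)$ of $\lts{\TM}$ to its \emph{aligned} process state $\aep{\chanrwm}{\name[C]_{\states,\datumd}\parc\name[T]_{\tpstrd}}$, where $\datumd$ is the symbol under the head in $\tpstrd$. In addition, for every RTM transition
$(\states,\tpstrdL\tphd{\datumd}\tpstrdR)\step{\acta}(\statet,\tpstrd')$ generated by a quintuple $(\states,\datumd,\acta,\datume,\tpmv{M},\statet)\in\trel[\TM]$, the intermediate states of the deterministic internal computation provided by Lemma~\ref{lem:fcinstructionsmvL} or~\ref{lem:fcinstructionsmvR} must also be related to $(\statet,\tpstrd')$. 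Concretely, I would take
\begin{equation*}
  \brelsym = \brelsym_{\text{al}} \cup \brelsym_{\text{im}},
\end{equation*}
where $\brelsym_{\text{al}}$ consists of pairs
$((\states,\tpstrd),\aep{\chanrwm}{\name[C]_{\states,\datumd}\parc\name[T]_{\tpstrd}})$
of aligned configurations, and $\brelsym_{\text{im}}$ collects all pairs
$((\statet,\tpstrd'),\pexpp)$ with $\pexpp$ an intermediate state in
$\IS{[\datumd/\datume]\tpmv{M},\statet,\tpstrd'}$ for some RTM quintuple yielding $(\statet,\tpstrd')$ as the successor.

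The transfer conditions I would then verify as follows. For a transition $(\states,\tpstrdL\tphd{\datumd}\tpstrdR)\step{\acta}(\statet,\tpstrd')$ on the RTM side, the defining equation of $\name[C]_{\states,\datumd}$ gives an $\acta$-summand $\pref{\acta}{\send{\chanw}{\datume}.\send{\chanm}{\tpmv{M}}.\sum_{\datumf}\recv{\chanr}{\datumf}.\name[C]_{\statet,\datumf}}$; composing with the current tape yields a matching $\acta$-transition from the aligned process state into a state in the appropriate intermediate set, which by the relevant lemma is related to $(\statet,\tpstrd')$. For a process transition from an aligned state, inspection of the operational rules for $\altc$, $\pref{\acta}{}$ and $\aep{\Chan'}{\_\parc\_}$ shows that the only enabled visible transitions are the $\acta$-summands of $\name[C]_{\states,\datumd}$ (the tape channels are restricted so the tape's read, write and move actions cannot be observed and do not synchronise until initiated by the control), so every such transition is matched by a corresponding RTM transition. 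For transitions out of an intermediate state, determinism of the internal computation in Lemmas~\ref{lem:fcinstructionsmvL} and~\ref{lem:fcinstructionsmvR} means the only enabled transition is a $\silent$-step to another state in the same intermediate set, or eventually to the next aligned state; in all cases the target is related to the same RTM configuration $(\statet,\tpstrd')$, so the stuttering clause of branching bisimulation applies.

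For the termination clauses, $(\states,\tpstrd)$ is final in $\lts{\TM}$ iff $\states\in\final[\TM]$, and by definition of $\name[C]_{\states,\datumd}$ the optional $\emp$-summand is included exactly when $\term{\states}$; combined with the $\emp$-summand of $\name[T]_{\tpstrd}$ this makes the aligned process state terminating exactly when $(\states,\tpstrd)$ is. Intermediate states are never final (their control component is a non-trivial prefix $\send{\chanw}{\datume}.\ldots$), matching the fact that the internal computation lies strictly between two RTM configurations.

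The divergence-preservation clauses are the easy part: by Lemmas~\ref{lem:fcinstructionsmvL} and~\ref{lem:fcinstructionsmvR} all internal computations emanating from intermediate states are finite and terminate at the next aligned state, and aligned states themselves enable no $\silent$-transitions (any outgoing transition is visible), so no infinite $\silent$-sequence exists on the process side; on the RTM side there are no $\silent$-transitions from a configuration unless the RTM itself has a $\silent$-labelled quintuple, in which case the visible $\silent$-label on the process side matches it. The main obstacle I expect is being careful that the intermediate-state bookkeeping is correct: one must check that for each pair of distinct RTM quintuples emanating from $(\states,\datumd)$ the initial process transition is the $\acta$-action of the corresponding summand (so that the branching-bisimulation ``choose in a single state'' discipline is respected), and that the intermediate-state sets from different quintuples are disjoint so no intermediate state accidentally gets related to two different RTM configurations. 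With these sanity checks in place, $\init[{\lts{\TM}}]=(\init[\TM],\tphd{\tpblank})$ is related by $\brelsym$ to the initial process expression $\aep{\chanrwm}{\name[C]_{\init[\TM],\tpblank}\parc\name[T]_{\tphd{\tpblank}}}$, establishing the required $\bbisimed$.
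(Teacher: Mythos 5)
Your proposal is correct and follows essentially the same route as the paper: the paper's proof defines exactly the relation you describe --- aligned pairs $((\states,\tpstrdL\tphd{\datumd}\tpstrdR),\aep{\chanrwm}{\name[C]_{\states,\datumd}\parc\name[T]_{\tpstrdL\tphd{\datumd}\tpstrdR}})$ together with pairs relating each successor configuration to the intermediate states $\IS{[\datumd/\datume]L,\dots}$ and $\IS{[\datumd/\datume]R,\dots}$ from Lemmas~\ref{lem:fcinstructionsmvL} and~\ref{lem:fcinstructionsmvR} --- and then leaves the verification of the transfer, termination and divergence conditions to the reader. Your additional checks (the choice being resolved in the single aligned state, disjointness of the intermediate-state sets, finiteness of the internal computations) are precisely the points that verification would have to cover.
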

\begin{proof}
  Referring to Lemmas~\ref{lem:fcinstructionsmvL} and
  \ref{lem:fcinstructionsmvR}, we define a binary relation $\brelsym$
  by
  \begin{gather*}
    \brelsym=
      \{(\states,\tpstrdL\tphd{\datumd}\tpstrdR),
         \aep{\chanrwm}{\name[C]_{\states,\datumd} \parc
   \name[T]_{\tpstrdL\tphd{\datumd}\tpstrdR}})\mid
              \states\in\StatesS[\TM{}]\ \&\
              \tpstrdL,\tpstrdR\in\Datab^{*}\ \&\
              \datumd\in\Datab
      \}\\
      \qquad \qquad\mbox{}\cup
        \{((\statet,\tpstrzL\tphd{\datumd[L]}\datume\tpstrdR),\stateu)
            \mid \text{$\stateu\in
              \IS{[\datumd{}/\datume{}]L,\statet,\tpstrzL\tphd{\datumd[L]}\datume\tpstrdR}$
              for some $\datumd\in\Datab$}\}\\
      \qquad \qquad\mbox{}\cup
        \{((\statet,\tphd{\tpblank}\datume\tpstrdR),\stateu)
            \mid \text{$\stateu\in
              \IS{[\datumd{}/\datume{}]L,\statet,\tphd{\tpblank}\datume\tpstrdR}$
              for some $\datumd\in\Datab$}\}\\
      \qquad \qquad\mbox{}\cup
        \{((\statet,\tpstrdL\datume\tphd{\datumd[R]}\tpstrzR),\stateu)
            \mid \text{$\stateu\in
              \IS{[\datumd{}/\datume{}]L,\statet,\tpstrdL\datume\tphd{\datumd[R]}\tpstrzR}$
              for some $\datumd\in\Datab$}\}\\
      \qquad \qquad\mbox{}\cup
        \{((\statet,\tpstrdL\datume \tphd{\tpblank}),\stateu)
            \mid \text{$\stateu\in
              \IS{[\datumd{}/\datume{}]L,\statet,\tpstrdL\datume
                \tphd{\tpblank}}$ for some $\datumd\in\Datab$}\}
  \end{gather*}
  The relation $\brelsym$ is illustrated in Fig.~\ref{fig:bbisim-exec-tcpt}.
  We leave it to the reader to verify that $\brelsym$ is a
  divergence-preserving branching
  bisimulation. 
\end{proof}

By Corollary~\ref{cor:finitetape} and Lemma~\ref{lem:congruence}, we
can replace the infinite recursive specification of the tape in
Theorem~\ref{thm:TMspecTinf} by the finite recursive
specification we found in Section~\ref{sec:finitetapespec}. We thus get
the following corollary to Theorem~\ref{thm:TMspecTinf}.

\begin{corollary} \label{cor:execfinitespec}
  For every executable transition $\LTST$ there exists a finite
  recursive specification $\RSpec$ and an $\RSpec$-interpretable
  process expression $\pexp$ such that $\LTST\bbisimed\lts[\RSpec]{\pexp}$.
\end{corollary}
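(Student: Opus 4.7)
The plan is to combine Theorem~\ref{thm:TMspecTinf} with Corollary~\ref{cor:finitetape} via the congruence property of Lemma~\ref{lem:congruence}. Since $\LTST$ is executable, by definition there is an \RTM{} $\TM{}$ with $\LTST = \lts{\TM{}}$. Theorem~\ref{thm:TMspecTinf} supplies
$\lts{\TM{}}\bbisimed \lts[\RSpecFC\cup\RSpecTinf]{\aep{\chanrwm}{\name[C]_{\init[\TM{}],\tpblank}\parc\name[T]_{\tphd{\tpblank}}}}$,
and Corollary~\ref{cor:finitetape} furnishes a finite-tape expression
$\pexp[T] := \aep{\chanio}{\name[H]_{\tpblank}\parc\aep{\{\chanl\}}{\name[Q]^{\chani\chanl}_{\chano}\parc\pref{\send{\chano}{\qend}}{\name[Q]^{\chanl\chano}_{\chani}}}}$
over the finite specification $\RSpecT\cup\RSpecQ$ which is $\bbisimed$-related to $\name[T]_{\tphd{\tpblank}}$ over $\RSpecTinf$. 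It remains to paste these together into a single finite specification.

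First I would note that $\RSpecFC$ is itself finite: by Definition~\ref{def:RTM} both $\StatesS[\TM{}]$ and $\Datab$ are finite, so the schema $\name[C]_{\states,\datumd}\defeq\cdots$ yields only $|\StatesS[\TM{}]|\cdot|\Datab|$ defining equations. Set $\RSpec := \RSpecFC\cup\RSpecT\cup\RSpecQ$, which is finite, and $\RSpec_0 := \RSpec\cup\RSpecTinf$. Because the names defined by $\RSpecFC$, $\RSpecTinf$, $\RSpecT$ and $\RSpecQ$ are pairwise disjoint, and because the operational rules of Table~\ref{tbl:sos-tcpt} only consult the defining equation of the name currently being expanded, enlarging a specification by unrelated equations does not alter the transitions emanating from expressions whose reachable names lie in the original subspecification. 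Consequently, the bisimilarities supplied by Theorem~\ref{thm:TMspecTinf} and Corollary~\ref{cor:finitetape} can both be reinterpreted as $\bbisimed$-relations over $\RSpec_0$.

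Taking $\pexp := \aep{\chanrwm}{\name[C]_{\init[\TM{}],\tpblank}\parc\pexp[T]}$, I would then apply Lemma~\ref{lem:congruence} over $\RSpec_0$, using reflexivity of $\bbisimed$ on the finite-control component and Corollary~\ref{cor:finitetape} (reinterpreted over $\RSpec_0$) on the tape component, to obtain
$\lts[\RSpec_0]{\aep{\chanrwm}{\name[C]_{\init[\TM{}],\tpblank}\parc\name[T]_{\tphd{\tpblank}}}}\bbisimed\lts[\RSpec_0]{\pexp}$.
Chaining this with Theorem~\ref{thm:TMspecTinf} by transitivity of $\bbisimed$ gives $\LTST\bbisimed\lts[\RSpec_0]{\pexp}$, and since the states reachable from $\pexp$ mention no names defined only in $\RSpecTinf$, the disjointness observation yields $\lts[\RSpec_0]{\pexp}=\lts[\RSpec]{\pexp}$, completing the argument.

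There is no substantive obstacle: the heavy lifting has already been done in Theorem~\ref{thm:TMspecTinf} and Corollary~\ref{cor:finitetape}. The only point deserving careful articulation is the bookkeeping lemma that $\bbisimed$-relations established under different specifications transfer to a common enlargement whenever the specifications introduce disjoint defining names, and that passing back from $\RSpec_0$ to the finite $\RSpec$ at the end is harmless for the reachable fragment of $\pexp$.
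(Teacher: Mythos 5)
Your proposal is correct and follows essentially the same route as the paper: the paper also obtains this corollary by combining Theorem~\ref{thm:TMspecTinf} with Corollary~\ref{cor:finitetape} via the congruence property of Lemma~\ref{lem:congruence}, replacing the infinite tape specification by the finite one. The bookkeeping you spell out about merging specifications with disjoint defining names into a common enlargement is left implicit in the paper, but it is exactly the justification needed and introduces no new idea.
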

}

\fctonly{%
  In~\cite{BLT11full} we present the details of a construction that
  associates with an arbitrary \RTM{} a \TCPt{} recursive
  specification that defines its behaviour up to divergence-preserving
  branching bisimilarity. Thus, we get the following correspondence.}

Note that if $\RSpec$ is a finite recursive specification in our
calculus, and $\pexp$ is an $\RSpec$-interpretable process expression,
then $\lts[\RSpec]{\pexp}$ is a boundedly branching computable
transition system. Hence, up to divergence-preserving branching
bisimilarity, we get a one-to-one correspondence between executability
and finite definability in our process calculus.
\begin{corollary}
  A transition system is executable modulo divergence-preserving
  branching bisimilarity if, and only if, it is finitely definable
  modulo divergence-preserving branching bisimilarity in the process
  calculus with deadlock, skip, action prefix, alternative composition
  and parallel composition with value-passing handshaking
  communication.
\end{corollary}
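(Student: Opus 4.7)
The plan is to prove the biconditional by assembling the two directions from results and observations already in place, without any new heavy machinery.

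For the forward direction, I would invoke the preceding construction (summarised in what is Corollary~\ref{cor:execfinitespec} of the full version) which, given an arbitrary \RTM{} $\TM$, produces a finite recursive specification $\RSpec=\RSpecFC\cup\RSpecT\cup\RSpecQ$ and an $\RSpec$-interpretable process expression $\pexp=\aep{\chanrwm}{\name[C]_{\init[\TM{}],\tpblank}\parc\name[T]_{\tphd{\tpblank}}}$ (with $\name[T]_{\tphd{\tpblank}}$ replaced by its finite queue-based encoding) such that $\lts{\TM{}}\bbisimed\lts[\RSpec]{\pexp}$. Thus, any executable transition system has, up to $\bbisimed$, a finite definition in the calculus.

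For the reverse direction, let $\RSpec$ be a finite recursive specification and $\pexpp$ an $\RSpec$-interpretable process expression. I would establish that $\lts[\RSpec]{\pexpp}$ is a boundedly branching computable transition system, so that Theorem~\ref{thm:BBCTSS} applies and yields an \RTM{} $\BBCTSS$ with $\lts[\RSpec]{\pexpp}\bbisimed\lts{\BBCTSS}$. Computability follows because the structural operational rules in Table~\ref{tbl:sos-tcpt} give an effective procedure that, on any reachable process expression, enumerates the finitely many applicable rule instances, identifies its outgoing transitions, and decides termination. Bounded branching follows by a straightforward induction on the syntax of process expressions: each reachable state is built from the finitely many syntactic subexpressions of the right-hand sides of $\RSpec$, and the operational rules for $\dl$, $\emp$, action prefix, alternative composition, parallel composition and name unfolding each contribute only a bounded number of outgoing transitions, the communication rule in particular being bounded by $|\Chan|\cdot|\Datab|$ times the product of the branching degrees of the components.

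The two directions together give the one-to-one correspondence announced in the statement. The main obstacle, already dispatched by Theorem~\ref{thm:BBCTSS} and the preceding simulation construction, is the forward direction, where the infinite state space of an arbitrary \RTM{} has to be compressed into a finite syntactic description using the tape-as-queue trick of Bergstra and Klop together with a finite-control recursion; here it is only cited. The reverse direction is routine once bounded branching of \TCPt{} transition systems is observed.
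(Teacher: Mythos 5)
Your overall decomposition is exactly the paper's: the forward direction is delegated to Corollary~\ref{cor:execfinitespec} (finite control in parallel with the queue-based finite tape specification), and the reverse direction rests on the observation that $\lts[\RSpec]{\pexpp}$ is a boundedly branching computable transition system, after which Theorem~\ref{thm:BBCTSS} applies. The computability part of your argument is unproblematic: the operational rules do yield an effective procedure for listing the outgoing transitions of a process expression and deciding termination.

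The gap is in your justification of bounded branching. Reachable states of $\lts[\RSpec]{\pexpp}$ are \emph{not} confined to the finitely many syntactic subexpressions of the right-hand sides of $\RSpec$: unfolding a name underneath a parallel composition grows the expression without bound (the queue specification $\RSpecQ$ used in the forward direction does precisely this), so the nesting depth of parallel compositions in reachable states is unbounded. Since your inductive bound for a parallel composition is expressed in terms of the branching degrees of its components, the induction gives a bound per expression but no uniform bound $\bdegbound$ over all reachable states. Indeed, the raw transition system need not be boundedly branching at all: for $\name[X]\defeq\pref{\acta}{\aep{\emptyset}{\name[X]\parc\name[X]}}$ the state reached after $n$ transitions contains $n+1$ parallel copies of $\name[X]$, each contributing an $\acta$-transition to a syntactically distinct target. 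What the corollary actually needs is only that $\lts[\RSpec]{\pexpp}$ is divergence-preserving branching bisimilar to some boundedly branching computable transition system; this does hold for the calculus without sequential composition (and fails once sequential composition is added, as the paper remarks), but it requires an argument on the quotient modulo bisimilarity — identifying reachable states with the same multiset of active components — rather than a syntactic induction on individual expressions. The paper itself only asserts the claim, so your proposal is faithful to its structure, but the specific argument you supply for this step would not survive scrutiny as written.
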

\todo{Define `finitely definable up to divergence-preserving branching bisimilarity'.}

For the aforementioned corollary it is important that our calculus
does not include sequential composition. If sequential composition is
added to our calculus, then there exist recursive specifications with
an associated transition system that is unboundedly branching (see,
e.g.,~\cite{BCLT10fsen}).





\section{Persistent Turing Machines} \label{sec:PTM}

\newcommand{\eoi}{\tpsep}
\newcommand{\divout}{\ensuremath{\mathalpha{\mu}}}
\newcommand{\divstate}{\ensuremath{\infty}}
\newcommand{\ITS}[1][I]{\ensuremath{\mathalpha{#1}}}
 \newcommand{\ITSI}[1][]{\ensuremath{\ITS[I_{#1}]}}
\newcommand{\its}[2][]{\ensuremath{\mathcal{I}_{#1}(#2)}}

In~\cite{GSAS04}, the following notion of persistent Turing machine is
put forward.

\begin{definition}
  A \emph{persistent Turing machine} (PTM) $\TM$ is a nondeterministic
  Turing machine with three semi-infinite tapes: a read-only
  \emph{input} tape, a read/write \emph{work} tape and a write-only
  \emph{output} tape.
\end{definition}

The principal semantic notion associated with PTMs in \cite{GSAS04} is
the notion of \emph{macrostep}. Let $\Datab$ be the alphabet of
$\TM{}$.  Then $w\step{w_i/w_o}w'$ denotes that $\TM{}$, when started
in its initial state with its heads at the beginning of its input, work, and
output tapes, containing $w_i$, $w$, and $\emptystr$, respectively,
has a halting computation that produces $w_i$, $w'$, and $w_o$ as the
respective contents of its input, work and output tapes.  Furthermore,
$w\step{w_i/\mu}\divstate$ denotes that $\TM{}$, when started in its
initial state with its heads at the beginning of its input, work and
output tapes, containing $w_i$, $w$, and $\emptystr$, respectively,
has a diverging computation.

The macrostep semantics is used to define for every PTM a
\emph{persistent stream language} with an associated notion of
equivalence, and an \emph{interactive transition system} with
associated notions of isomorphism and bisimilarity. We proceed to
recall the definition of the latter.
\begin{definition}
  Let $\Data$ be a finite set of \emph{data symbols} (not containing
  the symbols $\eoi$ or $\divout$).
  An \emph{interactive transition system} (ITS) over $\Data$ is a
  quadruple $(\States,\trel{},\init)$ consisting of a set of states
  $\States$ containing a special state $\divstate$, a distinguished
  initial state $\init$, and a recursively enumerable 
  $(\Dataseqs\times (\Dataseqs\cup\{\divout\})$-labelled transition
  relation on states. It is assumed that all states in $\States$ are
  reachable, and for all $\states\in\States$ and $w_i\in\Dataseqs$,
  whenever $\states\step{w_i/w_o}\divstate$, then $w_o=\divout{}$, and
  whenever $\divstate\step{w_i/w_o}\states$, then $\states=\divstate$
  and $w_o=\divout$.
\end{definition}

The interactive transition system associated with a PTM $\TM$ is
defined as follows:
\begin{enumerate}
\item its set of states consists of all
  $w\in\Dataseqs\cup\{\divstate\}$ \emph{reachable}  from $\emptystr$
  in $\TM$ by macrosteps, i.e., all $w\in\Dataseqs\cup\{\divstate\}$
  such that, some $k\geq 0$, $w_{i,0},\dots,w_{i,k}\in\Dataseqs$,
  $w_{1},\dots,w_{k}\in\Dataseqs\cup\{\divstate\}$, and
  $w_{o,1},\dots,w_{o,k}\in\Dataseqs\cup\{\divout\}$ such that
  $w_{0}=w$, $w_{j}\step{w_{i,j+1}/w_{o,j+1}}w_{j+1}$ and $w_{k}=w$. 
\item its initial state is $\emptystr$; and
\item its $(\Dataseqs\times(\Dataseqs\cup\{\divout\})$-labelled
  transition relation is defined for all reachable 
    $w,w'\in\Dataseqs\cup\{\divstate\}$, and for all $w_i\in\Dataseqs$
    and $w_o\in\Dataseqs\cup\{\divout\}$ by
  $w \step{w_i/w_o} w'$ if this is a macrostep associated with \TM{}.
\end{enumerate}

It is established in \cite{GSAS04} that the above interpretation
establishes a one-to-one correspondence between PTMs up to macrostep
equivalence (PTMs $\TM{}_1$ and $\TM{}_2$ are \emph{macrostep
  equivalent} if there associated ITSs are isomorphic) and ITSs up to
isomorphism.

To show that PTMs can be simulated by \RTM{}s, we associate with every
ITS an effective transition system, which can then, according to
Theorem~\ref{thm:BBCTSS}, be simulated up to branching bisimilarity by an
\RTM{}. Let $\ITSI$ be an ITS; the transition system $\lts{\ITSI}$
associated with an ITS is defined as follows:
\begin{enumerate}
\item for every state $\states$ of $\ITSI$ it includes an infinite
  collection of states $\states_{i,w}$, one for every $w\in\Dataseqs$,
  and transitions
    $\states_{i,w}\step{\recv{\chani}{\datumd}}\states_{i,w\datumd}$
  modelling that the symbol $\datumd$ is received along the input
  channel $\chani$ and appended to the string $w$;
\item for every state $\states$ of $\ITSI$ it includes an
  infinite collection of states $\states_{o,w}$, one for every
  $w\in\Dataseqs\cup\{\divout\}$, and transitions
    $\states_{o,\datumd{}w}\step{\send{\chano}{\datumd}}\states_{o,w}$
  and
    $\states_{o,\divout}\step{\silent}\states_{o,\divout}$;
\item whenever $\ITSI$ has a transition
  $\states\step{w_i/w_o}\states'$, then $\lts{\ITSI}$ has a transition
    $\states_{i,w_i}\step{\recv{\chani}{\tpsep}}\states_{o,w_o}'$
  (the symbol $\tpsep$ is used to signal the end of the input and its
  receipt starts the procedure to output of $w_o$);
\item for every state $\states$ of $\ITSI$, the associated transition
  system $\lts{\ITSI}$ has a transition
  $\states_{o,\emptystr}\step{\send{\chano}{\tpsep}}\states_{i,\emptystr}$
  (the symbol $\tpsep$ is used to signal the end of the output and its
  sending returns the transition system in input mode).
\end{enumerate}

To illustrate that there is no loss of information in the encoding of
ITSs into transition systems, we define, on a transition system
$\LTST$ of which the set of actions is 
     $\Act=
       \{\recv{\chani}{\datumd},\send{\chano}{\datumd}\mid
            \datumd\in\Data\cup\{\eoi\}\}$,
a reverse procedure. First, we associate macrosteps with sequences of
transitions in \LTST{} as follows:
\begin{enumerate}
\item for states $\states$ and $\states'$ of $\LTST$ and
  $w_i,w_o\in\Dataseqs$ such that
    $w_i=\datumd[0],\datumd[1],\dots,\datumd[k]$
  and
    $w_o=\datume[0],\datume[1],\dots,\datume[\ell]$, let us write
    $\states\step{w_i/w_o}\states'$
  if there exist states $\states_{i,0},\dots,\states_{i,k}$ and
  $\states_{o,0},\dots,\states_{o,\ell+1}$ such that
  \begin{equation*}
    \states\step{\recv{\chani}{\datumd[0]}}
      \states_{i,0}
        \step{\recv{\chani}{\datumd[1]}}\cdots\step{\recv{\chani}{\datumd[k]}}
      \states_{i,k}
        \step{\recv{\chani}{\tpsep}}
      \states_{o,0}
        \step{\send{\chano}{\datume[0]}}
      \states_{o,1}
        \step{\send{\chano}{\datume[1]}}\cdots\step{\send{\chano}{\datume[\ell]}}
      \states_{o,\ell+1}
        \step{\send{\chano}{\tpsep}}
      \states'
  \enskip;
  \end{equation*}
\item for every state $\states$ of $\LTST$ and $w_i\in\Dataseqs$ such
  that
    $w_i=\datumd[0],\datumd[1],\dots,\datumd[k]$,
  let us write $\states\step{w_i/\divout}\divstate$ if there exist
  states $\states_{i,0},\dots,\states_{i,k}$ and an infinite sequence
  of states $(\states_{o,\ell})_{\ell\in\N}$ such that
  \begin{gather*}
    \states\step{\recv{\chani}{\datumd[0]}}
      \states_{i,0}
        \step{\recv{\chani}{\datumd[1]}}\cdots\step{\recv{\chani}{\datumd[k]}}
      \states_{i,k}
        \step{\recv{\chani}{\tpsep}}
      \states_{o,0}
\intertext{and}
    \states_{o,i}\step{\silent}\states_{o,i+1}\ \text{for all
      $i\in\N$}
  \enskip.
  \end{gather*}
\end{enumerate}
  The ITS $\its{\LTST}$ associated with $\LTST$ has as set of states
  the states of $\LTST$ reachable by macrosteps from the initial state
  of $\LTST$  (possibly including $\divstate$), as transitions all
  macrosteps between reachable states, and as initial state the
  initial state of $\LTST$.

\begin{theorem}
  Let $\ITSI$ be an ITS; then $\its{\lts{\ITSI}}$ is isomorphic to $\ITSI$.
\end{theorem}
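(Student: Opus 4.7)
The plan is to exhibit an explicit bijection $\phi \colon \StatesS[\ITSI] \to \StatesS[\its{\lts{\ITSI}}]$ that preserves the initial state and matches macrosteps one-to-one. The natural choice is $\phi(\states) = \states_{i,\emptystr}$ for each $\states\in\StatesS[\ITSI]\setminus\{\divstate\}$ and $\phi(\divstate) = \divstate$. Preservation of the initial state is immediate, since by construction the initial state of $\lts{\ITSI}$ is $(\init[\ITSI])_{i,\emptystr}$ and this is therefore also the initial state of $\its{\lts{\ITSI}}$.

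The main work is showing that the macrosteps match. For the forward direction, given a transition $\states\step{w_i/w_o}\states'$ in $\ITSI$ with $w_i=\datumd[0]\cdots\datumd[k]$ and $w_o=\datume[0]\cdots\datume[\ell]$, I would string together the transitions dictated by the construction of $\lts{\ITSI}$: first the chain $\states_{i,\emptystr}\step{\recv{\chani}{\datumd[0]}}\states_{i,\datumd[0]}\cdots\step{\recv{\chani}{\datumd[k]}}\states_{i,w_i}$, then $\states_{i,w_i}\step{\recv{\chani}{\tpsep}}\states'_{o,w_o}$, then the output chain down to $\states'_{o,\emptystr}$, and finally $\states'_{o,\emptystr}\step{\send{\chano}{\tpsep}}\states'_{i,\emptystr}$; the divergence case $\states\step{w_i/\divout}\divstate$ is handled by the self-loop $\divstate_{o,\divout}\step{\silent}\divstate_{o,\divout}$ guaranteed by the construction.

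For the converse I would argue by examining the only possible shape of a macrostep sequence starting from a state of the form $\states_{i,\emptystr}$. The transitions outgoing from input-mode states $\states_{i,w}$ are exactly the symbol-appending inputs $\recv{\chani}{\datumd}$ plus, when the input is terminated with $\recv{\chani}{\tpsep}$, the unique transitions supplied by macrosteps of $\ITSI$. Similarly, the only outgoing transitions from output-mode states $\states'_{o,w}$ are the symbol-emitting $\send{\chano}{\datumd}$ transitions or the $\silent$ self-loop at $\states'_{o,\divout}$. Hence any macrostep sequence from $\states_{i,\emptystr}$ must consume some $w_i\in\Dataseqs$, take a unique ``commit'' transition $\states_{i,w_i}\step{\recv{\chani}{\tpsep}}\states'_{o,w_o}$ induced by some $\states\step{w_i/w_o}\states'$ in $\ITSI$, emit $w_o$ (or diverge at $\states'_{o,\divout}$, which yields the macrostep to $\divstate$), and finally fire $\send{\chano}{\tpsep}$ to land in $\states'_{i,\emptystr}=\phi(\states')$. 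This establishes the bijective correspondence of transitions.

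The main obstacle is verifying that $\phi$ is really a bijection onto the state set of $\its{\lts{\ITSI}}$, i.e., that the macrostep-reachable states in $\lts{\ITSI}$ are exactly $\{\states_{i,\emptystr}\mid\states\in\StatesS[\ITSI]\}\cup\{\divstate\}$ and nothing more. This follows by induction on the length of a macrostep path, using the analysis above to see that every macrostep lands in an $\states_{i,\emptystr}$-state or in $\divstate$, combined with the assumption in the definition of ITS that every state of $\ITSI$ is reachable from $\init[\ITSI]$, which guarantees surjectivity. Once the bijection is established, preservation of transitions in both directions yields the desired isomorphism.
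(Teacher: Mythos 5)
The paper states this theorem without proof, so there is no authorial argument to compare yours against; on its own terms your proposal is correct and is surely the intended argument. The bijection $\states\mapsto\states_{i,\emptystr}$ (with $\divstate\mapsto\divstate$) is the evident candidate, your shape analysis of macrostep sequences in $\lts{\ITSI}$ correctly yields the one-to-one correspondence between macrosteps and the transitions of $\ITSI$, and you rightly isolate the two points that actually need checking: that no state of $\lts{\ITSI}$ other than the states $\states_{i,\emptystr}$ (together with the abstract divergence point $\divstate$, reached via the $\silent$-loop on $\divstate_{o,\divout}$, and never via any $\divstate_{i,w}$ or $\divstate_{o,w}$ with $w\neq\divout$, by the ITS axiom on transitions into $\divstate$) is macrostep-reachable, and that surjectivity rests on the assumption that all states of $\ITSI$ are reachable. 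The only caveat is cosmetic: the paper omits to declare the initial state of $\lts{\ITSI}$, so your reading that it is $(\init)_{i,\emptystr}$ is a necessary, and obviously intended, completion of the definition.
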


\begin{remark}
  The purpose of our encoding $\its{\_}$ on transition systems is only
  to illustrate that the encoding $\lts{\_}$ on ITSs is lossless; our
  goal has not been to define the most general encoding. Indeed, the
  above procedure could be generalised by allowing unobservable
  activity modelled as transitions labelled $\silent$ in the
  transition system.
\end{remark}

Via the interpretation of PTMs as ITSs, we can associate with every
PTM $\TM{}$ a transition system $\lts{\TM{}}$. Clearly, the transition
system $\lts{\TM{}}$ associated with $\TM{}$ is effective, and hence,
as a corollary to our main result (Theorem~\ref{thm:BBCTSS}) in
Section~\ref{sec:expr}, there exists an \RTM{} that simulates it up to 
branching bisimilarity. 
\begin{corollary}
  For every PTM \TM{} there exists an \RTM{} $\TM'$ such that
    $\lts{\TM{}}\bbisim\lts{\TM'}$.
\end{corollary}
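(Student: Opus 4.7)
The natural strategy is to show that $\lts{\TM{}}$ is branching bisimilar to an effective transition system and then apply Corollary~\ref{cor:BBCTSS-eff}. Unfolding the construction $\TM{}\rightsquigarrow\its{\TM{}}\rightsquigarrow\lts{\its{\TM{}}}$, the transitions of $\lts{\TM{}}$ fall into four classes: input transitions $\states_{i,w}\step{\recv{\chani}{\datumd}}\states_{i,w\datumd}$; output transitions $\states_{o,\datumd w}\step{\send{\chano}{\datumd}}\states_{o,w}$ together with $\states_{o,\emptystr}\step{\send{\chano}{\tpsep}}\states_{i,\emptystr}$; end-of-input transitions $\states_{i,w_i}\step{\recv{\chani}{\tpsep}}\states'_{o,w_o}$ arising from macrosteps of $\TM{}$; and the silent self-loop $\states_{o,\divout}\step{\silent}\states_{o,\divout}$. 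The first, second and fourth classes are trivially recursively enumerable, and the restriction of the third class to halting macrosteps is recursively enumerable via dovetailed simulation of the PTM.

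The main obstacle is that the divergent macrosteps $\states\step{w_i/\divout}\divstate$ are only co-recursively enumerable, so $\lts{\TM{}}$ need not be literally effective as it stands. To circumvent this, I would introduce an auxiliary computable transition system $\LTST[1]$ that refines $\lts{\TM{}}$ by simulating each macrostep step by step: after performing the $\recv{\chani}{\tpsep}$ action, $\LTST[1]$ executes $\silent$-transitions that mirror the PTM's individual moves, producing the output symbols only once the PTM halts and yielding an infinite $\silent$-sequence whenever the PTM diverges. A straightforward relation pairing each $\lts{\TM{}}$-state with the corresponding $\LTST[1]$-states (collapsing intermediate PTM-computation configurations) witnesses $\LTST[1]\bbisim\lts{\TM{}}$: halting macrosteps match the single $\recv{\chani}{\tpsep}$-transition on the $\lts{\TM{}}$-side followed by finite silent bookkeeping, and divergent macrosteps are matched by the silent self-loop on $\divstate$.

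Finally, applying Corollary~\ref{cor:BBCTSS-eff} to the effective $\LTST[1]$ yields an \RTM{} $\TM'$ with $\LTST[1]\bbisim\lts{\TM'}$, whence $\lts{\TM{}}\bbisim\lts{\TM'}$ by transitivity of $\bbisim$. Divergence preservation is unavoidably sacrificed in the step-by-step reformulation, which is consistent with the statement asking only for ordinary branching bisimilarity; a divergence-preserving variant would, in view of Example~\ref{ex:divergence}, demand that the PTM's halting behaviour be decidable, which of course fails in general.
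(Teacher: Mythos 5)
Your overall skeleton coincides with the paper's: the paper's entire argument for this corollary is the observation that $\lts{\TM{}}$ is effective together with an appeal to Corollary~\ref{cor:BBCTSS-eff}. (The paper takes effectiveness for granted, leaning on the stipulation that an ITS has a recursively enumerable transition relation; your worry that the divergent macrosteps are only co-recursively enumerable is a legitimate point that the paper does not address explicitly.) However, your repair has a genuine gap. The step-by-step simulation in $\LTST[1]$ postpones the resolution of the PTM's nondeterminism: in $\lts{\TM{}}$ the choice among the possible outcomes of a macrostep is made at the $\recv{\chani}{\tpsep}$-transition, whereas in $\LTST[1]$ a single $\recv{\chani}{\tpsep}$-transition leads to one simulation state from which the different outcomes are only reached by subsequent $\silent$-steps. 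Branching bisimilarity does not tolerate this. Concretely, take a PTM that in one macrostep nondeterministically outputs either $1$ or $2$ and halts. Then $\lts{\TM{}}$ has two $\recv{\chani}{\tpsep}$-transitions, to non-bisimilar targets $A_1$ and $A_2$, while $\LTST[1]$ has one, to a state $u_0$; matching the two transitions of $\lts{\TM{}}$ forces $u_0$ to be related to both $A_1$ and $A_2$. Since $A_1$ has no $\silent$-transitions, every state on the $\silent$-path from $u_0$ that simulates the output-$2$ computation remains related to $A_1$, and the eventual $\send{\chano}{2}$-transition cannot be matched by $A_1$. So the ``straightforward relation collapsing intermediate configurations'' is not a branching bisimulation, and $\LTST[1]$ fails to be branching bisimilar to $\lts{\TM{}}$ already for this entirely computable PTM.

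The machinery for realising a merely recursively enumerable set of outgoing transitions inside a computable system without disturbing the branching structure is precisely Phillips' construction (Proposition~\ref{prop:Phillips}): a divergence at the \emph{source} state along which every transition enumerated so far stays enabled as a separate alternative. That is already packaged into Corollary~\ref{cor:BBCTSS-eff}, so for the (recursively enumerable) halting macrosteps you should keep the direct $\recv{\chani}{\tpsep}$-transitions rather than re-derive the trick by hand. Your idea can be salvaged for the divergent macrosteps alone, because their target $\divstate_{o,\divout}$ is a pure divergence and hence branching bisimilar to a deadlocked non-final state: add, \emph{unconditionally}, one extra $\recv{\chani}{\tpsep}$-transition to a state that deterministically dovetails through the computation tree, diverges silently iff that tree is infinite, and otherwise silently defers to one of the halting outcomes already present; the extra transition is then either the required divergence or a harmless duplicate, and the resulting system is effective. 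With that repair, followed by Corollary~\ref{cor:BBCTSS-eff} and transitivity of $\bbisim$, your argument goes through.
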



\fullonly{%
\section{Concluding remarks} \label{sec:conclusion}

Our reactive Turing machines extend conventional Turing machines with
a notion of interaction. Interactive computation has been studied
extensively in the past two decades (see, e.g.,
\cite{BGRR07,GSAS04,LW01}). The goal in these works is mainly to
investigate to what extent interaction may have a beneficial effect on
the power of sequential computation. These models essentially adopt a
language- or stream-based semantics; in particular, they abstract from
moments of choice in (internal) computations. Furthermore, interaction
is added through special input-output facilities of the Turing
machines, rather than as an orthogonal notion. We have discussed the
notion of \emph{persistent Turing machine} from \cite{GSAS04} in
Section~\ref{sec:PTM}; below we briefly discuss the notion of
\emph{interactive Turing machine} from \cite{LW01}.

\paragraph{Interactive Turing machines}

Van Leeuwen and Wiedermann proposed \emph{interactive Turing machines}
(ITMs) in \cite{LW01} (the formal details are worked out by Verbaan in
\cite{Ver05}). An ITM is a conventional Turing machine endowed with an
input port and an output port. In every step the ITM may input a
symbol from some finite alphabet on its input port and outputs a
symbol on its output port. ITMs are not designed to halt; they compute
translations of infinite input streams to infinite output streams.

Already in \cite{LW01}, but more prominently in subsequent work (see,
e.g., \cite{WL08}), van Leeuwen and Wiedermann consider a further
extension of the Turing machine paradigm, adding a notion of advice
\cite{KL82}. An \emph{interactive Turing machines
  with advice} is an ITM that can, when needed, access some advice
function that allows for inserting external information into the
computation. It is established that this extension allows the
modelling of non-uniform evolution. It is claimed by the authors that
non-uniform evolution is essential for modelling the Internet, and
that the resulting computational paradigm is more powerful than that
of conventional Turing machines.

Our \RTM{}s are not capable of modelling non-uniform evolution. We
leave it as future work to consider an extension of \RTM{}s with
advice. In particular, it would be interesting to consider and
extension with behavioural advice, rather than functional advice,
modelling advice as an extra parallel component representing the
non-uniform behaviour of the environment with which the system
interacts.


\paragraph{Expressiveness of process calculi}

In \cite{BBK87}, Baeten, Bergstra and Klop prove that computable
process graphs are finitely definable in \ACPt{} up to weak
bisimilarity; their proof involves a finite specification of a
(conventional) Turing machine. Their result was extended by Phillips
in \cite{Phi93}, who proved that all recursively enumerable process
graphs are finitely definable up to weak bisimilarity. We have further
extended these results by adopting a more general notion of
\emph{final state} and more refined notions of behavioural
equivalence.

\todo{Explain that our more general notion of final state actually
  complicates the specification of the tape. In \cite{BBK87}, Baeten, Bergstra and Klop present a finite
specification of a Turing machine is given in \ACPt{} to simulate 
computable transition systems up to bisimilarity; the conventional Turing
machine is simulated using finite control in parallel with two stacks. 
Their approach to model a tape as two stacks cannot be reused in our
setting, which allows for states that can be terminating and have outgoing
transitions at the same time.
Their specification of the stack does not allow for intermediate
termination, and it is not clear how to adapt it so that it does.
Instead, we model the tape using a (first-in first-out) queue, which does
allow for intermediate termination.}

\todo{Also mention work on parallel-or, Kahn networks, etc.}

}%



\RTM{}s may prove to be a useful tool in establishing the
expressiveness of richer process calculi. For instance, the transition
system associated with a $\pi$-calculus expression is effective, so it
can be simulated by an \RTM{}, at least up to branching
bisimilarity. 
\todo{Say something about $\lambda$-calculus: The $\pi$-calculus can to some extent be seen as the interactive version of
the $\lambda$-calculus.}
We conjecture that the converse ---every executable
transition system can be specified by a $\pi$-calculus expression--- is
also true, but leave it for future work to work this out in detail.

Petri showed already in his thesis \cite{Pet62} that concurrency and
interaction may serve to bridge the gap between the theoretically
convenient Turing machine model of a sequential machine with unbounded
memory, and the practically more realistic notion of extendable
architecture of components with bounded memory. The specification we
present in the proof of Corollary~\ref{cor:execfinitespec} is another
illustration of this idea: the unbounded tape is modelled as an
unbounded parallel composition. It would be interesting to further
study the inherent tradeoff between unbounded parallel composition and
unbounded memory in the context of \RTM{}s, considering unbounded
parallel compositions of \RTM{}s with bounded memory.

\todo{Make a remark on what this contributes to the field of process
  algebra.  For example the $\Pi$-calculus (generates effective LTSes via
  SOS (de Simone/Vaandrager?)).  It is supposed to be
  super-Turing-powerful, but there was never a model to measure this.  The
  RTM is the counterpart of TM, it is process versus function/language. 
  The RTM is in the ``right world''.  Van Leeuwen's model cannot be used
  very well, ours suits better.}

\paragraph{Sequential versus parallel interactive computation}

Interestingly, in the conclusions of \cite{GSAS04} it is conjectured
that parallel composition does affect the notion of interactive
computability, in the sense that the parallel interactive computation
is more expressive than sequential interactive computation. To verify
that claim, one would need to define a notion of parallel interactive
computation; it is unclear to us how this should be done in the
setting of PTMs of \cite{GSAS04}. In our setting, however, it is
straightforward to define a notion of parallel composition on \RTM{}s,
and then our result at the end of Sect.~\ref{sec:expr} that the
parallel composition of \RTM{}s can be faithfully simulated by a
single \RTM{} shows that parallelism does not the enhance the
expressiveness of interactive computation as defined by the model of
\RTM{}s.

\paragraph{Acknowledgement}

We thank Herman Geuvers for discussions, and Clemens Grabmayer for
suggesting to us the term \emph{reactive Turing machine}, and the
anonymous referees for their useful comments.

\fullonly{\bibliographystyle{plain}
\bibliography{paper}}%
\fctonly{%

}

\fctsubmissiononly{%
\newpage

\appendix

\section{Divergence-preserving branching bisimilarity}\label{app:bbed}

We proceed to define the behavioural equivalences that we 
employ in this paper to compare transition systems. Let $\trel$ be
an $\Actt$-labelled transition relation on a set $\States$, and let
$\actt\in\Actt$; we write $\states\optstep{\actt}\statet$ if
$\states\step{\actt}\statet$ or $\actt=\silent$ and
$\states=\statet$.  Furthermore, we denote the transitive closure of
$\step{\silent}$ by $\stepsplus{}$, and we denote the
reflexive-transitive closure of $\step{\silent}$ by $\steps{}$.

\begin{definition}\label{def:bbisim}
  Let
    $\LTST[1]=(\StatesS[1],\trel[1],\init[1],\final[1])$
  and
    $\LTST[2]=(\StatesS[2],\trel[2],\init[2],\final[2])$
  be transition systems.
  A \emph{branching bisimulation} from $\LTST[1]$ to $\LTST[2]$ is a
  binary relation $\brelsym\subseteq\StatesS[1]\times\StatesS[2]$ and, for all states $\states[1]$ and
  $\states[2]$, $\states[1]\brel\states[2]$ implies
  \begin{enumerate}
  \item if $\states[1]\step[1]{\actt}\state[s_1']$, then there exist
      $\state[s_2'],\state[s_2'']\in\StatesS[2]$ such that
      $\states[2]\steps[2]{}\state[s_2'']\optstep[2]{\actt}\state[s_2']$, 
      $\states[1]\brel\state[s_2'']$
    and 
      $\state[s_1']\brel\state[s_2']$;
  \item if $\states[2]\step[2]{\actt}\state[s_2']$, then there exist
    $\state[s_1'],\state[s_1'']\in\StatesS[1]$ such that
      $\states[1]\steps[1]{}\state[s_1'']\optstep[1]{\actt}\state[s_1']$, 
      $\state[s_1'']\brel\states[2]$
    and
      $\state[s_1']\brel\state[s_2']$;
  \item\label{def:bbisim:term1}
    if $\term[1]{\states[1]}$, then there exists $\state[s_2']$
    such that
    $\states[2]\steps[2]{}\state[s_2']$,
    $\states[1]\brel\state[s_2']$ and
    $\term[2]{\state[s_2']}$;
    and
  \item\label{def:bbisim:term2}
    if $\term[2]{\states[2]}$, then there exists $\state[s_1']$
    such that
      $\states[1]\steps[1]{}\state[s_1']$,
      $\state[s_1']\brel\states[2]$
    and
      $\term[1]{\state[s_1']}$.
  \end{enumerate}
  The transition systems $\LTST[1]$ and $\LTST[2]$ are
  \emph{branching bisimilar} (notation: $\LTST[1]\bbisim\LTST[2]$)
  if there exists a branching bisimulation from $\LTST[1]$ to
  $\LTST[2]$ such that $\init[1]\brel\init[2]$.

  A branching bisimulation $\brel$ from $\LTST[1]$ to $\LTST[2]$ is
  \emph{divergence-preserving} if, for all states $\states[1]$ and
  $\states[2]$, $\states[1]\brel\states[2]$ implies
  \begin{enumerate}
  \addtocounter{enumi}{4}
  \item if there exists an infinite sequence
    ${(\states[1,i])}_{i\in\N}$ such that $\states[1]=\states[1,0]$,
    $\states[1,i]\step{\silent}\states[1,i+1]$ and
    $\states[1,i]\brel\states[2]$ for all $i\in\N$, then there
    exists a state $\state[s_2']$ such that
    $\states[2]\stepsplus{}\state[s_2']$ and
    $\states[1,i]\brel\state[s_2']$ for some $i\in\N$; and
  \item if there exists an infinite sequence
    ${(\states[2,i])}_{i\in\N}$ such that $\states[2]=\states[2,0]$,
    $\states[2,i]\step{\silent}\states[2,i+1]$ and
    $\states[1]\brel\states[2,i]$ for all $i\in\N$, then there
    exists a state $\state[s_1']$ such that
    $\states[1]\stepsplus{}\state[s_1']$ and
    $\state[s_1']\brel\states[2,i]$ for some $i\in\N$.
  \end{enumerate}
  The transition systems $\LTST[1]$ and $\LTST[2]$ are
  \emph{divergence-preserving branching bisimilar} (notation:
  $\LTST[1]\bbisimed\LTST[2]$) if there exists a divergence-preserving
  branching bisimulation from $\LTST[1]$ to $\LTST[2]$ such
  that $\init[1]\brel\init[2]$.
\end{definition}

The notions of branching bisimilarity and divergence-preserving
branching bisimilarity originate with~\cite{GW96}. The particular
divergence conditions we use to define divergence-preserving
branching bisimulations here are discussed in~\cite{GLT09}, where it
is also proved that divergence-preserving branching bisimilarity is
an equivalence.
 
\begin{definition}
 Let $\LTST$ be a transition system and let $\states$ and $\statet$ be two
 states in $\LTST$.
 A $\silent$-transition $\states\step{\silent}\statet$ is \emph{inert} if
 $\states$ and $\statet$ are related by the maximal
 divergence-preserving branching bisimulation
 on $\LTST$.
\end{definition}

If $\states$ and $\statet$ are distinct states, then an inert
$\silent$-transition $\states\step{\silent}\statet$ can be
\emph{eliminated} from a transition system, e.g., by removing all outgoing
transitions of $\states$, changing every outgoing transition
$\statet\step{\actt}\stateu$ from $\statet$ to an outgoing transition
$\states\step{\actt}\stateu$, and removing the state $\statet$.  This
operation yields a transition system that is divergence-preserving
branching bisimilar to the original transition system.

An unobservable transition of an \RTM{}, i.e., a transition labelled
with $\silent$, may be thought of as an internal computation
step. Divergence-preserving branching bisimilarity allows us to
abstract from internal computations as long as they do not discard
the option to execute a certain behaviour. The following notion is
technically convenient in the remainder of the paper.
\begin{definition}\label{def:comp}
 Given some transition system $\LTST$, an \emph{internal computation from
 state $\states$ to $\states'$} is a sequence of states
   $\states[1], \dots, \states[n]$
 in $\LTST$ such that 
   $\states=\states[1]\step{\silent}\dots\step{\silent}\states[n]=\states'$.
 An internal computation is called \emph{fully deterministic} iff, for
 every state $\states[i]$ ($1\leq i < n$),
 $\states[i]\step{\acta}\states[i]'$ implies $\acta=\silent$ and
 $\states[i]'=\states[i+1]$.
 We write $\states\fdcomp{}\states'$ if there exists a fully
 deterministic computation from $\states$ to $\states'$.
\end{definition}

\begin{lemma}
  Let $\LTST$ be a transition system and let $\states$ and $\statet$
  be two states in $\LTST$. If $\states\fdcomp{}\states'$, then
  $\states$ and $\states'$ are related by the maximal
  divergence-preserving branching bisimulation on $\LTST$.
\end{lemma}

\section{Example~\ref{ex:notexecbbisim} from
  Sect.~\ref{sec:expr}}\label{app:notexecbbisim}

\begin{example}
  Assume that $\Act=\{\acta,\actb,\actc\}$ and consider the
  $\Act$-labelled transition system
     $\LTST[0]=(\StatesS[0],\trel[0],\init[0],\final[0])$
  with $\StatesS[0]$, $\trel[0]$, $\init[0]$ and $\final[0]$ defined by
  \begin{gather*}
    \StatesS[0]=\{\states,\statet,\stateu,\statev,\statew\}
                          \cup\{\states[x]\mid x \in \N\}\enskip,
\\
    \trel[0]=\{(\states,\acta,\statet),(\statet,\acta,\statet),(\statet,\actb,\statev),
                 (\states,\acta,\stateu),(\stateu,\acta,\stateu),(\stateu,\actc,\statew)\}\\
\qquad\qquad\mbox{}\cup
                 (\states,\acta,\states[0])\}
                    \cup
                 \{(\states[x],\acta,\states[x+1])\mid x\in\N\} \\
\qquad\qquad\mbox{}\cup
                 \{(\states[x],\acta,\statet),(\states[x],\acta,\stateu)\mid
                   \text{$\varphi_x$ is a total function}\}
\enskip,\\
    \init[0]=\states\enskip,\ \text{and}\\
    \final[0]=\{\statev,\statew\}\enskip.
  \end{gather*}
  (The transition system is depicted in Fig.~\ref{fig:ltst0}.)

  \begin{figure}
   \centering\small
   \begin{transsys}(75,30)(0,-5)
    \graphset{Nadjust=n,Nw=4,Nh=4,fangle=-180}
    \node[Nmarks=i](s)(15,10){$\states$}
    \node(t)(15,20){$\statet$}
    \node[Nmarks=f](v)(0,20){$\statev$}
    \node(s0)(30,10){$\states[0]$}
    \node(u)(15,0){$\stateu$}
    \node[Nmarks=f](w)(0,0){$\statew$}
    \node(s1)(45,10){$\states[1]$}
    \node(s2)(60,10){$\states[2]$}
    \node[Nframe=n,Nh=8,Nw=8](s_t)(75,10){}
    {\scriptsize
     \edge[ELside=r](s,t){$\acta$}
     \edge*[loopangle=90](t){$\acta$}
     \edge(t,v){$\actb$}
     \edge(s,u){$\acta$}
     \edge*[loopangle=-90](u){$\acta$}
     \edge(u,w){$\actc$}
     \edge(s,s0){$\acta$}
     \edge(s0,s1){$\acta$}
     \edge(s1,s2){$\acta$}
     \graphset{dash={1 0}{}}
     \edge(s2,s_t){$\acta$}
     \drawbpedge[ELpos=30,ELside=r](s0,135,10,t,-30,10){$\acta$}
     \drawbpedge[ELpos=30](s0,-135,10,u,30,10){$\acta$}
     \drawbpedge[ELpos=20,ELside=r](s1,135,10,t,0,10){$\acta$}
     \drawbpedge[ELpos=20](s1,-135,10,u,0,10){$\acta$}
     \drawbpedge[ELpos=10,ELside=r](s2,135,10,t,0,10){$\acta$}
     \drawbpedge[ELpos=10](s2,-135,10,u,0,10){$\acta$}
     \drawbpedge[ELside=r](s_t,120,10,t,0,10){$\acta$}
     \drawbpedge[dash={1 0}{}](s_t,-120,10,u,0,10){$\acta$}
    }
   \end{transsys}
   \caption{The transition system $\LTST[0]$.}\label{fig:ltst0}
  \end{figure}

  To argue that $\LTST[0]$ is not executable up to branching
  bisimilarity, suppose that it is.  Then $\LTS[T_0]$ is branching
  bisimilar to a computable transition system $\LTS[T_0']$. Then, in
  $\LTS[T_0']$, the set of states reachable by a path that contains
  exactly $x$ $\acta$-transitions ($x\in\N$) and from which both a
  $\actb$- and a $\actc$-transition are still reachable, is
  recursively enumerable. It follows that the set of states in
  $\LTS[T_0']$ branching bisimilar to $\states[x]$ ($x\in\N$) is
  recursively enumerable. But then, since the problem of deciding
  whether from some state in $\LTS[T_0']$ there is a path containing
  exactly one $a$-transition and one $b$-transition such that the
  $a$-transition precedes the $b$-transition, is also recursively
  enumerable, it follows that the problem of deciding whether
  $\varphi_x$ is a total function must be recursively enumerable too, quod non.
  We conclude that $\LTST[0]$ is not executable up to branching
  bisimilarity. Incidentally, note that the language associated with
  $\LTST[0]$ is $\{a^n b,a^n c\mid n\geq 1\}$, which \emph{is}
  recursively enumerable (it is even context-free).
\end{example}

\section{Example~\ref{ex:divergence} from Sect.~\ref{sec:expr}} 
\label{app:effectivedivergence}

\begin{example}
  Assume that $\Act=\{\acta,\actb\}$,
  and consider the transition system
    $\LTST[1]=(\StatesS[1],\trel[1],\init[1],\final[1])$
  with $\StatesS[1]$, $\trel[1]$, $\init[1]$ and $\final[1]$ defined by
  \begin{gather*}
    \StatesS[1]=\{\states[1,x],\statet[1,x]\mid x \in \N\}\enskip,\\
    \trel[1]=\{(\states[1,x],\acta,\states[1,x+1])\mid x\in\N\}\cup
             \{(\states[1,x],\actb,\statet[1,x])\mid x \in\N \}\enskip,\\
    \init[1]=\states[1,0]\enskip,\ \text{and}\\
    \final[1]=\{\statet[1,x]\mid\text{$\varphi_x(x)$ converges}\}\enskip.
  \end{gather*}
  (See also Fig.~\ref{fig:ex-lts-eff}).

  \begin{figure}
    \begin{center}
    \begin{transsys}(60,15)(0,5)
      \graphset{fangle=-90}
      \node[Nmarks=i](s1_0)(0,16){$\states[1,0]$}
      \node(s1_1)(15,16){$\states[1,1]$}
      \node(s1_2)(30,16){$\states[1,2]$}
      \node(s1_3)(45,16){$\states[1,3]$}
      \node[Nframe=n](s1_t)(60,16){}
      \node(t_0)(0,6){$\statet[0]$}
      \node(t_1)(15,6){$\statet[1]$}
      \node(t_2)(30,6){$\statet[2]$}
      \node(t_3)(45,6){$\statet[3]$}
      \graphset{Nframe=n,Nw=0,Nh=0,Nadjust=n}
      \node(t_0t)(0,0){}
      \node(t_1t)(15,0){}
      \node(t_2t)(30,0){}
      \node(t_3t)(45,0){}
      \edge(s1_0,s1_1){$\acta$}
      \edge(s1_1,s1_2){$\acta$}
      \edge(s1_2,s1_3){$\acta$}
      \edge[dash={1 0}{.2}](s1_3,s1_t){$\acta$}
      \edge(s1_0,t_0){$\actb$}
      \edge(s1_1,t_1){$\actb$}
      \edge(s1_2,t_2){$\actb$}
      \edge(s1_3,t_3){$\actb$}
      \graphset{dash={.5 0}{}}
      \edge(t_0,t_0t){}
      \edge(t_1,t_1t){}
      \edge(t_2,t_2t){}
      \edge(t_3,t_3t){}
    \end{transsys}
    \end{center}
    \caption{The transition system $\LTST[1]$.}\label{fig:ex-lts-eff}
  \end{figure}

  Now, suppose that $\LTST[2]$ is a transition system such that
    $\LTST[1]\bbisimed\LTST[2]$,
  as witnessed by some divergence-preserving branching bisimulation
  relation $\brel$; we argue that $\LTST[2]$ is not computable by
  deriving a contradiction from the assumption that it is.

  Clearly, since $\LTST[1]$ does not admit infinite
  sequences of $\silent$-transitions, if $\brel$ is
  divergence-preserving, then $\LTST[2]$ does not admit infinite
  sequences of $\silent$-transitions either. It follows that if
     $\states[1]\brel\states[2]$,
  then there exists a state $\state[s_2']$ in $\LTST[2]$ such that
    $\states[2]\steps[2]{}\state[s_2]$,
    $\states[1]\brel\state[s_2']$,
  and
    $\state[s_2']\nstep{\silent}$.
  Moreover, since $\LTST[2]$ is computable and does not admit infinite
  sequences of consecutive $\silent$-transitions, a state $\state[s_2']$
  satisfying the aforementioned properties is produced by the algorithm
  that, given a state of $\LTST[2]$, selects an enabled
  $\silent$-transition and recurses on the target of the transition
  until it reaches a state in which no $\silent$-transitions are enabled.

  But then we also have an algorithm that determines if $\varphi_x(x)$
  converges:
  \begin{enumerate}
  \item it starts from the initial state $\init[2]$ of $\LTST[2]$;
  \item it runs the algorithm to find a state without outgoing
      $\silent$-transitions, and then it repeats the following steps
      $x$ times:
    \begin{enumerate}
    \item execute the $\mathit{inc}$-transition enabled in the reached state;
    \item run the algorithm to find a state without outgoing
      $\silent$-transitions again;
    \end{enumerate}
    since $\init[1]\brel\init[2]$, this yields a state $\states[2,x]$
    in $\LTST[2]$ such that $\states[1,x]\brel\states[2,x]$;
  \item it executes the $\mathit{run}$-transition that must be enabled
    in $\states[2,x]$, followed, again, by the algorithm to find a state
    without outgoing $\silent$-transitions; this yields a state
    $\statet[2,x]$, without any outgoing transitions, such that
    $\statet[1,x]\brel\statet[2,x]$.
  \end{enumerate}

  From $\statet[1,x]\brel\statet[2,x]$ it follows that
  $\statet[2,x]\in\final[2]$ iff $\varphi_x(x)$ converges, so the problem
  of deciding whether $\varphi_x(x)$ converges has been reduced to the
   problem of deciding whether $\statet[2,x]\in\final[2]$. Since it is
  undecidable if $\varphi_x(x)$ converges, it follows that $\final[2]$ is
  not recursive, which contradicts our assumption that $\LTST[2]$ is
  computable.
\end{example}

\section{The operational semantics of \TCPt{}}\label{app:tcptos}
We use Structural Operational Semantics~\cite{Plo04a} to associate a
transition relation with \TCPt{} process expressions:
let $\trel$ be the $\Actt$-labelled transition relation induced on the set
of process expressions by operational rules in Table~\ref{tbl:sos-tcpt}.
Note that the operational rules presuppose a recursive specification
$\RSpec$. 


\begin{definition}\label{def:lts-rspec}
  Let $\RSpec$ be a recursive specification and let $\pexpp$ be a
  process expression.
  We define the labelled transition system
    $\lts[\RSpec]{\pexpp}=
      (\StatesS[\pexpp],\trel[\pexpp],\init[\pexpp],\final[\pexpp])$
  associated with $\pexpp$ and $\RSpec$ as follows:
  \begin{enumerate}
  \item the set of states $\StatesS[\pexpp]$ consists of all process
    expressions reachable from $\pexpp$;
  \item the transition relation $\trel[\pexpp]$ is the restriction
    to $\StatesS[\pexpp]$ of the transition relation $\trel$ defined
    on all process expressions by the operational rules in
    Table~\ref{tbl:sos-tcpt}, i.e.,
      $\trel[\pexpp]=\trel\cap
        (\StatesS[\pexpp]\times\Actt\times\StatesS[\pexpp])$.
  \item the process expression $\pexpp$ is the initial state,
    i.e. $\init[\pexpp]=\pexpp$; and
  \item the set of final states consists of all process expressions
    $\pexpq\in\StatesS[\pexpp]$ such that $\term{\pexpq}$, i.e.,
      $\final[\pexpp]=\final\cap\StatesS[\pexpp]$.
  \end{enumerate}
\end{definition}

\begin{framedtable}[htb]
 \begin{center}
 \begin{osrules}
   \osrule*{}{\term{\emp}} \qquad \qquad \qquad \qquad
     \osrule*{}{\pref{\act}{\pexp} \step{\act} \pexp} \\
   \osrule*{\pexpp \step{\act} \pexpp'}%
           {\pexpp \altc \pexpq \step{\act} \pexpp'} \qquad
     \osrule*{\pexpq \step{\act} \pexpq'}%
             {\pexpp \altc \pexpq \step{\act} \pexpq'} \qquad
     \osrule*{\term{\pexpp}}{\term{(\pexpp \altc \pexpq)}} \qquad
     \osrule*{\term{\pexpq}}{\term{(\pexpp \altc \pexpq)}} \\
   \osrule*{\pexpp \step{\act} \pexpp'}%
           {\pexpp \seqc \pexpq \step{\act} \pexpp' \seqc \pexpq} \qquad
     \osrule*{\term{\pexpp} & \pexpq \step{\act} \pexpq'}%
             {\pexpp \seqc \pexpq \step{\act} \pexpq'} \qquad
     \osrule*{\term{\pexpp} & \term{\pexpq}}%
             {\term{(\pexpp \seqc \pexpq)}} \\
   \osrule*{\pexpp \step{\act} \pexpp'& \neg\exists\chan\in\Chan',\datum\in\Datab.\ \act=\recv{\chan}{\datum}\vee \act= \send{\chan}{\datum}}%
           {\aep{\Chan'}{\pexpp \parc \pexpq} \step{\act} \aep{\Chan'}{\pexpp' \parc \pexpq} \quad \aep{\Chan'}{\pexpq \parc \pexpp} \step{\act} \aep{\Chan'}{\pexpq \parc \pexpp'}} \\
   \osrule*{\pexpp \step{\send{\chan}{\datum}} \pexpp' &
            \pexpq \step{\recv{\chan}{\datum}} \pexpq' & \chan\in\Chan',\ \datum\in\Datab}%
           {\aep{\Chan'}{\pexpp \parc \pexpq} \step{\silent} 
            \aep{\Chan'}{\pexpp' \parc \pexpq'}
              \quad
            \aep{\Chan'}{\pexpq \parc \pexpp} \step{\silent} 
            \aep{\Chan'}{\pexpq' \parc \pexpp'}}\\
   \osrule*{\pexpp \step{\act} \pexpp' &
            (\name \defeq \pexpp) \in \RSpec}%
           {\name \step{\act} \pexpp'} \qquad 
     \osrule*{\term{\pexpp} &
              (\name \defeq \pexpp) \in \RSpec}%
             {\term{\name}}
  \end{osrules}
 \caption{Operational rules for a recursive specification 
   $\RSpec$.}\label{tbl:sos-tcpt}
 \end{center}
\end{framedtable}%
}
\end{document}